\documentclass[11pt]{article}

\usepackage[margin=1in]{geometry}
\usepackage{setspace}
\usepackage[T1]{fontenc}

\usepackage{amsmath}
\usepackage{amssymb}
\usepackage{amsthm}
\usepackage{bm}
\usepackage{dsfont}

\usepackage{newtxtext}
\usepackage{newtxmath}

\usepackage{algorithm}
\usepackage{algpseudocode}
\usepackage{tikz}

\usepackage{natbib}
\bibpunct[, ]{(}{)}{,}{a}{}{,}

\usepackage[hidelinks]{hyperref}

\usepackage{color}
\usepackage[normalem]{ulem}

\newcommand{\calY}{\mathcal{Y}}
\newcommand{\E}{\mathbb{E}}

\newtheorem{theorem}{Theorem}
\newtheorem{proposition}{Proposition}
\newtheorem{lemma}{Lemma}
\newtheorem{corollary}{Corollary}

\theoremstyle{definition}
\newtheorem{assumption}{Assumption}
\newtheorem{example}{Example}

\theoremstyle{remark}
\newtheorem{remark}{Remark}

\let\amsthmproof\proof
\let\endamsthmproof\endproof

\renewcommand{\proof}[1]{\amsthmproof[#1]}
\renewcommand{\endproof}{\endamsthmproof}

\newcommand{\Halmos}{\qedhere}

\usepackage{authblk}

\title{\textbf{Computing Stationary Equilibria in
Measure-Dependent Markov Systems}}

\author[1]{Jing Dong}
\author[2]{Bar Light}
\author[3]{Xin T. Tong}

\affil[1]{Graduate School of Business, Columbia University\\
\texttt{jing.dong@gsb.columbia.edu}}

\affil[2]{Business School and Institute of Operations Research and Analytics,
National University of Singapore\\
\texttt{barlight@nus.edu.sg}}

\affil[3]{Department of Mathematics and Institute of Operations Research and Analytics,
National University of Singapore\\
\texttt{xin.t.tong@nus.edu.sg}}

\date{}

\begin{document}

\maketitle

\begin{abstract}
Many stochastic systems in operations and economics exhibit feedback
between their long-run state distribution and the transition law governing
their dynamics. In this paper, we develop a computational framework for
stationary equilibria in such measure-dependent Markov systems when this
feedback operates through a finite-dimensional aggregate. We show that the
original stationary-equilibrium problem can be reduced to a finite-dimensional
self-consistency equation, separating steady-state analysis of the underlying
Markov system from equilibrium computation. We use properties of the resulting
self-consistency map to guide the choice among fixed-point iteration, relaxed
fixed-point iteration, and minimization of the fixed-point residual. The last
approach requires derivatives of the self-consistency map, which are typically
unavailable in closed form. We therefore develop finite-time infinitesimal
perturbation analysis estimators for these derivatives, with error bounds that
separate Monte Carlo error from finite-time bias. We illustrate the framework
through a strategic $G/G/c$ queue and an opinion-dynamics model, showing how
different structural properties lead naturally to different
equilibrium-computation methods.
\end{abstract}

\noindent\textbf{Keywords:}
Nonlinear Markov Chain, Stationary Equilibrium, Gradient Estimation,
Pathwise Gradient, Strategic Queue

\bigskip

\section{Introduction}\label{sec:Intro}

Many stochastic systems in operations and economics have dynamics (transition laws) that are shaped by feedback from their own long-run behavior. 
In service systems, customers may decide when or whether to arrive based on the distribution of waiting times. In social-learning models, agents update their beliefs based on population opinions. In models of productivity, wealth, or knowledge
capital, individual states evolve through both idiosyncratic shocks and aggregate market conditions. In these settings, the distribution of the state is not only an output of the stochastic dynamics; it is also an input to the transition law. A stationary analysis must therefore satisfy two requirements simultaneously: the distribution must be invariant under
the transition law it induces, and the transition law used to generate the distribution must be the one implied by that same distribution.

This self-consistency requirement is central for both performance analysis and decision making. In queueing systems, stationary distributions are used to evaluate congestion, service reliability, and capacity decisions. When customers respond to delay information, however, the waiting-time distribution is not only an output of the system; it also influences future arrival behavior, which in turn determines the future waiting-time distribution. This type of stationary-equilibrium formulation appears naturally in queueing games, where customer decisions and long-run waiting-time distributions are jointly determined; see, e.g., \cite{armony2009impact,xu2013supermarket,ata2018equilibrium}. Opinion-dynamics and social-learning models have an analogous structure. The distribution of beliefs is typically the object used to study disagreement, polarization, and comparative statics with respect to social influence or information parameters at equilibrium. At the same time, this distribution also enters the dynamics: population beliefs determine the social signals agents observe, and individual belief updates then reshape the population distribution; 
see \cite{dasaratha2023learning,levy2024stationary}.

More broadly, whenever individual behavior and population-level outcomes interact, the long-run object of interest is often a stationary equilibrium in which individual behavior and the induced population distribution are mutually consistent, rather than the stationary distribution of a fixed Markov chain. This perspective is also common in game-theoretic and mean-field settings, where individual behavior depends on a conjectured long-run population state and, in equilibrium, generates the same state that agents conjectured; see, e.g., \cite{adlakha2013mean,acemoglu2015robust,light2022mean}. We study this self-consistency problem directly through measure-dependent Markov systems. This formulation allows us to capture feedback effects between individual actions and aggregate outcomes without committing to a particular strategic or game-theoretic framework. 
Such stationary equilibria are generally characterized only implicitly, and analytical solutions are often unavailable except in highly stylized models. This makes computation an essential part of the analysis rather than merely an implementation detail: 
an unstable or inaccurate computational procedure can produce misleading performance estimates and comparative statics, and ultimately lead to poor operational or economic decisions.

A natural starting point is to imitate the standard steady-state computation for ordinary Markov chains and iterate the nonlinear Markov operator directly. For an ordinary Markov chain, the transition kernel is fixed, so repeatedly applying the same kernel provides a natural route toward its stationary distribution. In a measure-dependent Markov system, by contrast, the transition kernel itself depends on the distribution being evolved. Each update therefore changes not only the distribution but also the transition law governing the next update, creating an endogenous feedback loop between the state distribution and its dynamics. As a result, convergence properties that are familiar from ordinary Markov chains need not carry over to the nonlinear setting. 

Our starting point is a structural reduction. We focus on the common and practically important setting in which the transition kernel depends on the current distribution only through a finite-dimensional aggregator $H(\mu)\in A\subset\mathbb{R}^d$. For example, in a strategic queue, the arrival law may depend on the mean or variability of the waiting-time; in
an opinion-dynamics model, the updating rule may depend on type-specific mean beliefs. For each fixed aggregate value $a\in A$, freezing the feedback yields an ordinary Markov chain with transition kernel $P_a$. When this frozen chain admits a unique stationary distribution, denoted by $\pi_a$, the original stationary-equilibrium problem reduces to the finite-dimensional self-consistency equation
\[
a=H(\pi_a)=:G(a).
\]
Thus, $G(a)$ records the aggregate generated in steady state when the system is operated under the conjectured aggregate $a$, and an equilibrium is precisely a value $a^\star$ for which the generated aggregate coincides with the conjectured one, i.e., $a^\star=G(a^\star)$.
Once such a fixed point $a^\star$ is found, the corresponding stationary equilibrium of the original measure-dependent system is $\pi_{a^\star}$. This reduction separates two distinct tasks: analyzing the long-run behavior of the frozen Markov chain for a given aggregate and solving the resulting finite-dimensional equilibrium problem over the aggregate parameter.

The reduction also makes clear why no single computational method is appropriate for all models. 
The self-consistency map $G$ can have very different structures across applications, so we view the fixed-point computation as a model-dependent algorithm-selection problem. When $G$ is contractive, the classical Picard iteration, i.e., $a_{k+1}=G(a_k)$, is natural and enjoys geometric convergence. When $G$ is order-preserving on an interval, monotone fixed-point iteration converges even in the absence of a metric contraction. When $G$ is not contractive but satisfies some one-sided contraction condition, relaxed fixed-point iteration can stabilize the computation. When the available regularity is only log-Lipschitz, an adaptive relaxation schedule for relaxed fixed-point iteration ensures convergence despite the lack of global Lipschitz continuity. Finally, when $G$ is differentiable but fixed-point iterations are unstable or ineffective, we minimize the residual objective 
\[\ell(a)=\frac{1}{2}\|a-G(a)\|^2\] 
using gradient-based methods (see Section \ref{sec:fixed_point} for details). 

The structural properties of $G$ therefore guide the choice of algorithm. In some
models, these properties can be verified directly from primitives. For example, monotonicity of the transition kernel can imply monotonicity of $G$, and perturbation bounds for the frozen
stationary distributions can imply Lipschitz or log-Lipschitz regularity. These results help connect model primitives to algorithm selection. The more challenging case arises when the fixed-point iteration is unreliable or when the
structural properties needed for its convergence are difficult to verify. In such settings, a powerful alternative is to solve the equilibrium equation through the residual objective $\ell(a)$. This approach recasts equilibrium computation as an optimization problem. It is less dependent on contraction or monotonicity properties
of $G$, and the residual norm \(\|a-G(a)\|\) provides a natural measure of the violation of the equilibrium condition. The tradeoff is that gradient-based optimization requires estimating the Jacobian \(\nabla G(a)\). Since \(G(a)\) is defined implicitly through the stationary distribution of the frozen Markov chain, this Jacobian is typically unavailable in closed form and must be estimated.

Our main computational contribution is to develop and analyze infinitesimal perturbation analysis (IPA) estimators for
\(\nabla G(a)\). Rather than perturbing the parameter and resimulating the system, IPA
differentiates the simulated sample path directly, evolving a sensitivity process alongside the state process. This avoids finite-difference bias and can recover the standard Monte Carlo error rate under suitable regularity conditions. Because the Jacobian must be estimated repeatedly along the outer optimization path, finite-time error control is crucial. We therefore decompose the IPA error into Monte Carlo error and finite-time bias, establish stability conditions for the sensitivity process, and provide both standard mean-squared error bounds under Lyapunov moment control and a truncated estimator that remains valid under weaker fractional stability.

We demonstrate the breadth of the framework through examples with distinct structural features. These examples illustrate not only the modeling flexibility of measure-dependent Markov systems, but also how model structure guides computation. 

The main example we study is a strategic $G/G/c$ queue in which the interarrival-time distribution depends on the stationary waiting-time distribution through an aggregate congestion signal. In this model, the frozen system is a standard Kiefer--Wolfowitz workload recursion, while the equilibrium condition captures the feedback from congestion to arrivals. We establish the existence of an invariant distribution and derive perturbation bounds for the stationary workload distribution under changes in the aggregate parameter. These bounds imply log-Lipschitz regularity of the self-consistency map under mild conditions. In cases where the additional one-sided stability condition can be verified, relaxed fixed-point iteration provides a provably convergent algorithm. In more complex queueing settings, where such structural conditions may be difficult to check, we develop a finite-time IPA bound for estimating \(\nabla G(a)\), thereby supporting residual-gradient equilibrium computation.

We also study an opinion-dynamics model with multiple agent types. Each agent updates her belief using a combination of her current opinion, a type-dependent weighted average of population mean beliefs, and an idiosyncratic shock. In this setting, the aggregate variable is the vector of within-type mean beliefs. Under natural monotonicity assumptions, e.g., higher own beliefs
or higher social signals lead to higher future beliefs, we show that the stationary response map is order-preserving. This structure allows us to apply monotone fixed-point iteration on an order interval and obtain convergence to an equilibrium.



\subsection{Literature Review}


\textbf{Measure-dependent Markov systems.} Our paper is related to the literature on Markov systems whose transition law depends on the current distribution of the process, also called nonlinear Markov chains, including McKean--Vlasov dynamics, mean-field models, and other distribution-dependent stochastic systems; see \citet{kolokoltsov2010nonlinear}. Much of this literature studies existence, uniqueness, and convergence to invariant distributions. For example, \citet{butkovsky2014ergodic} establishes existence, uniqueness, and ergodicity under contraction-type conditions. As discussed in the introduction, such conditions can fail even in simple settings (see examples Section~\ref{sec:problem-setup}). Relatedly, \citet{neumann2023nonlinear} shows that nonlinear Markov chains can exhibit nonstandard long-run behavior, including failure of convergence to an invariant distribution.

A closely related paper is \citet{light2025invariant}, which studies nonlinear Markov chains in which the current distribution affects the transition law through a one-dimensional aggregator. That paper uses monotonicity arguments to establish existence and uniqueness results and develops an iterative computational method based on the aggregator structure. Our work allows multidimensional aggregators and takes a broader algorithmic perspective. We reduce the stationary-equilibrium problem to the finite-dimensional self-consistency equation $a=G(a)$ and study how different structural properties of $G$ lead to different computational methods. 

\textbf{Strategic queueing and congestion-responsive behavior.}
Our strategic queueing application is related to the literature on queues in which customers respond strategically to congestion or delay information, creating feedback between customer behavior and steady-state system performance \citep{hassin2016rational,hassin2003queue}. For example, \citet{mandelbaum2000model} study rational abandonment in invisible queues, where customers' patience decisions and the waiting-time distribution are jointly determined in equilibrium. \citet{armony2009impact} examine customer responses to delay announcements in many-server queues with abandonment, while \citet{xu2013supermarket} analyze a mean-field queueing game in which customers strategically choose how many queues to sample. \citet{guo2013strategic} study strategic customer behavior in a multiserver system in which decisions may depend on long-run waiting-time information, and \citet{ata2018equilibrium} consider endogenous abandonment when customers respond to the virtual waiting-time distribution. 
Most of this literature characterizes behavioral equilibria within particular queueing models. 

Most of this literature characterizes behavioral equilibria within particular queueing models. Closer to our computational focus, \citet{ravner2024stochastic} develop a simulation-based stochastic-approximation method for computing symmetric Nash equilibria in finite-action queueing games. Our approach is complementary: we model feedback via a measure-dependent transition law and characterize stationarity as a self-consistency condition for the stationary response map. This formulation extends naturally to multidimensional congestion aggregates and applies beyond strategic queueing. In our (G/G/c) application, customer behavior induces an aggregate-dependent arrival law. For the associated frozen Kiefer--Wolfowitz recursion, we establish proper ergodicity and stationary perturbation bounds, enabling gradient-based computation of stationary equilibria.

\textbf{Infinitesimal perturbation analysis and steady-state sensitivity.}
Our gradient-estimation analysis is related to the extensive literature on IPA and steady-state sensitivity analysis. One strand studies the validity of steady-state IPA: whether finite-horizon dynamics can be differentiated pathwise, whether differentiation can be interchanged with expectation, and whether the resulting derivatives identify the derivative of the stationary performance measure; see, e.g., \citet{HeidelbergerCaoZazanisSuri1988,Glasserman1991,Glasserman1992,RheeGlynn2023,Heidergott2026}.
Particularly related to our approach, \citet{flynn2017forward} develops a forward-sensitivity framework for contracting stochastic systems. By augmenting the state process with its pathwise sensitivity, Flynn establishes stationary derivative representations under contraction and regularity conditions on the underlying dynamics. We likewise use a state-sensitivity representation, but study the finite-horizon error relative to its stationary counterpart. Moreover, rather than relying on contraction of the underlying stochastic system, we formulate stability directly in terms of the random affine sensitivity recursion and allow the derivative dynamics to be locally expansive, provided that products of random Jacobians contract in a suitable fractional moment.

A second line of work studies the statistical efficiency and convergence rates of steady-state derivative estimators. In particular, \citet{l1992convergence} analyzes how the bias and variance of IPA and related estimators translate into convergence rates under different allocations of simulation effort across horizon length and replications, while \citet{l1994convergence} compares convergence rates of IPA and finite differences for truncated-horizon steady-state estimation. More recently, \citet{wang2026unbiased} develop unbiased estimators of stationary derivatives based on coupling and Poisson-equation representations.
Our work is complementary to this literature. Rather than taking the decay rate of the transient IPA error as an input to a statistical convergence analysis, we derive that decay from stability properties of the underlying state-sensitivity dynamics. We then translate this stability into finite-time error bounds that separate Monte Carlo error from finite-time bias.


\section{Problem Setup}\label{sec:problem-setup}
Let $(S, \mathcal B(S))$ be a Polish space equipped with its Borel $\sigma$-algebra, and let $\mathcal P(S)$ denote the set of probability measures on $S$. Let $A\subseteq \mathbb R^d$ be a measurable set of aggregate values. For $a\in A$, let $P_a$ be a Markov kernel on $S$; i.e., $P_a(x,\cdot)\in \mathcal P(S)$ for each $x\in S$, and the map
$(x,a)\mapsto P_a(x,B)$
is measurable for every $B\in \mathcal B(S)$.

We consider systems in which the transition law depends on the current distribution only through a finite-dimensional aggregate. Specifically, let
$H:\mathcal P(S)\to A$
be a measurable aggregator. Given a current distribution $\mu\in\mathcal P(S)$, the next-period transition kernel is $P_{H(\mu)}$. The corresponding nonlinear Markov operator $\mathcal T:\mathcal P(S)\to\mathcal P(S)$ is
\[
\mathcal T\mu(B)
=
\int_S P_{H(\mu)}(x,B)\,\mu(dx),
\qquad B\in\mathcal B(S).
\]
Equivalently, using the standard notation $\mu P_a(B):=\int_S P_a(x,B)\,\mu(dx)$, we write
$\mathcal T\mu=\mu P_{H(\mu)}$. The aggregator $H$ captures the finite-dimensional summary of the distribution that affects the transition law. A common specification is an expectation-type aggregator. Let $f:S\to\mathbb R^d$ be a measurable state-level summary function. Then one may define
\[
H(\mu)=\int_S f(x)\,\mu(dx),
\]
whenever the integral is well defined. This formulation includes means, moments, type-specific averages, congestion measures, and other distributional statistics used in the applications below. When moment conditions are needed to ensure that $H$ is finite, we restrict attention to the corresponding subset of $\mathcal P(S)$ without changing the notation.

A probability measure $\mu^\star\in\mathcal P(S)$ is called a stationary distribution or stationary equilibrium of the measure-dependent Markov system if it is a fixed point of the nonlinear Markov operator:
\[\mu^\star=\mathcal T\mu^\star,
\mbox{ or equivalently, }
\mu^\star=\mu^\star P_{H(\mu^\star)}.
\]
Thus, at stationarity, the distribution $\mu^\star$ is invariant under the frozen Markov kernel indexed by the aggregate generated by $\mu^\star$, i.e., $H(\mu^\star)$.

A natural approach to computing a stationary equilibrium is to iterate the nonlinear operator,
\[
\mu_{k+1}=\mathcal T\mu_k=\mu_k P_{H(\mu_k)}.
\]
This procedure resembles the standard iteration used for ordinary Markov chains. The key difference is that, here, the transition kernel is updated endogenously at each step. As a result, the sequence $\{\mu_k\}$ may fail to converge, even when stationary equilibria exist. The following example illustrates this point.

\begin{example} \label{eg1}
Let $S=\{1,2\}$ and $H(\mu)=\mu(\{2\})$. 
For $H(\mu)=a\in[0,1]$, the transition kernel takes the form
\[
P_a=
\begin{pmatrix}
a & 1-a\\
a & 1-a
\end{pmatrix},
\]
with the corresponding stationary distribution $\pi_a=(a,1-a)$.
If $a_k=\mu_k(\{2\})$, then the nonlinear iteration 
$\mu_{k+1}=\mu_k P_{H(\mu_k)}$ satisfies
$a_{k+1}=1-a_k$.
Therefore, unless $a_0=1/2$, the sequence alternates between $a_0$ and 
$1-a_0$ and does not converge, even though the nonlinear chain has the 
stationary distribution $(1/2,1/2)$.
\end{example}

Our approach avoids this difficulty by separating the Markov-chain computation from the equilibrium computation. For each fixed aggregate value $a\in A$, consider the ordinary Markov chain with transition kernel $P_a$. When this frozen chain admits a unique stationary distribution, denote it by $\pi_a$, i.e., $\pi_a=\pi_a P_a$.
The nonlinear stationary problem can then be reduced to a finite-dimensional self-consistency equation, i.e., $G(a):=H(\pi_a)$ for $a\in A$.
A value $a^\star\in A$ is self-consistent if
\[
a^\star=G(a^\star).
\]
In that case, $\pi_{a^\star}$ is a stationary equilibrium of the nonlinear Markov chain, since
\[
\pi_{a^\star}P_{H(\pi_{a^\star})}
=
\pi_{a^\star}P_{a^\star}
=
\pi_{a^\star},
\]
where the first equality uses the self-consistency condition $a^\star=H(\pi_{a^\star})$ and the second uses the stationarity of $\pi_{a^\star}$ for the frozen kernel $P_{a^\star}$.
Conversely, if $\mu^\star$ is a stationary distribution of the nonlinear chain and the frozen kernel $P_{H(\mu^\star)}$ has a unique stationary distribution, then
$\mu^\star=\pi_{H(\mu^\star)}$
and $a^\star:=H(\mu^\star)$ satisfies $a^\star=G(a^\star)$.
Therefore, under the uniqueness of the frozen stationary distributions, computing a stationary equilibrium of the nonlinear Markov chain is equivalent to solving the finite-dimensional fixed-point problem.

Examples~\ref{eg1c} and~\ref{eg2} illustrate two related points. 
In Example~\ref{eg1c}, relaxed iteration on the self-consistency map \(G\) converges whereas the direct nonlinear Markov iteration does not converge. In Example~\ref{eg2}, direct iteration can fail even when \(G\) itself is a contraction.

\begin{example}[Example \ref{eg1} continued.] \label{eg1c}
Recall that $\pi_a=(a,1-a)$. Then 
\[
G(a)=H(\pi_a)=\pi_a(\{2\})=1-a.
\]
Consider a relaxed fixed point iteration with a fixed $\delta\in(0,1)$:
\[
a_{k+1}=(1-\delta)a_k + \delta G(a_k)
=\delta+(1-2\delta)a_{k}.
\]
Then,
\[
a_{k+1}-1/2=(1-2\delta) (a_k-1/2),
\]
which further implies that
\[
|a_{k+1}-1/2|=|1-2\delta|^{k+1}|a_0-1/2|.
\]
i.e., $a_k$ converges to $1/2$.
\end{example}

\begin{example} \label{eg2}
Let $S=\{1,2\}$ and $H(\mu)=\mu(\{2\})$. 
For $H(\mu)=a\in[0,1]$, the transition kernel takes the form
\[
P_a=
\begin{pmatrix}
1-r(a)g(a) & r(a)g(a)\\
r(a)(1-g(a)) & 1-r(a)(1-g(a))
\end{pmatrix},
\quad \mbox{ where } g(a)=\frac{3}{4}-\frac{1}{2}a, \quad r(a)=\frac{3}{2}-2\left(a-\frac{1}{2}\right)^2.
\]
The corresponding stationary distribution is $\pi_a=(1-g(a),g(a))$.
If $a_k=\mu_k(\{2\})$, for $a_0=\frac{1}{2}+\frac{1}{2\sqrt{3}}$ the nonlinear Markov chain iteration $\mu_{k+1}=\mu_k P_{H(\mu_k)}$ alternates between $\frac{1}{2}+\frac{1}{2\sqrt{3}}$ and $\frac{1}{2}-\frac{1}{2\sqrt{3}}$, even though the nonlinear chain has the 
stationary distribution $(1/2,1/2)$. More generally, as long as $a_0\neq 1/2$, the iterates will approach the two-cycle: $\frac{1}{2}+\frac{1}{2\sqrt{3}}, \frac{1}{2}-\frac{1}{2\sqrt{3}}, \frac{1}{2}+\frac{1}{2\sqrt{3}}, \frac{1}{2}-\frac{1}{2\sqrt{3}}, \cdots.$

Meanwhile, $G(a)=H(\pi_a)=g(a)$. Then, fixed-point iteration on $G$ yields
\[
a_{k+1}-\frac{1}{2}=-\frac{1}{2}\left(a_k-\frac{1}{2}\right).
\]
Thus, under fixed-point iteration on $G$, $a_k\rightarrow 1/2$ geometrically.
\end{example}

The examples point to a basic difference between the two computations. The reduced fixed-point problem uses the stationary response of the frozen chain: given $a_k$, it computes the invariant distribution \(\pi_{a_k}\) of \(P_{a_k}\) and then updates the aggregate
to $a_{k+1}=H(\pi_{a_k})=G(a_k)$. Direct iteration instead evolves the distribution one step at a time: $\mu_{k+1}=\mu_kP_{H(\mu_k)}$, with each updated distribution determining the next transition kernel. Consequently, even when every frozen chain converges to its invariant distribution, direct iteration feeds transient changes back into the dynamics before stationarity is reached. As the examples show, this feedback can generate oscillations even when the stationary response map \(G\) is stable. Thus, all computational methods developed below are for the reduced fixed-point problem.

\section{Fixed Point Problem} \label{sec:fixed_point}
We impose the following assumption in our algorithmic development.
\begin{assumption} \label{ass:stationarity}
For every $a\in A$, the transition kernel $P_a$ admits a unique stationary distribution $\pi_a$. Moreover, for every initial distribution $\mu$, $\mu P_a^n \Rightarrow \pi_a$ as $n\rightarrow\infty$.
\end{assumption}
Assumption~\ref{ass:stationarity} is a standard ergodicity condition on the frozen Markov chains. It holds under several classical sets of sufficient conditions. For example, if the state space $S$ is compact and the kernel $P_a$ is Feller, the
Krylov--Bogoliubov theorem guarantees the existence of an invariant
probability measure \citep{hairer2021convergence}.
A global Doeblin minorization condition additionally ensures
uniqueness and uniform geometric convergence in total variation
\citep{meyn1993markov}.
Alternatively, if for each fixed $a$, the kernel $P_a$ defines a Markov chain that is aperiodic, $\psi$-irreducible, and positive Harris recurrent, then it admits a unique stationary distribution $\pi_a$, and its transition probabilities converge to $\pi_a$.
in total
variation from every initial state\citep{meyn1993markov}.

Under Assumption \ref{ass:stationarity}, in this section, we review algorithms to solve $a^\star=G(a^\star)$. Throughout, we assume $A\subset \mathbb{R}^d$ is nonempty and closed. 

The map $G: A\rightarrow A$ is typically available only implicitly: evaluating \(G(a)\) requires computing
or approximating the stationary distribution of the frozen chain at parameter \(a\).
Thus, the overall computation has two levels: an \emph{inner} step that evaluates
\(G(a)\) or the Jacobian matrix $\nabla G(a)$, and an \emph{outer} step that solves the fixed-point equation \(G(a)=a\). We consider three outer iterations:
\begin{align}
&\text{\bf Picard iteration:}~
a_{k+1}=G(a_k), \label{eq:picard-main} \\
&\text{\bf Relaxed fixed-point iteration (RFPI):}~
a_{k+1}=(1-\delta_k)a_k+\delta_k G(a_k), \label{eq:relaxed-main} \\
&\text{\bf Gradient descent on the residual:}~
a_{k+1}=a_k-\eta_k \nabla \ell(a_k),
~
\ell(a):=\frac12\|a-G(a)\|^2.
\label{eq:gd-main}
\end{align}
The outer algorithms we consider are classical, which allows us to leverage existing convergence results to give practical guidance on algorithm selection. We provide a high-level summary here and defer the technical details to Appendix \ref{app:fixed_point_main}. Table~\ref{tab:algorithm-recommendation} serves as a concise decision guide, focusing on the recommended method under each set of structural conditions.

\begin{table}[ht]
\centering
\small
\renewcommand{\arraystretch}{1.12}
\begin{tabular}{p{0.5\linewidth} p{0.4\linewidth}}
\hline
Condition on \(G\) & Recommended algorithm\\ 
\hline
Contraction &
Picard iteration \\

Monotone on an invariant order interval &
Picard iteration \\

Nonexpansive &
RFPI \\

Lipschitz + incremental one-sided condition &
RFPI with constant step size\\

Log-Lipschitz + incremental one-sided condition &
RFPI with decreasing step size  \\

Smooth residual objective \(\ell\) with favorable geometry &
Gradient descent on \(\ell\) \\
\hline
\end{tabular}
\caption{Recommended outer algorithms for solving \(G(a)=a\). Detailed assumptions and precise guarantees are provided in Appendix~\ref{app:fixed_point}.}
\label{tab:algorithm-recommendation}
\end{table}

Among the three, gradient descent \eqref{eq:gd-main} has the least  restrictive requirement for its convergence. This is because gradient descent in general can find a stationary point where $\nabla l(a^*)\approx 0$ \citep{nesterov2013introductory}, which is an approximate fixed point if $I-\nabla G(a^*)$ is not singular. Stronger optimization assumptions like Polyak--\L ojasiewicz condition can further guarantee linear convergence (Proposition~\ref{prop:gd_pl}). On the other hand, finding the gradient $\nabla l(a_k)$ for nonlinear Markov Chain is not a trivial computational task. 
Section~\ref{sec:gradient} develops IPA estimators for \(\nabla G(a)\), which is the key input needed for residual-gradient methods. 

In comparison, both Picard iteration \eqref{eq:picard-main} and RFPI \eqref{eq:relaxed-main} can be implemented as long as we can estimate the invariant measure $\pi_a$ for a fixed parameter $a$. This is a classical task that can be achieved by simulating $P_a$ directly. Note that Picard iteration \eqref{eq:picard-main} is RFPI with 
$\delta_k\equiv 1$, so the latter has better flexibility if we tune the step size $\delta_k$ according to $G$'s regularity. In particular, for any $c\in [0,1)$ and $L>0$, RFPI converges as long as $G$ is $L$-Lipschitz and satisfies the incremental one-sided condition
\[
    \langle G(a)-G(b),a-b\rangle
    \le c\|a-b\|^2,
    \quad \forall a,b\in A,
\]
for some $c<1$.
Propositions~\ref{prop:non-expansive},~\ref{prop:Lip}, and ~\ref{prop:Log-Lip}, provide convergence guarantees for RFPI under different regularity conditions on $G$. 
By contrast, linear convergence of Picard iteration requires the stronger contraction condition
(Proposition~\ref{prop:contraction}). 

Finally, Picard iteration may remain applicable even when RFPI or gradient-based methods are not well defined. This situation can arise when the aggregate space $A$ is not convex, so convex combinations such as \((1-\delta_k)a_k+\delta_k G(a_k)\) are not necessarily feasible, or when $A$ is discrete, so derivatives of $G$ are not meaningful. A typical example is an aggregate space that is a finite or countable subset of the integers. In such settings, order structure can provide an alternative route to convergence. If $A$ is equipped with a partial order and $G$ is order-preserving with respect to this order, then monotone Picard iteration can be used (Proposition \ref{prop:monotone}).

As discussed above, Picard or relaxed fixed-point iterations can be justified when the self-consistency map \(G\) satisfies suitable contraction, monotonicity, or incremental one-sided conditions. These conditions hold in many important models, including some of the applications studied later in the paper. However, they should not be taken for granted. The following simple AR(1) example illustrates how such structural properties may fail, and why alternative computational approaches may be needed.

\begin{example}
\label{ex:ar1-unstable-equilibrium}
Consider the scalar AR(1) model
\[
    X_{t+1}=aX_t+b+\xi_{t+1},
    \quad a\in(-1,1),
\]
where \(\{\xi_t\}\) are i.i.d. with \(\mathbb E[\xi_t]=0\), and let
$H(\mu)=\int x\,\mu(dx)$.
For each fixed \(a\in(-1,1)\), the chain admits a unique stationary distribution with mean
\[
    G(a)=H(\pi_a)=\frac{b}{1-a}.
\]
Take \(b=2/9\). The fixed-point equation \(a=G(a)\) becomes
\[
    a^2-a+\frac{2}{9}=0,
\]
whose two solutions are
$a^*_1=\frac13$ and $a^*_2=\frac23$.
Moreover, note that
$G'(a^*_1)=\frac12<1$ and $G'(a^*_2)=2>1$.
The fixed point \(a^*_1=1/3\) is locally stable for Picard iteration, while
\(a^*_2=2/3\) is unstable.

Residual minimization treats the two equilibria differently from Picard iteration. The
residual objective is
\[
    \ell(a)
    =
    \frac12|a-G(a)|^2
    =
    \frac12
    \left(
        a-\frac{2/9}{1-a}
    \right)^2 .
\]
Both \(a^*_1\) and \(a^*_2\) are global minimizers of \(\ell\), because both solve
\(a=G(a)\). In addition,
\[
    \ell''(a^*_i)
    =
    \bigl(1-G'(a^*_i)\bigr)^2>0,
    \quad i=1,2.
\]
Thus, the Picard-unstable equilibrium \(a^*_2\) is still a locally well-conditioned minimum
of the residual objective. 
\end{example}

\section{Gradient Estimation} \label{sec:gradient}
For gradient descent on the residual $\ell(a)=\frac12\|a-G(a)\|^2$, the gradient takes the form
\[\nabla \ell(a)
=
(I-\nabla G(a))^\top(a-G(a)).\]
Thus, beyond estimating the stationary response $G(a)$ itself, a residual-gradient method requires an estimate of the Jacobian $\nabla G(a)$. This is the main computational difficulty: in the models of interest, $G(a)=H(\pi_a)$ is defined through the invariant distribution of the frozen Markov chain and is rarely available in closed form.

For the pathwise analysis in this section, we focus on a Euclidean state space $\mathcal S\subseteq\mathbb R^m$ and an expectation-type aggregator
\[
    H(\mu)=\int_{\mathcal S} f(x)\,\mu(dx),
    \qquad f: \mathcal S\to\mathbb R^d.
\]

A standard approach to estimating \(\nabla G(a)\) is to use central finite differences. For \(a\in\mathbb R^d\), step size \(h>0\), and coordinate vectors $e_i$, the $(j,i)$ entry is approximated by
\[
    \widehat \nabla^{\mathrm{FD}}_{ji}G(a)
    :=
    \frac{\widehat G_{j,N}(a+he_i)-\widehat G_{j,N}(a-he_i)}{2h},
\]
where each $\widehat G_N(\cdot)$ is a Monte Carlo estimate of the corresponding stationary response.
Suppose, for this comparison, that $G$ is three times continuously differentiable near $a$, that the two function estimates have variance of order $N^{-1}$, and that any finite-horizon bias in estimating $G$ is ignored. Then the central-difference bias is $O(h^2)$, whereas the variance is $O((Nh^2)^{-1})$. Hence
\[
    \operatorname{MSE}\bigl(\widehat \nabla^{\mathrm{FD}}_{ji}G(a)\bigr)
    =
    O\!\left(h^4+\frac{1}{Nh^2}\right).
\]
Balancing the two terms gives $h\asymp N^{-1/6}$ and an MSE of order $N^{-2/3}$, rather than the usual Monte Carlo rate $N^{-1}$. Moreover, a coordinatewise central difference requires two additional steady-state simulations per parameter coordinate.

This calculation is the standard independent-noise benchmark. Carefully designed common random numbers can reduce the variance in particular models, but the amount of cancellation is model dependent and itself relies on a sufficiently regular coupling of nearby systems.  Infinitesimal perturbation analysis (IPA) instead differentiates a single simulated path and therefore avoids the finite-difference bias and the need to resimulate each coordinate perturbation.

Suppose that, for a fixed aggregate value \(a\), the frozen Markov chain
\(\{X_t(a)\}_{t\ge 0}\) admits the sample-path representation
\begin{equation}
\label{eq:random-map-representation}
    X_{t+1}(a)
    =
    \Phi\bigl(X_t(a),a,U_{t+1}\bigr),
\end{equation}
where $(U_t:t\ge1)$ are i.i.d. exogenous random variables.  
We initially write the recursion under ordinary differentiability assumptions. In discrete-event models, global smoothness is usually unnecessary: it is enough that the map be differentiable almost surely along the simulated path, which can often be established by non-atomicity of the primitive random variables.

Let
$Z_t(a):=\nabla_aX_t(a)\in\mathbb R^{m\times d}$
denote the pathwise sensitivity of the state with respect to the aggregate parameter. 
Differentiating \eqref{eq:random-map-representation} gives
\[
    Z_{t+1}(a)
    =
    \nabla_x\Phi(X_t(a),a,U_{t+1})Z_t(a)
    +
    \nabla_a\Phi(X_t(a),a,U_{t+1}).
\]
Define $Y_t(a):=(X_t(a),a,U_{t+1})\in\mathcal{Y}$,
and write
$\Gamma(Y_t(a)):=\nabla_x\Phi(X_t(a),a,U_{t+1})$,
$\xi(Y_t(a)):=\nabla_a\Phi(X_t(a),a,U_{t+1})$. Then the sensitivity process satisfies the random affine recursion
\begin{equation}
\label{eq:sensitivity-recursion}
    Z_{t+1}(a)
    =
    \Gamma(Y_t(a))Z_t(a)+\xi(Y_t(a)).
\end{equation}
Thus, IPA augments the original frozen chain with an auxiliary sensitivity process that evolves along the same sample path.

\begin{example}\label{eg:AR1}
Consider an AR(1)-type frozen dynamic
\[
    X_{t+1}(a)
    =
    \rho(a)X_t(a)+\Psi(a,U_{t+1}),
\]
Here $\rho(a)$ captures persistence, while the innovation law may also depend on $a$ through its common-random-number representation $\Psi(a,U_{t+1})$. The sensitivity recursion is
\[
Z_{t+1}(a)=\rho(a)Z_t(a)+\rho'(a)X_t(a)+\nabla_a\Psi(a,U_{t+1}).
\]
This simple example already shows that stability of the state process does not automatically imply stability of its derivative: the derivative is governed by a second stochastic recursion with its own multiplicative dynamics.
\end{example}

For integers \(m\le n\), define the right-to-left product
$\Gamma_{m:n}:=
\Gamma(Y_{n-1})\cdots \Gamma(Y_m)$, with 
$\Gamma_{m:m}:=I$.
Iterating \eqref{eq:sensitivity-recursion} gives 
\begin{equation}
\label{eq:variation-of-constants}
    Z_t
    =
    \Gamma_{0:t}Z_0
    +
    \sum_{m=0}^{t-1}\Gamma_{m+1:t}\xi(Y_m).
\end{equation}
The random matrix products in \eqref{eq:variation-of-constants} are the central object in the analysis below.


We take the steady-state IPA identity as the starting point for our finite-time analysis. For a fixed $a\in A$, we assume the augmented process $(Y_t,Z_t)$ admits an invariant distribution $\Pi_a$, $G$ is differentiable at $a$, and 
\begin{equation}
\label{eq:steady-state-ipa-identity}
    \nabla G(a)
    =
    \mathbb E_{\Pi_a}\bigl[\nabla f(X)Z\bigr].
\end{equation}
For example, a sufficient route is to justify
\[
    \nabla G(a)
    =
    \nabla_a\lim_{t\to\infty}\mathbb E_\mu[f(X_t(a))] 
    =
    \lim_{t\to\infty}\mathbb E_\mu\bigl[\nabla f(X_t(a))Z_t(a)\bigr]
    =
    \mathbb E_{\Pi_a}\bigl[\nabla f(X)Z\bigr]
\]
by pathwise differentiability and uniform integrability. These interchange conditions are model specific and are verified directly in our applications. Our focus is on quantifying how accurately $\nabla G(a)$ can be estimated from finite-horizon simulation.

Define the terminal-time IPA estimator
\begin{equation}
\label{eq:ipa-estimator}
    \widehat g_{N,T}(a)
    :=
    \frac1N\sum_{i=1}^N
    \nabla f(X_T^{(i)}(a))Z_T^{(i)}(a),
\end{equation}
where \(\{(X_t^{(i)},Z_t^{(i)})\}_{i=1}^N\) are independent copies of the state-sensitivity process. Its error decomposes as
\begin{equation}
\label{eq:decomp}
    \widehat g_{N,t}(a)-\nabla G(a)
    =
    \underbrace{
    \left(
        \widehat g_{N,t}(a)
        -
        \mathbb E_\mu[\nabla f(X_t)Z_t]
    \right)
    }_{\text{Monte Carlo error}}
   +
    \underbrace{
    \left(
        \mathbb E_\mu[\nabla f(X_t)Z_t]
        -
        \mathbb E_{\Pi_a}[\nabla f(X)Z]
    \right)
    }_{\text{finite-time bias}}.
\end{equation}
The first term is controlled by moments of the IPA summand. The second depends on the convergence of the state-sensitivity process to stationarity.

The decomposition in \eqref{eq:decomp} highlights the main technical contribution of this section. Differentiating a finite-horizon recursion is straightforward once pathwise differentiability is available. The difficult step is to control the derivative as the horizon grows. Even when $Y_t$ is geometrically ergodic, the sensitivity may be amplified by occasional values of $\Gamma(Y_t)$ with norm greater than one. We therefore develop a stability theory for the random affine recursion \eqref{eq:sensitivity-recursion}, based on contraction of random Jacobian products on average, and then translate that stability into finite-time error bounds for IPA. The main idea behind our approach is finding an appropriate Lyapunov function based on a Poisson equation. A similar approach has been used for stability analysis of continuous stochastic processes \citep{majda2019simple}. 

\begin{remark}[Time-averaged IPA estimator]
A common implementation uses the post-burn-in average
\[
    \widehat g_{N,b,T}^{\mathrm{avg}}(a)
    :=
    \frac1N\sum_{i=1}^N
    \frac1{T-b}\sum_{t=b+1}^T
    \nabla f\bigl(X_t^{(i)}(a)\bigr)Z_t^{(i)}(a).
\]
This estimator typically uses simulation output more efficiently than a terminal observation. Its analysis additionally requires control of serial dependence along each trajectory, for example through summable covariance or mixing bounds for the augmented process. We use the terminal-time estimator because it cleanly separates cross-replication Monte Carlo error from finite-time bias; analogous time-average results can be obtained under the corresponding mixing assumptions.
\end{remark}


The following example uses an exactly solvable AR(1) model to illustrate the two components of the error decomposition in \eqref{eq:decomp}.

\begin{example}
\label{eg:ar1-finite-time-ipa}
Consider 
\[
    X_{t+1}(a)=\rho X_t(a)+a+\xi_{t+1},
    \qquad |\rho|<1,
\]
with $X_0(a)=0$, $\mathbb E[\xi_t]=0$, and $\mathbb E[\xi_t^2]<\infty$.
Take $f(x)=x^2$.
If $\sigma_\xi^2=\operatorname{Var}(\xi_t)$, then
\[
    G(a)
    = \mathbb{E}_{\pi_a}[X^2]
    =\frac{\sigma_\xi^2}{1-\rho^2}
    +
    \frac{a^2}{(1-\rho)^2},
    \qquad
    G'(a)=\frac{2a}{(1-\rho)^2}.
\]
The pathwise sensitivity satisfies
$Z_{t+1}(a)=\rho Z_t(a)+1$ with $Z_0(a)=0$, so
\[
    Z_T(a)=\frac{1-\rho^T}{1-\rho}.
\]
The terminal-time IPA estimator in \eqref{eq:ipa-estimator} becomes
\[
    \widehat g_{N,T}(a)
    =
    \frac1N\sum_{i=1}^N
    2X_T^{(i)}(a)Z_T(a).
\]
We can compute both terms in \eqref{eq:decomp} explicitly. Since
$\mathbb E[X_T(a)]=a(1-\rho^T)/(1-\rho)$
and $\sup_T\operatorname{Var}(X_T(a))<\infty$, the variance of the Monte Carlo term is $O(N^{-1})$ uniformly in $T$. Meanwhile
\[
    \mathbb E[\widehat g_{N,T}(a)]-G'(a)
    =
    \frac{2a}{(1-\rho)^2}
    \left((1-\rho^T)^2-1\right)
    =
    O(|\rho|^T).
\]
Consequently, there exists $C<\infty$, independent of $N$ and $T$, such that
\[
    \mathbb E\left[
        \left(\widehat J_{N,T}(a)-G'(a)\right)^2
    \right]
    \le
    C\left(\frac1N+|\rho|^{2T}\right).
\]
Thus, taking $T$ proportional to $\log N$ recovers the standard Monte Carlo MSE rate $O(N^{-1})$.
\end{example}
This simple example also illustrates the mechanism developed in the next section: contraction of the state-sensitivity dynamics controls the finite-time bias, while suitable moment stability provides the uniform variance bound needed for the standard Monte Carlo rate.

\subsection{Stability of the Sensitivity Process}
The representation \eqref{eq:variation-of-constants} shows why stability of the sensitivity process is more delicate than stability of the frozen state process. Stability of $Y_t$ controls the environment in which the derivative evolves, but it does not prevent the products $\Gamma_{m:n}$ from occasionally becoming large. The relevant requirement is therefore not uniform contraction at every state, but contraction of these products in a suitable average sense.

Throughout, vector and matrix norms are Euclidean and induced operator norms, respectively; for matrix-valued sensitivities, $\|\cdot\|$ may equivalently be read as the Frobenius norm, with only constants changing. Let $d_{\mathcal Y}$ be a distance on the state space of the driving process, and let $V:\mathcal Y\to[1,\infty)$ be a Lyapunov function. In a countable-state model, a useful choice is the weighted discrete metric
$d_{\mathcal Y}(y,\widetilde y)
    :=
    \mathbf 1\{y\ne\widetilde y\}
    \bigl(V(y)+V(\widetilde y)\bigr)$,
which accommodates coefficient maps that are not Lipschitz in an ordinary Euclidean metric.

\begin{assumption}[Core stability conditions for the sensitivity recursion]
\label{ass:ipa-stability}
There exist $\gamma>1$, $q_0>0$, and $C<\infty$ such that the following conditions hold.
\begin{enumerate}
    \item \textbf{Geometric coupling of the driving process.}
    There is a Markovian coupling $(Y_t,\widetilde Y_t)$ satisfying
    \[
        \mathbb E\left[
            d_{\mathcal Y}(Y_n,\widetilde Y_n)
            \mid
            Y_0=y,\widetilde Y_0=\widetilde y
        \right]
        \le
        C\gamma^{-n}d_{\mathcal Y}(y,\widetilde y)
        \quad \forall y,\widetilde y \in \cal Y, n\ge0.
    \]
    \item \textbf{Regularity of the IPA coefficients.}
    There exist $L_\Gamma,L_\xi<\infty$ such that for all $y,\widetilde y\in\mathcal Y$ 
    \[
        \|\Gamma(y)-\Gamma(\widetilde y)\|
        \le
        L_\Gamma d_{\mathcal Y}(y,\widetilde y),
        \qquad
        \|\xi(y)-\xi(\widetilde y)\|
        \le
        L_\xi d_{\mathcal Y}(y,\widetilde y).
    \]

    \item \textbf{Moment control for the additive sensitivity term.}
    For every initial state $y$,
    \[
        \sup_{t\ge0}
        \mathbb E\left[
            \|\xi(Y_t)\|^{q_0}
            \mid Y_0=y
        \right]
        \le
        C V(y)^{q_0}.
    \]

    \item \textbf{Fractional contraction of Jacobian products.}
    For every initial state $y$ and $n\ge0$,
    \[
        \mathbb E\left[
            \|\Gamma_{0:n}\|^{q_0}
            \mid Y_0=y
        \right]
        \le
        C\gamma^{-q_0n}V(y)^{q_0},
    \]
    and
    $\sup_{t\ge0}
        \mathbb E\left[V(Y_t)\mid Y_0=y\right]
        \le
        C V(y)$.
\end{enumerate}
\end{assumption}

Assumption~\ref{ass:ipa-stability} is formulated as a modular sufficient
condition that separates stability of the driving process from stability of
the differentiated recursion. Its components may be verified directly or
through model-specific mechanisms. In the strategic \(G/G/c\) queue in Section \ref{sec:ggc_queue}, for
example, we exploit the recurrent empty-system reset directly to obtain the
stronger second-moment stability needed for the unmodified IPA estimator.
Condition 1 is a standard geometric-ergodicity requirement expressed through coupling, while Conditions 2 and 3 are ordinary regularity and moment controls for the coefficient of the differentiated recursion. 
The fractional-product bound in
Condition~4 is the principal model-dependent requirement; the accompanying
bound on $V(Y_t)$ is a standard Lyapunov moment condition. The product bound
is automatic under the stronger uniform contraction condition
$\sup_y\|\Gamma(y)\|<1$, but it also allows $\|\Gamma(Y_t)\|>1$ along parts of a sample path, provided that the cumulative product contracts in a fractional moment. Section~\ref{sec:weakcontract} gives two concrete verification mechanisms: recurrent resets and negative mean-log growth. Separate geometric rates in Conditions 1 and 4 can be allowed; using a common $\gamma$ only simplifies notation.

Consider two copies $(Y_t,Z_t)$ and $(\widetilde Y_t,\widetilde Z_t)$ driven by the coupling in Assumption~\ref{ass:ipa-stability}, with $Z_0=\widetilde Z_0=0$.

\begin{theorem}[Fractional-moment contraction of the sensitivity process]
\label{thm:fractional-contraction}
Under Assumption~\ref{ass:ipa-stability}, for every $p\in\left(0,\min\left\{\frac{q_0}{4},\frac14\right\}\right]$
there exists \(C_p<\infty\) such that, for all \(T\ge1\) and all $y,\tilde y \in \calY$,
\[
    \mathbb E\left[
        \|Z_T-\widetilde Z_T\|^p
        \mid
        Y_0=y,\widetilde Y_0=\widetilde y
    \right]
    \le
   C_p T\gamma^{-pT}
    d_{\mathcal Y}(y,\widetilde y)^p
    V(\widetilde y)^p
    \bigl(1+V(y)^{2p}\bigr).
\]
\end{theorem}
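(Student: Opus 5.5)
Starting from the variation-of-constants formula \eqref{eq:variation-of-constants} with $Z_0=\widetilde Z_0=0$, I would write the difference of the two sensitivities at time $T$ as a telescoping sum of differences of random products and additive terms:
\[
Z_T-\widetilde Z_T=\sum_{m=0}^{T-1}\Bigl(\Gamma_{m+1:T}\xi(Y_m)-\widetilde\Gamma_{m+1:T}\xi(\widetilde Y_m)\Bigr),
\]
and split each summand as $\Gamma_{m+1:T}(\xi(Y_m)-\xi(\widetilde Y_m))+(\Gamma_{m+1:T}-\widetilde\Gamma_{m+1:T})\xi(\widetilde Y_m)$. The product difference is expanded by the standard telescoping identity
\[
\Gamma_{m+1:T}-\widetilde\Gamma_{m+1:T}=\sum_{k=m+1}^{T-1}\Gamma_{k+1:T}\bigl(\Gamma(Y_k)-\Gamma(\widetilde Y_k)\bigr)\widetilde\Gamma_{m+1:k}.
\]
Since $p\le 1$, I then use subadditivity $\|\sum a_i\|^p\le\sum\|a_i\|^p$. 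This reduces the problem to bounding $p$-th moments of single terms of the form $\|\Gamma_{k+1:T}\|\,\|\Gamma(Y_k)-\Gamma(\widetilde Y_k)\|\,\|\widetilde\Gamma_{m+1:k}\|\,\|\xi(\widetilde Y_m)\|$ and $\|\Gamma_{m+1:T}\|\,\|\xi(Y_m)-\xi(\widetilde Y_m)\|$.

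Each such term is a product of at most four factors. I would apply H\"older with exponent $4$ (this is where $p\le q_0/4$ and $p\le 1/4$ enter: $4p\le q_0$ and $4p\le 1$). The factors are controlled as follows:
\begin{itemize}
\item The coupling distance is bounded by Condition 2 plus Condition 1 and Jensen ($4p\le1$): $\mathbb E[d_{\mathcal Y}(Y_k,\widetilde Y_k)^{4p}]^{1/4}\le C\gamma^{-pk}d_{\mathcal Y}(y,\tilde y)^p$.
\item The forward product $\Gamma_{k+1:T}$ is handled by the Markov property and Condition 4: conditionally on $Y_{k+1}$, $\mathbb E\|\Gamma_{k+1:T}\|^{4p}\le C\gamma^{-4p(T-k-1)}V(Y_{k+1})^{4p}$. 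Jensen and the Lyapunov bound $\sup_t\mathbb E V(Y_t)\le CV(y)$ then give a factor $\gamma^{-p(T-k)}V(y)^p$.
\item Similarly $\widetilde\Gamma_{m+1:k}$ contributes $\gamma^{-p(k-m)}V(\tilde y)^p$.
\item $\xi(\widetilde Y_m)$ contributes $V(\tilde y)^p$ by Condition 3.
\end{itemize}
Multiplying, the geometric rates telescope to $\gamma^{-p(T-m)}\gamma^{-p(k-m)}\gamma^{-pk}\cdot\gamma^{p(k-m)}$-type exponents totaling at least $\gamma^{-pT}$. The $\Gamma(Y_k)$ factor is evaluated at a time when its coupling decay $\gamma^{-pk}$ combines with $\gamma^{-p(T-k)}$ and $\gamma^{-p(k-m)}$.

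The $V$ powers need care to land on $V(\tilde y)^p(1+V(y)^{2p})$. Tilde products and $\xi(\widetilde Y_m)$ carry $V(\tilde y)$-factors, and untilded products carry $V(y)$-factors. To avoid a $V(\tilde y)^{2p}$, I would bound $\widetilde\Gamma$ and $\xi(\widetilde Y_m)$ jointly: use Condition 4 conditionally at time $m+1$ and Condition 3 at time $m$, with the conditioning arranged so the Lyapunov weight appears once. I expect to accept some slack here, e.g. by applying Cauchy--Schwarz on the pair. The first-type terms give $d^p V(y)^p$-type bounds, absorbed since $V\ge1$.

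Summing over $k$ and $m$ naively gives $T^2\gamma^{-pT}$, whereas the statement claims $T\gamma^{-pT}$. The main obstacle is therefore getting exactly one factor of $T$ with the full rate $\gamma^{-pT}$. My plan is to avoid expanding the products twice. Instead, I would use the recursion $D_{t+1}=\Gamma(Y_t)D_t+(\Gamma(Y_t)-\Gamma(\widetilde Y_t))\widetilde Z_t+\xi(Y_t)-\xi(\widetilde Y_t)$ for $D_t=Z_t-\widetilde Z_t$, so that
\[
D_T=\sum_{t}\Gamma_{t+1:T}\bigl[(\Gamma(Y_t)-\Gamma(\widetilde Y_t))\widetilde Z_t+\xi(Y_t)-\xi(\widetilde Y_t)\bigr],
\]
a single sum over $t$. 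The needed input is then a uniform fractional moment bound $\sup_t\mathbb E\|\widetilde Z_t\|^{q}\le CV(\tilde y)^{q}$, which itself follows from \eqref{eq:variation-of-constants} by summing the geometric series from Condition 4. Each of the $T$ terms is then bounded by $\gamma^{-p(T-t)}\gamma^{-pt}d^pV(\tilde y)^pV(y)^{p}$ (up to the $V(y)^{2p}$ slack from Jensen and H\"older), giving $C_pT\gamma^{-pT}$ as claimed.
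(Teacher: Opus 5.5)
Your skeleton is the paper's: variation of constants with $Z_0=\widetilde Z_0=0$, subadditivity of $x\mapsto x^p$, the telescoping identity for $\Gamma_{m+1:T}-\widetilde\Gamma_{m+1:T}$, and H\"older with every exponent equal to $4p\le\min\{q_0,1\}$ (the paper applies Cauchy--Schwarz twice), fed by Conditions 1--4 and Jensen. The $T^2$ obstacle you identify is not real, so the switch to the $D_t$ recursion is unnecessary. In the $(m,k)$ term the three rates multiply to $\gamma^{-p(T-k)}\gamma^{-p(k-m)}\gamma^{-pk}=\gamma^{-pT}\gamma^{-p(k-m)}$. The inner sum over $k>m$ is therefore geometric and is $O(\gamma^{-pT})$, and only the outer sum over $m$ produces the factor $T$. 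This is exactly how the paper concludes. Your $D_t$ recursion is the same computation, with the inner sum over $m$ collected into $\widetilde Z_t$.

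The real gap is the Lyapunov bookkeeping. In your orientation the forward product is the untilded $\Gamma_{k+1:T}$ (or $\Gamma_{t+1:T}$), and the cross term carries $\xi(\widetilde Y_m)$ (or $\widetilde Z_t$). The doubled weight therefore falls on $\tilde y$, and what you honestly get is $d_{\calY}(y,\tilde y)^pV(y)^p\bigl(1+V(\tilde y)^{2p}\bigr)$. That does not imply the stated bound when $V(\tilde y)\gg V(y)$.

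Your proposed repair is to bound $\widetilde\Gamma_{m+1:k}$ and $\xi(\widetilde Y_m)$ jointly so the weight appears once, i.e.\ $\sup_t\mathbb E\|\widetilde Z_t\|^q\le CV(\tilde y)^q$. This does not follow from the assumptions and is false in general. After conditioning at time $m+1$ you are left with $\mathbb E[V(\widetilde Y_{m+1})^q\|\xi(\widetilde Y_m)\|^q]$, which H\"older only bounds by $CV(\tilde y)^{2q}$. That growth actually occurs. Take the chain $(v,0)\to(v,1)\to\ast$ with $\ast$ absorbing, $V(v,\cdot)=v\ge1$, $V(\ast)=1$, $\xi(v,0)=v$, $\Gamma(v,1)=v/\gamma$, and all remaining coefficients equal to $1/\gamma$ (for $\Gamma$) or $0$ (for $\xi$). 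It satisfies Assumption~\ref{ass:ipa-stability} under the weighted discrete metric, yet $Z_2=v^2/\gamma$.

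The fix is to reverse the orientation, as the paper does. Split the summand as $\widetilde\Gamma_{m+1:T}(\xi(Y_m)-\xi(\widetilde Y_m))+(\Gamma_{m+1:T}-\widetilde\Gamma_{m+1:T})\xi(Y_m)$, and telescope with $\widetilde\Gamma_{k+1:T}$ on the left and $\Gamma_{m+1:k}$ on the right. In recursion form this is $D_{t+1}=\Gamma(\widetilde Y_t)D_t+(\Gamma(Y_t)-\Gamma(\widetilde Y_t))Z_t+\xi(Y_t)-\xi(\widetilde Y_t)$. The contributions are then:
\begin{itemize}
\item the tilded forward product gives $V(\tilde y)^p$;
\item the $\xi$-difference term gives $d_{\calY}(y,\tilde y)^pV(\tilde y)^p$;
\item the cross term gives $V(\tilde y)^p(\mathbb E\|Z_t\|^{2p})^{1/2}\le CV(\tilde y)^pV(y)^{2p}$.
\end{itemize}
Together these give exactly $V(\tilde y)^p(1+V(y)^{2p})$. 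The factor $V(y)^{2p}$ is the genuine cost of the dependent pair $(\Gamma_{m+1:t},\xi(Y_m))$, not slack. Equivalently, you can keep your orientation and run it for the relabeled coupling $(\widetilde Y,Y)$ started at $(\tilde y,y)$. That coupling also satisfies Condition 1 because $d_{\calY}$ is symmetric.
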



A standard way to analyze steady-state IPA would be to assume directly that the augmented state--sensitivity process \((Y_t,Z_t)\) is geometrically ergodic. Theorem~\ref{thm:fractional-contraction}  
instead derives the quantitative
coupling estimate for the sensitivity component from separate conditions on
the driving process and the random Jacobian products. These two ingredients
can be verified through different model-specific arguments, without requiring
a direct ergodicity analysis of the full augmented process.
The theorem implies that on any set on which \(d_{\mathcal Y}\) and \(V\) are bounded,
\[
    \mathbb E\bigl[\|Z_T-\widetilde Z_T\|^p\bigr]
    =
    O\!\left(T\gamma^{-pT}\right).
\]

The use of a fractional moment substantially broadens the scope of the result. Average multiplicative contraction may produce a stable but heavy-tailed sensitivity process for which first or second moments do not exist (see Remark \ref{rem:fraction}). Theorem \ref{thm:fractional-contraction} still provides quantitative control in this regime. This control is sufficient for bounded and clipped IPA functionals; when combined with stronger Lyapunov moment bounds, it also yields the finite-time bias and mean-squared error guarantees for the raw estimator developed next.

\begin{remark}[Why fractional-moment stability is genuinely weaker]\label{rem:fraction}
Consider the scalar recursion
\[
    Z_{t+1}=A_tZ_t+1,
\]
where $(A_t)$ are i.i.d. and
$\mathbb P(A_t=0)=\mathbb P(A_t=10)=1/2$.
Then $\mathbb E[\log A_t]=-\infty$, with the convention $\log0=-\infty$, and
$\mathbb E[A_t^q]=10^q/2<1$
for every $q<\log2/\log10$. The stationary solution
\[
    Z_\infty
    =
    1+A_0+A_0A_{-1}+A_0A_{-1}A_{-2}+\cdots
\]
is finite almost surely and has a finite $q$th moment for each such $q$. It is nevertheless nonintegrable. Indeed, if $\mathbb E[Z_\infty]<\infty$, stationarity and the independence of $A_t$ from $Z_t$ would imply
\[
    \mathbb E[Z_\infty]
    =
    5\mathbb E[Z_\infty]+1,
\]
which is impossible because $Z_\infty\ge0$. Thus, contraction in a small fractional moment does not imply the first- or second-moment bounds needed for the usual MSE analysis of the unmodified IPA estimator.
\end{remark}

\subsection{IPA Gradient Estimation Analysis}
\label{sec:IPAgrad}

We now translate sensitivity stability into finite-time error bounds. There are two regimes. If the augmented process admits sufficiently strong moment control, the raw IPA estimator achieves the standard Monte Carlo rate. If only fractional stability and first-moment integrability are available, clipping yields a robust consistent estimator.

\subsubsection{The Lyapunov-Controlled Case}
Let $h_a(y,z):=\nabla f(x)z$, where $x$ denotes the state component of $y$. The following assumption
adds the moment and coupling controls needed for an $L^2$ analysis of the
IPA summand.

\begin{assumption}[Additional moment control for the IPA estimator]
\label{ass:lyap-untruncated-ipa}
For a fixed \(a\in A\), the following conditions hold.

\begin{enumerate}
    \item \textbf{Lyapunov drift for the augmented process.}
    There exists a measurable function
    \(W:\calY\times\mathbb R^{m\times d}\to[1,\infty)\), constants
    \(\rho\in(0,1)\) and \(b<\infty\), such that
    \begin{equation}\label{eq:Wbound01}
        \mathbb E\left[
            W(Y_1,Z_1)
            \mid
            Y_0=y,Z_0=z
        \right]
        \le
        \rho W(y,z)+b .
    \end{equation}
    The initial distribution \(\mu\) used for simulation satisfies
    $\mathbb E_\mu[W(Y_0,Z_0)]<\infty$.

    \item \textbf{Second-moment control of the IPA summand.}
    There exists \(C_h<\infty\) such that, for all \((y,z)\),
    \begin{equation} \label{eq:W_secondm}
        \|h_a(y,z)\|_F^2
        \le
        C_h\bigl(1+W(y,z)\bigr).
    \end{equation}

    \item \textbf{Fractional coupling of the IPA summand.}
    There exist $s\in(0,1]$, $\beta>1$, and $C<\infty$ such that, under a coupling of a transient copy initialized from $\mu$ and a stationary copy initialized from $\Pi_a$,
    \begin{equation}
    \label{eq:summand-contraction}
        \mathbb E\left[
            \|h_a(Y_t,Z_t)
              -h_a(\widetilde Y_t,\widetilde Z_t)\|_F^s
        \right]
        \le
        Ct\beta^{-t},
        \qquad t\ge1.
    \end{equation}
\end{enumerate}
\end{assumption}

The conditions in Assumption~\ref{ass:lyap-untruncated-ipa} have distinct
roles. Condition 1 provides uniform moment control for the augmented process.
Because \(\Pi_a\) is invariant, the same drift condition also implies $\mathbb E_{\Pi_a}[W(Y,Z)]\le b/(1-\rho)<\infty$.
Condition 2 is stated directly in terms of the IPA summand
\(h_a(y,z)=\nabla f(x)z\), which is the weakest form needed for the theorem.
A convenient sufficient condition is
\[
    \|\nabla f(x)\|_{\mathrm{op}}^4+\|z\|_F^4
    \le
    C_W\bigl(1+W(y,z)\bigr).
\]
When $\nabla f$ is uniformly bounded, it is enough instead to require
$\|z\|_F^2
    \le
C_W\bigl(1+W(y,z)\bigr)$.

Condition~3 is imposed directly on the IPA summand because this is the
quantity that enters the finite-time bias. This formulation preserves
generality in applications where $\nabla f$ is unbounded or fails to be
globally Lipschitz. At the same time, Condition~3 is related to the
preceding sensitivity-stability analysis. In particular, when
Assumption~\ref{ass:ipa-stability} holds, $\nabla f$ is uniformly bounded
and Lipschitz, and $W$ controls $\|z\|_F^2$ as above, the fractional
sensitivity contraction in
Theorem~\ref{thm:fractional-contraction}, together with the Jacobian-product
bound in Assumption~\ref{ass:ipa-stability}, implies
\eqref{eq:summand-contraction} under the corresponding integrability
conditions on the initial coupling; see Appendix~\ref{app:summand-contraction}.

\begin{theorem}
\label{thm:untruncated-ipa-lyap}
Suppose the steady-state IPA identity \eqref{eq:steady-state-ipa-identity} and Assumption~\ref{ass:lyap-untruncated-ipa} hold. Then 
\[
    \mathbb E\left[
        \|\widehat g_{N,t}(a)-\nabla G(a)\|^2
    \right]
    \le
    C\left(\frac1N+t\beta^{-t}\right) \mbox{ for some $C<\infty$.}
\]
\end{theorem}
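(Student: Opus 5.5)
We split the error using the decomposition \eqref{eq:decomp} and bound the two pieces separately, via $\|u+v\|^2\le 2\|u\|^2+2\|v\|^2$. Write $m_t:=\mathbb E_\mu[h_a(Y_t,Z_t)]$ and $m_\infty:=\mathbb E_{\Pi_a}[h_a(Y,Z)]$, so that $m_\infty=\nabla G(a)$ by the steady-state IPA identity \eqref{eq:steady-state-ipa-identity}.

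\emph{Monte Carlo term.} The replications are i.i.d., so
\[
\mathbb E\|\widehat g_{N,t}(a)-m_t\|_F^2=\frac1N\bigl(\mathbb E_\mu\|h_a(Y_t,Z_t)\|_F^2-\|m_t\|_F^2\bigr)\le \frac{C_h}{N}\bigl(1+\mathbb E_\mu W(Y_t,Z_t)\bigr),
\]
where the inequality uses \eqref{eq:W_secondm}. Iterating the drift bound \eqref{eq:Wbound01} gives $\mathbb E_\mu W(Y_t,Z_t)\le \rho^t\mathbb E_\mu W(Y_0,Z_0)+b/(1-\rho)$, which is bounded uniformly in $t$. This produces the $C/N$ term.

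\emph{Finite-time bias.} This is the main step, since condition~3 of Assumption~\ref{ass:lyap-untruncated-ipa} only controls an $s$-th moment with $s\le1$. We need to upgrade that control to a bound on $\|m_t-m_\infty\|^2$ that is linear in $t\beta^{-t}$. Under the coupling of condition~3, set $D_t:=h_a(Y_t,Z_t)-h_a(\widetilde Y_t,\widetilde Z_t)$, where the tilde copy is stationary. Then $\|m_t-m_\infty\|\le \mathbb E\|D_t\|_F$, and we interpolate with a truncation at a level $R$:
\[
\mathbb E\|D_t\|\le R^{1-s}\,\mathbb E\|D_t\|^s+\mathbb E\bigl[\|D_t\|\mathbf 1\{\|D_t\|>R\}\bigr]\le R^{1-s}Ct\beta^{-t}+R^{-1}\mathbb E\|D_t\|^2.
\]
Here $\mathbb E\|D_t\|^2\le 2C_h\bigl(2+\mathbb E_\mu W_t+\mathbb E_{\Pi_a}W\bigr)\le K$ uniformly in $t$, using \eqref{eq:W_secondm}, the drift bound, and $\mathbb E_{\Pi_a}W\le b/(1-\rho)$. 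A cleaner route is Hölder interpolation between $L^s$ and $L^2$:
\[
\mathbb E\|D\|\le (\mathbb E\|D\|^s)^{\theta}(\mathbb E\|D\|^2)^{1-\theta},\qquad \theta=\frac{1}{2-s}\in(0,1].
\]
This gives $\|m_t-m_\infty\|\le K'(t\beta^{-t})^{1/(2-s)}$. Squaring yields $(t\beta^{-t})^{2/(2-s)}$. Since $2/(2-s)\ge1$ and $t\beta^{-t}$ is bounded, this is at most a constant times $t\beta^{-t}$.

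\emph{Combining.} Adding the two bounds gives $\mathbb E\|\widehat g_{N,t}(a)-\nabla G(a)\|^2\le C(1/N+t\beta^{-t})$. Norms differ from Frobenius only by constants. The only delicate points are these:
\begin{itemize}
\item the uniform second-moment bound must hold for both the transient and the stationary copy, which the invariance of $\Pi_a$ under the drift condition supplies;
\item the Hölder step must produce an exponent of at least one on $t\beta^{-t}$, which holds because $s\le 1$.
\end{itemize}
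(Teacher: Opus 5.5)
Your proof is correct and follows essentially the paper's argument: the same Monte Carlo/bias split, the same uniform second-moment bounds from the drift condition for both the transient and the stationary copy, and the same idea of interpolating the fractional coupling bound \eqref{eq:summand-contraction} against those second moments. The only difference is the choice of H\"older exponents. The paper writes $\|D_t\|\le\|D_t\|^{s/2}(\|h_a(Y_t,Z_t)\|+\|h_a(\widetilde Y_t,\widetilde Z_t)\|)^{1-s/2}$ and applies Cauchy--Schwarz, obtaining bias $O(\sqrt t\,\beta^{-t/2})$, whereas your optimal interpolation yields the slightly sharper $O((t\beta^{-t})^{1/(2-s)})$ before you relax it to the stated rate.
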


Theorem \ref{thm:untruncated-ipa-lyap} implies if \(t_N\) is chosen as a sufficiently
large multiple of \(\log N\), then
\[
\mathbb E\left[\left\|\widehat g_{N,t_N}(a)-\nabla G(a)\right\|^2\right]=O(N^{-1}).
\]

\begin{remark}[Rate in replications versus total simulation work]
The $O(N^{-1})$ statement is expressed in the number of independent replications. With $t_N=O(\log N)$, the total number of simulated transitions is $O(N\log N)$, so the corresponding bound as a function of serial simulation work carries a logarithmic overhead. The replication-based formulation is nevertheless natural when trajectories are run in parallel and makes the comparison with ordinary Monte Carlo transparent.
\end{remark}

\subsubsection{The clipped IPA estimator}
If a second-moment Lyapunov bound is unavailable, the raw summand may have large or even infinite variance. The fractional contraction result can still be used after clipping the summand.

For \(M>0\), define the clipped summand
\[
    \mathcal T_M(g)
    :=
    \begin{cases}
        g, & \|g\|\le M,\\[3pt]
        M g/\|g\|, & \|g\|>M.
    \end{cases}
\]
This is the Euclidean projection onto the closed ball of radius $M$, so
$\|\mathcal T_M(g)\|\le M$ and
$\|\mathcal T_M(g)-\mathcal T_M(g')\|\le\|g-g'\|$.
The clipped IPA estimator is
\[
    \widehat g_{N,t}^{\,M}(a)
    :=
    \frac1N\sum_{i=1}^N
    \mathcal T_M(h_a(Y_t^{(i)},Z_t^{(i)})).
\]
The total error decomposes as
\[
\begin{aligned}
    \widehat g_{N,t}^{\,M}(a)-\nabla G(a)
    =
    &\underbrace{
    \left(
    \widehat g_{N,t}^{\,M}(a)
    -
    \mathbb E_\mu[\mathcal T_M(h_a(Y_t,Z_t))]
    \right)
    }_{\text{Monte Carlo error}}  
    +
    \underbrace{
    \left(
    \mathbb E_\mu[\mathcal T_M(h_a(Y_t,Z_t))]
    -
    \mathbb E_{\Pi_a}[\mathcal T_M(h_a(Y,Z))]
    \right)
    }_{\text{finite-time truncated bias}}  \\
    &+
    \underbrace{
    \left(
    \mathbb E_{\Pi_a}[\mathcal T_M(h_a(Y,Z))]
    -
    \mathbb E_{\Pi_a}[h_a(Y,Z)]
    \right)
    }_{\text{stationary truncation bias}} .
\end{aligned}
\]

The Monte Carlo error satisfies the standard bounded-summand estimate:
\[
    \mathbb E
    \left[
        \left\|
        \widehat g_{N,t}^{\,M}(a)
        -
        \mathbb E_\mu[\mathcal T_M(h_a(Y_t,Z_t))]
        \right\|^2
    \right]
    =
    \frac1N
    \mathbb E
    \left[
        \left\|
        \mathcal T_M(h_a(Y_t,Z_t))
        -
        \mathbb E[\mathcal T_M(h_a(Y_t,Z_t))]
        \right\|^2
    \right] 
    \le
    \frac{M^2}{N}.
\]

Assume the fractional summand-coupling condition
\eqref{eq:summand-contraction}.
The finite-time truncated bias is controlled by the fractional contraction
bound. Since \(\mathcal T_M\) is \(1\)-Lipschitz and bounded by \(M\),
$\|\mathcal T_M(g)-\mathcal T_M(g')\|\le (2M)^{1-s}\|g-g'\|^s$.
Therefore, using \eqref{eq:summand-contraction},
\[
\begin{aligned}
    \left\|
    \mathbb E[\mathcal T_M(h_a(Y_t,Z_t))]
    -
    \mathbb E_{\Pi_a}[\mathcal T_M(h_a(Y,Z))]
    \right\|
    &=
    \left\|
    \mathbb E[
        \mathcal T_M(h_a(Y_t,Z_t))
        -
        \mathcal T_M(h_a(\widetilde Y_t,\widetilde Z_t))
    ]
    \right\| \\
    &\le
    \mathbb E
    \left[
        \|\mathcal T_M(h_a(Y_t,Z_t))-\mathcal T_M(h_a(\widetilde Y_t,\widetilde Z_t))\|
    \right] 
    \le
    C M^{1-s}t\beta^{-t}.
\end{aligned}
\]

Finally, the stationary truncation bias satisfies
\[
    B_M
    :=
    \left\|
    \mathbb E_{\Pi_a}[\mathcal T_M(h_a(Y,Z))]
    -
    \mathbb E_{\Pi_a}[h_a(Y,Z)]
    \right\|  
    \le
    \mathbb E_{\Pi_a}
    \left[
        \|h_a(Y,Z)\|\mathbf 1\{\|h_a(Y,Z)\|>M\}
    \right].
\]
Thus \(B_M\to0\) as \(M\to\infty\) whenever
\(\mathbb E_{\Pi_a}\|h_a(Y,Z)\|<\infty\). Putting the three parts together, we have the following theorem.

\begin{theorem}
\label{thm:truncated-ipa}
Suppose \(h_a\in L^1(\Pi_a)\)  and the fractional summand contraction condition
\eqref{eq:summand-contraction} holds. Then there is a constant $C$ so that, for every \(M>0\),
\(N\ge1\), and \(t\ge1\), 
\[
    \mathbb E
    \left[
        \left\|
        \widehat g_{N,t}^{\,M}(a)-\nabla G(a)
        \right\|^2
    \right]
    \le
    C\left(
        \frac{M^2}{N}
        +
        M^{2(1-s)}t^2\beta^{-2t}
        +
        B_M^2
    \right).
\]
\end{theorem}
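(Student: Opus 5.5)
The proof assembles the three-term decomposition displayed just before the statement. We bound each term separately and then combine them with the elementary inequality $\|u+v+w\|^2\le 3(\|u\|^2+\|v\|^2+\|w\|^2)$.

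\textbf{Identification of the target.} We begin by using the steady-state IPA identity \eqref{eq:steady-state-ipa-identity} to write $\nabla G(a)=\mathbb E_{\Pi_a}[h_a(Y,Z)]$. This expectation is well defined because $h_a\in L^1(\Pi_a)$, so the stationary truncation bias $B_M$ is a finite deterministic quantity.

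\textbf{Monte Carlo term.} The summands $\mathcal T_M(h_a(Y_t^{(i)},Z_t^{(i)}))$ are i.i.d. and bounded in norm by $M$. Independence gives a mean squared error equal to $N^{-1}$ times the variance, and the variance is at most $\mathbb E\|\mathcal T_M(h_a)\|^2\le M^2$. This yields the bound $M^2/N$ already displayed.

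\textbf{Finite-time truncated bias.} This term is deterministic. We realize the transient copy from $\mu$ and the stationary copy from $\Pi_a$ on the coupling of Assumption~\ref{ass:lyap-untruncated-ipa}, Condition~3. Because the stationary copy has law $\Pi_a$ at every time $t$, the bias equals $\mathbb E[\mathcal T_M(h_a(Y_t,Z_t))-\mathcal T_M(h_a(\widetilde Y_t,\widetilde Z_t))]$.

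Jensen's inequality moves the norm inside the expectation. We then apply the interpolation inequality $\|\mathcal T_M g-\mathcal T_M g'\|\le \min(2M,\|g-g'\|)\le (2M)^{1-s}\|g-g'\|^s$, which holds for $s\in(0,1]$. Combining this with \eqref{eq:summand-contraction} bounds the term by $2^{1-s}C M^{1-s} t\beta^{-t}$, and squaring gives $M^{2(1-s)}t^2\beta^{-2t}$ up to a constant.

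\textbf{Stationary truncation bias.} This term is $B_M$ by definition, so its square is $B_M^2$.

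\textbf{Combination and main subtlety.} Summing the three squared bounds with the factor $3$ gives the theorem with $C=3\max(1,4C')$, where $C'$ is the constant in \eqref{eq:summand-contraction}; this constant is independent of $M$, $N$, and $t$. The only point requiring care is the coupling step. The fractional coupling must be available for the transient law started from $\mu$ at time $t$. The stationary marginal must also be exactly $\Pi_a$, so that replacing $\mathbb E_{\Pi_a}$ by the coupled expectation is legitimate. Both properties are exactly what Condition~3 provides.
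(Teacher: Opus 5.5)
Your proposal is correct and follows the paper's argument: the same three-term decomposition, the $M^2/N$ bound for the bounded i.i.d. clipped summands, the interpolation $\|\mathcal T_M(g)-\mathcal T_M(g')\|\le (2M)^{1-s}\|g-g'\|^s$ combined with \eqref{eq:summand-contraction} under the transient--stationary coupling, and $B_M$ for the last term, assembled via $\|u+v+w\|^2\le 3(\|u\|^2+\|v\|^2+\|w\|^2)$. One small slip: squaring the bias bound gives $4^{1-s}C'^2\le 4C'^2$, so the constant should be $C=3\max(1,4C'^2)$ rather than $3\max(1,4C')$.
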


From Theorem \ref{thm:truncated-ipa}, we have if \(M=M_N\to\infty\) and \(t=t_N\to\infty\) are chosen so that $M_N/\sqrt{N}\to0$, $M_N^{1-s}t_N\beta^{-t_N}\to0$, and $B_{M_N}\to0$
then
\[
    \widehat g_{N,t_N}^{\,M_N}(a)\to \nabla G(a)
    \quad\text{in probability}.
\]

\begin{remark}[An explicit clipping-bias bound]
If, for some $\delta>0$,
$\mathbb E_{\Pi_a}\|h_a(Y,Z)\|^{1+\delta}<\infty$, then
\[
    B_M
    \le
    M^{-\delta}
    \mathbb E_{\Pi_a}\|h_a(Y,Z)\|^{1+\delta}.
\]
This converts Theorem~\ref{thm:truncated-ipa} into an explicit bias-variance tradeoff  and guides the joint choice of $M_N$ and $t_N$.
\end{remark}

\begin{remark}[Hard truncation versus clipping]
One could instead use
\[
    h_a(Y_t,Z_t)\mathbf 1\{\|h_a(Y_t,Z_t)\|\le M\}.
\]
Hard truncation has the same bounded-summand variance and a similar stationary tail bias. Radial clipping is preferable for the finite-time analysis because it is Lipschitz. Hard truncation is discontinuous at the boundary $M$ and generally requires an additional anti-concentration condition near that boundary, or a sequence of cutoffs chosen at continuity points of the relevant distribution.
\end{remark}

\subsection{Weak contraction on average}
\label{sec:weakcontract}
We give two sufficient mechanisms for verifying Condition~4 of
Assumption~\ref{ass:ipa-stability}: recurrent resets and negative
mean-log growth. The first exploits exact resets of the derivative recursion. The second quantifies a negative mean-log growth condition when exact resets are absent.

\paragraph{Reset-induced contraction.}
Suppose there are recurrent states at which an infinitesimal parameter perturbation no longer propagates, so that $\Gamma(Y_t)=0$. Such resets arise naturally in queueing, inventory, and other regenerative models: when the system enters a state whose next transition is locally insensitive to the parameter, the accumulated derivative is erased.

\begin{proposition}[Weak average contraction from recurrent resets]
\label{prop:reset}
Suppose $\|\Gamma(y)\|\le M_\Gamma<\infty$ for all $y\in\mathcal Y$, and define
$\tau_0:=\inf\{t\ge0:\|\Gamma(Y_t)\|=0\}$.
Assume that there exist $C_\tau<\infty$, $\rho\in(0,1)$, and $R:\mathcal Y\to[1,\infty)$ such that
\[
    \mathbb P(\tau_0\ge n\mid Y_0=y)
    \le
    C_\tau R(y)\rho^n,
    \qquad n\ge0.
\]
Then, for every $q>0$ satisfying $\rho M_\Gamma^q<1$, there exists $\gamma>1$ such that
\[
    \mathbb E\left[
        \|\Gamma_{0:n}\|^q
        \mid Y_0=y
    \right]
    \le
    CR(y)\gamma^{-qn}.
\]
In particular, if $R(y)\le CV(y)^q$ and
\[
    \sup_{t\ge0}\mathbb E[V(Y_t)\mid Y_0=y]\le CV(y),
\]
then Condition 4 of Assumption~\ref{ass:ipa-stability} holds with exponent $q$.
\end{proposition}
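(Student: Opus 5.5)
The argument splits on whether a reset has occurred by time $n$. Recall $\Gamma_{0:n}=\Gamma(Y_{n-1})\cdots\Gamma(Y_0)$. If $\tau_0\le n-1$, the factor $\Gamma(Y_{\tau_0})$ is the zero matrix, so $\Gamma_{0:n}=0$. Hence the product is nonzero only on $\{\tau_0\ge n\}$. On that event we use the crude bound $\|\Gamma_{0:n}\|\le M_\Gamma^n$, which follows from submultiplicativity of the operator norm and $\|\Gamma(y)\|\le M_\Gamma$. This gives
\[
\mathbb E\left[\|\Gamma_{0:n}\|^q\mid Y_0=y\right]\le M_\Gamma^{qn}\,\mathbb P(\tau_0\ge n\mid Y_0=y)\le C_\tau R(y)\,(\rho M_\Gamma^q)^n .
\]
The case $n=0$ is trivial: $\Gamma_{0:0}=I$ and the bound holds with a constant $C\ge 1$, because $R\ge1$.

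Next we choose the rate. Since $\rho M_\Gamma^q<1$, define $\gamma:=(\rho M_\Gamma^q)^{-1/q}>1$, so that $(\rho M_\Gamma^q)^n=\gamma^{-qn}$. If $M_\Gamma=0$ the product vanishes for $n\ge1$ and any $\gamma>1$ works. Taking $C=\max\{C_\tau,1\}$ yields the first claim,
\[
\mathbb E\left[\|\Gamma_{0:n}\|^q\mid Y_0=y\right]\le C R(y)\gamma^{-qn}.
\]

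For the ``in particular'' part, substitute $R(y)\le CV(y)^q$ into this bound. That gives $\mathbb E[\|\Gamma_{0:n}\|^q\mid Y_0=y]\le C'\gamma^{-qn}V(y)^q$, which is exactly the product bound in Condition~4 of Assumption~\ref{ass:ipa-stability} with $q_0=q$. The accompanying requirement $\sup_t\mathbb E[V(Y_t)\mid Y_0=y]\le CV(y)$ is assumed directly.

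There is no real obstacle here. The only point needing care is the indexing: a reset at time $\tau_0$ annihilates $\Gamma_{0:n}$ exactly when $\tau_0\le n-1$. So the surviving event is $\{\tau_0\ge n\}$, which matches the tail hypothesis $\mathbb P(\tau_0\ge n\mid Y_0=y)$ as stated, with no shift in the exponent.
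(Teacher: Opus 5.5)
Your proof is correct and follows the paper's argument exactly: the product vanishes unless $\tau_0\ge n$, on that event $\|\Gamma_{0:n}\|\le M_\Gamma^n$, and $\gamma$ is defined by $\gamma^{-q}=\rho M_\Gamma^q$. You also treat the degenerate case $M_\Gamma=0$ separately, which the paper's choice of $\gamma$ passes over.
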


Because $\rho<1$, a sufficiently small positive $q$ always satisfies $\rho M_\Gamma^q<1$ whenever $M_\Gamma<\infty$. Thus, the substantive part of Proposition~\ref{prop:reset} is the geometric tail bound for the time to the next reset, which can often be obtained from standard drift-and-small-set arguments.

\paragraph{Negative mean-log growth.}
Suppose now that $\|\Gamma(Y_t)\|>0$ almost surely. If $Y_t$ is stationary and ergodic, 
\[
    \frac1n\sum_{k=0}^{n-1}\log\|\Gamma(Y_k)\|
    \rightarrow
    \mathbb E[\log\|\Gamma(Y_\infty)\|]
    \qquad\text{a.s.}
\]
Moreover,
\[
    \|\Gamma_{0:n}\|
    \le
    \prod_{k=0}^{n-1}\|\Gamma(Y_k)\|.
\]
A negative stationary mean of $\log\|\Gamma(Y_t)\|$ is therefore a natural sufficient condition for almost-sure decay of this upper bound. It is stronger than negativity of the top Lyapunov exponent for a general matrix product, but is considerably easier to verify. A block version, applied to $m$-step products, can weaken the condition.

Almost-sure decay does not by itself imply the fractional-moment estimate required in Assumption~\ref{ass:ipa-stability}; rare episodes of large expansion may dominate positive moments. The next assumption adds a quantitative concentration condition. It should be viewed as one convenient sufficient route, not as a baseline assumption for the entire IPA analysis.

\begin{assumption}[A sufficient condition based on negative mean-log growth]
\label{ass:IPAweakaverage}
Let $g(y):=\log\|\Gamma(y)\|$, and let $Y_\infty$ have the stationary law of the driving process.
\begin{enumerate}
    \item \textbf{Negative mean-log growth.}
    For some $\alpha>1$, $\mathbb E[g(Y_\infty)]=-\log\alpha$.

    \item \textbf{Poisson equation and a one-sided local conditional sub-Gaussian bound.}
    There is a measurable solution $\eta$ to
    \begin{equation}
    \label{eq:poisson}
        \eta(y)-\mathbb E[\eta(Y_1)\mid Y_0=y]
        =
        \mathbb E[g(Y_\infty)]-g(y).
    \end{equation}
    Moreover, there exist $\bar q>0$ and $c_\eta<\infty$ such that, for every $r\in(0,\bar q)$,
    \[
        \mathbb E_t\left[
            \exp\left
            \{-r(\eta(Y_{t+1})-\mathbb E_t[\eta(Y_{t+1})])\right\}
        \right]
        \le
        e^{c_\eta r^2}
        \qquad\text{a.s.}
    \]

    \item \textbf{Conditional exponential-moment stability.}
    For all sufficiently small $r>0$, there exists $C_r<\infty$ such that, with the two signs interpreted separately,
    \[
        \sup_{k\ge1}
        \mathbb E_t\left[e^{\pm2r\eta(Y_{t+k})}\right]
        \le
        C_r\left(1+e^{\pm2r\eta(Y_t)}\right)
        \qquad\text{a.s.}
    \]
\end{enumerate}
\end{assumption}

Condition 1 is the familiar average-stability requirement. Conditions 2 and 3 are quantitative strengthenings used to convert it into a finite fractional-moment bound. The one-sided conditional sub-Gaussian condition is automatic when the centered Poisson increment is bounded; Condition 3 follows, for example, from a suitable exponential Lyapunov drift for $\eta(Y_t)$. These conditions are not minimal, but they make the finite-time argument transparent.

\begin{proposition}[Random-product bound under negative mean-log growth]
\label{lem:prod}
Under Assumption~\ref{ass:IPAweakaverage}, for every $\alpha_0\in(1,\alpha)$ there exist $q\in(0,\bar q/2)$ and $C<\infty$ such that, for all $t\ge0$ and $n\ge1$,
\begin{equation}
\label{eq:prod-conditional}
    \mathbb E_t\left[
        \prod_{i=1}^n\|\Gamma(Y_{t+i})\|^q
    \right]
    \le
    C\alpha_0^{-qn}
    \cosh^2\left(\frac{q\eta(Y_t)}{2}\right).
\end{equation}
Consequently, if
\[
    R_q(y)
    :=
    1+
    \|\Gamma(y)\|^q
    \cosh^2\left(\frac{q\eta(y)}{2}\right)
\]
satisfies $R_q(y)\le CV(y)^q$ and
\[
    \sup_{t\ge0}\mathbb E[V(Y_t)\mid Y_0=y]\le CV(y),
\]
then Condition 4 of Assumption~\ref{ass:ipa-stability} holds with exponent $q$ and contraction rate $\gamma=\alpha_0$.
\end{proposition}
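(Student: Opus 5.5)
The plan is to turn the product into an exponential of an additive functional, then split that functional into a deterministic drift, a martingale, and boundary terms using the Poisson equation \eqref{eq:poisson}.

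\textbf{Step 1 (exponential form and decomposition).} Write $\prod_{i=1}^n\|\Gamma(Y_{t+i})\|^q=\exp\bigl(q\sum_{i=1}^n g(Y_{t+i})\bigr)$, where $e^{-\infty}=0$ if some factor vanishes. Write $P\eta(y):=\mathbb E[\eta(Y_1)\mid Y_0=y]$. Rewriting \eqref{eq:poisson} gives
\[
g(y)=-\log\alpha-\eta(y)+P\eta(y).
\]
Introduce the martingale increments $D_j:=\eta(Y_{j+1})-\mathbb E_j[\eta(Y_{j+1})]$. Telescoping then gives
\[
\sum_{i=1}^n g(Y_{t+i})=-n\log\alpha-\sum_{i=1}^n D_{t+i}+\eta(Y_{t+n+1})-\eta(Y_{t+1}).
\]
Hence the product equals
\[
\alpha^{-qn}\,e^{-q\sum_i D_{t+i}}\,e^{q\eta(Y_{t+n+1})-q\eta(Y_{t+1})}.
\]

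\textbf{Step 2 (separating the martingale from the boundary).} Apply Cauchy--Schwarz under $\mathbb E_t$. The martingale factor becomes $\bigl(\mathbb E_t e^{-2q\sum_i D_{t+i}}\bigr)^{1/2}$. Condition on $\mathcal F_{t+n-1},\mathcal F_{t+n-2},\dots$ in turn and use the one-sided conditional sub-Gaussian bound with $r=2q<\bar q$; this is why $q<\bar q/2$. This yields
\[
\bigl(\mathbb E_t e^{-2q\sum_i D_{t+i}}\bigr)^{1/2}\le e^{2c_\eta q^2 n}.
\]
For the boundary factor, use Cauchy--Schwarz once more:
\[
\mathbb E_t\bigl[e^{2q\eta(Y_{t+n+1})}e^{-2q\eta(Y_{t+1})}\bigr]\le \bigl(\mathbb E_t e^{4q\eta(Y_{t+n+1})}\bigr)^{1/2}\bigl(\mathbb E_t e^{-4q\eta(Y_{t+1})}\bigr)^{1/2}.
\]
Condition~3 of Assumption~\ref{ass:IPAweakaverage}, with $2r=4q$ and $q$ small, bounds each factor by $C(1+e^{\pm4q\eta(Y_t)})$. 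After the outer square root, the boundary contribution is at most
\[
C(1+e^{4q\eta(Y_t)})^{1/4}(1+e^{-4q\eta(Y_t)})^{1/4}\le C' e^{q|\eta(Y_t)|}\le 4C'\cosh^2\bigl(q\eta(Y_t)/2\bigr).
\]

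\textbf{Step 3 (choosing $q$).} Collecting terms gives the bound
\[
C\alpha^{-qn}e^{2c_\eta q^2 n}\cosh^2(q\eta(Y_t)/2).
\]
Choose $q\le \log(\alpha/\alpha_0)/(2c_\eta)$, together with $q<\bar q/2$ and small enough for Condition~3. Then $\alpha^{-qn}e^{2c_\eta q^2n}\le\alpha_0^{-qn}$, which is \eqref{eq:prod-conditional}.

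\textbf{Step 4 (Condition~4 of Assumption~\ref{ass:ipa-stability}).} Use submultiplicativity:
\[
\|\Gamma_{0:n}\|^q\le\|\Gamma(Y_0)\|^q\prod_{i=1}^{n-1}\|\Gamma(Y_i)\|^q.
\]
Apply \eqref{eq:prod-conditional} with $t=0$ and $n-1$ factors; the case $n=1$ is trivial. This gives
\[
\mathbb E[\|\Gamma_{0:n}\|^q\mid Y_0=y]\le C\alpha_0^{q}\alpha_0^{-qn}R_q(y)\le C'\alpha_0^{-qn}V(y)^q.
\]
The moment bound on $V(Y_t)$ is assumed, so Condition~4 holds with exponent $q$ and rate $\gamma=\alpha_0$.

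The main obstacle I expect is Step 2. The index alignment between the Poisson telescoping and the filtration in the sub-Gaussian bound must be checked carefully. The two-sided exponential moments of $\eta$ also have to be arranged so that the prefactor comes out exactly as $\cosh^2(q\eta/2)$ rather than a cruder function, which is what forces the doubled exponents and the restriction $q<\bar q/2$.
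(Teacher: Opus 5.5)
Your proof is correct and follows essentially the paper's route. The Poisson equation turns $\sum_i g(Y_{t+i})$ into the drift $-n\log\alpha$ plus centered increments, which are handled by iterated conditioning with the one-sided sub-Gaussian bound at $r=2q$; Cauchy--Schwarz and Condition~3 then control the remaining exponentials of $\eta$. The paper differs only in bookkeeping: it keeps the corrector $\psi(y)=-\mathbb E[\eta(Y_1)\mid Y_0=y]$ inside the iterated exponential. Its initial boundary term $e^{q\psi(Y_t)}$ is therefore $\mathcal F_t$-measurable and is handled by Jensen, so only one Cauchy--Schwarz is needed. Your explicit split costs a second Cauchy--Schwarz and Condition~3 at $\pm4q$, which is harmless because $q$ is taken small.

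One small fix: $D_{t+n}$ is centered under $\mathbb E_{t+n}$, so the iterated conditioning must start from $\mathcal F_{t+n}$ (then $\mathcal F_{t+n-1},\dots,\mathcal F_{t+1}$), not from $\mathcal F_{t+n-1}$. Your Step~4 also correctly supplies the ``consequently'' part, which the paper leaves implicit.
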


\begin{remark}[Verifying the Poisson equation]
Suppose the driving process has invariant law $\pi$, is geometrically contractive under a co-adapted coupling, and $g$ is Lipschitz with the required first moment. Let $\pi g:=\int g\,d\pi$. Then the correctly signed series
\[
    \eta(y)
    =
    \sum_{k=0}^\infty
    \bigl(\pi g-P^kg(y)\bigr)
\]
solves \eqref{eq:poisson}. Under a finite first moment of the coupling distance under $\pi$, geometric contraction also makes $\eta$ Lipschitz. A precise statement is given in the appendix.
\end{remark}

\begin{remark}[Why the exponential conditions are local]
The proof of Proposition~\ref{lem:prod} uses only a sufficiently small positive fractional moment $q$. Accordingly, the conditional moment-generating-function bounds are needed only in a neighborhood of the origin, not for all exponential orders. This distinction allows the result to cover models with substantially lighter regularity than would be required for a global exponential-moment bound.
\end{remark}

\section{Application: Strategic G/G/c Queue} \label{sec:ggc_queue}


We now apply the general framework to a strategic queue in which customers' arrival behavior responds to the stationary waiting-time distribution. 

Consider a first-come, first-served $G/G/c$ queue. Let $T_n$ denote the interarrival time between the $n$th and $(n+1)$st arrivals, and let $S_n$ denote the service requirement of the $n$th customer. The ordered workload vector immediately before the $n$th arrival is
\[
    W_n=(W_n(1),\ldots,W_n(c)),
    \qquad
    W_n(1)\le \cdots\le W_n(c),
\]
with state space $\mathcal S:=\{w\in\mathbb R_+^c:w_1\le\cdots\le w_c\}$.
Let $e_1=(1,0,\ldots,0)$, ${\bf 1}=(1,\ldots,1)$, and let
$r:\mathbb R^c\to\mathbb R^c$ rearrange a vector into nondecreasing order. Define the Kiefer--Wolfowitz map
\[
    F(w,s,t):=
    r\bigl((w+s e_1-t{\bf 1})^+\bigr),
\]
where the positive part is applied coordinatewise. The workload recursion is
\begin{equation}\label{eq:GGC}
    W_{n+1}
    =
    F(W_n,S_n,T_n)
    =
    r\bigl((W_n+S_n e_1-T_n{\bf 1})^+\bigr).
\end{equation}
The first coordinate $W_n(1)$ is the waiting time experienced by the $n$th arrival.

Strategic behavior enters through the interarrival-time distribution. Let
\[
    H:\mathcal P(\mathcal S)\to Z\subseteq\mathbb R^d
\]
be an aggregate congestion statistic. For example, $H(\mu)$ may be the mean waiting
time, $H(\mu)=\int_S w_1\,\mu(dw)$,
or a vector of waiting-time statistics. If $\mu_n=\operatorname{Law}(W_n)$, define $a_n:=H(\mu_n)$. Conditional on $a_n$, the interarrival time $T_n$ has law $T(a_n)$ and is independent of $S_n$ and, conditional on $a_n$, of the past.

As in the general framework, we first freeze the aggregate. For each $a\in A$, let $P_a$ be the transition kernel generated by $W_{n+1}=F(W_n,S_n,T(a))$.
When the frozen workload chain has a stationary distribution, denote it by $\pi_a$. The self-consistency map is $G(a)=H(\pi_a)$.
A stationary equilibrium of the strategic queue is obtained from a fixed point $a^\star=G(a^\star)$; the corresponding stationary workload distribution is $\pi_{a^\star}$.

Throughout this section, $W_1$ denotes the Wasserstein distance induced by the $L_1$ metric on $\mathcal S$. 


\begin{assumption}[Strategic \(G/G/c\) primitives]
\label{ass:ggc}
The following conditions hold.

\begin{enumerate}
\item \textbf{Parameter Lipschitzness of interarrival times.} There exists $L_T<\infty$
such that, 
\[
    W_1\bigl(\mathrm{Law}(T(a)),\mathrm{Law}(T(b))\bigr)
    \le
    L_T\|a-b\|, \quad \forall a,b\in A.
\]

\item \textbf{Light-tail service time distribution.} There exists $\bar\theta>0$ such that
$\mathbb E e^{\theta S}<\infty$, for all $0\le \theta\le \bar\theta$.

\item \textbf{Uniform stability and a reset opportunity.} There exists a nonnegative random
variable $T_\star$ satisfying
$0<\mathbb ET_\star<\infty$, with unbounded support on $[0,\infty)$, such that
$T(a)\succeq_{\rm st} T_\star$, for all $a\in A$.
Moreover, for some $\epsilon>0$,
$\mathbb E S \le (1-\epsilon)c\,\mathbb E T_\star$,
and there exists $s_0<\infty$ such that $\mathbb P\{S\le s_0\}>0$.

\item \textbf{Aggregator growth and continuity.} The aggregator $H$ depends only on the first-coordinate marginal. There exist
$k\in\mathbb N$ and $a_H,b_H\ge0$ such that
\[
    \|H(\mu)\|
    \le a_H+b_H\int_{\mathcal S}w_1^k\,\mu(dw).
\]
In addition, $H$ is continuous along weakly convergent sequences with uniformly bounded $k$th first-coordinate moments: if $\mu_n\Rightarrow\mu$ and 
$\sup_n\int_{\mathcal S} w_1^k\,\mu_n(dw)<\infty$,
then $H(\mu_n)\to H(\mu)$.
\end{enumerate}
\end{assumption}

Assumption~\ref{ass:ggc} separates the ingredients needed below. Part~1 controls how the frozen kernel changes with the aggregate. Parts~2--3 provide a uniform exponential drift and a state-independent opportunity to empty the system. Part~4 is used only for continuity and existence of a self-consistent aggregate.

Parts 1 and 2 are standard and typically easy to verify. For part 3, it is natural to assume that $a\mapsto T(a)$ is increasing in stochastic order. 
In this case, if $a_{0}$ is a lower bound of $A$  we can choose $T_\star:=T(a_0)$.  
As a concrete illustration, suppose the aggregate is scalar and $T(a)\sim \mathrm{Exp}(\lambda(a))$, with $\lambda(a)=1/\max\{1,a\}$.
Then as \(\mathbb E T(a)=\max\{1,a\}\ge1\), writing $T_\star\sim\mathrm{Exp}(1)$, one may couple
$T(a)=\max\{1,a\}T_\star$. Hence $T(a)\succeq_{\rm st}T_\star$.
\[
    W_1\bigl(\operatorname{Law}(T(a)),\operatorname{Law}(T(b))\bigr)
    =\bigl|\max\{1,a\}-\max\{1,b\}\bigr|
    \le |a-b|.
\]
Thus, if $\mathbb ES/c<1$ and the service time has a finite exponential moment, Parts~1--3 hold with $T_\star\sim\mathrm{Exp}(1)$.


The next theorem supplies the queueing input needed by the outer fixed-point algorithms. Its proof uses three properties of the Kiefer--Wolfowitz recursion: nonexpansiveness in the initial workload, Lipschitz dependence on the interarrival time, and a common-noise reset when a sufficiently long interarrival time empties all servers.

\begin{theorem}[Uniform ergodicity and stationary perturbation]
\label{thm:ggc-perturbation}
Suppose Assumption~\ref{ass:ggc}(1)--(3) holds. Then there exist
$\theta_\star>0$, a Lyapunov function
\begin{equation}
\label{eq:lyapunov_ggc}
    V(w):=\frac1c\sum_{i=1}^c e^{\theta_\star w_i},
\end{equation}
constants $C_0<\infty$ and $\rho\in(0,1)$, and, for every $a\in Z$, a unique stationary distribution $\pi_a$, such that
\[
    W_1(\nu P_a^n,\pi_a)
    \le C_0\bigl(1+\nu V\bigr)\rho^n,
    \qquad n\ge0,
\]
for every probability measure $\nu$ with $\nu V<\infty$. All constants are uniform in $a$, and
$\sup_{a\in Z}\pi_a V<\infty$.
Moreover, there exist constants $C<\infty$ and $B<\infty$, independent of $a,b$, such that
\begin{equation}
\label{eq:ggc-log-lip-w1}
    W_1(\pi_a,\pi_b)
    \le
    C\|a-b\|
    \left(1+\log_+\frac{B}{\|a-b\|}\right),
    \qquad a,b\in Z,\ a\ne b.
\end{equation}
Consequently, if $H$ is $L_H$-Lipschitz with respect to $W_1$, then
\[
    \|G(a)-G(b)\|
    \le
    L_HC\|a-b\|
    \left(1+\log_+\frac{B}{\|a-b\|}\right).
\]
\end{theorem}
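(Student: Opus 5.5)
The plan has three steps. First, a Foster--Lyapunov drift for $V$ that is uniform in $a$. Second, a geometric Wasserstein contraction for $P_a$, obtained from a synchronous coupling plus a common-noise reset. Third, the perturbation bound \eqref{eq:ggc-log-lip-w1}, obtained by telescoping over a horizon $n$ that grows like $\log(1/\|a-b\|)$.

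\textbf{Uniform drift.} By Assumption~\ref{ass:ggc}(3) we may realize $T(a)\ge T_\star$ almost surely for every $a$ via quantile coupling. Since $F$ is coordinatewise nonincreasing in $t$, it suffices to bound the drift under $T_\star$. Only the smallest coordinate receives $S$, while every coordinate is decreased by $T$. For small $\theta$ we have $\mathbb E e^{\theta S}\approx 1+\theta\mathbb ES$ and $\mathbb E e^{-\theta T_\star}\approx 1-\theta\mathbb ET_\star$. The load condition $\mathbb ES\le(1-\epsilon)c\,\mathbb ET_\star$ then makes the ``average'' coordinate contract.

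A one-step drift for $V(w)=c^{-1}\sum_i e^{\theta w_i}$ may fail, because the smallest coordinate can grow. I therefore expect to work with an $m$-step drift, with $m=c$ or a multiple of it. Over $c$ consecutive arrivals the service increments are spread over the servers, and the contraction $\mathbb E e^{-c\theta T_\star}$ dominates $\mathbb E e^{\theta S}$. The goal is
\[
P_a^m V\le\lambda V+K
\]
uniformly in $a$, with $\lambda<1$, which then yields $\sup_n \nu P_a^nV\le C(1+\nu V)$ and $\sup_a\pi_aV<\infty$. Making this multi-server exponential drift rigorous is the step I expect to be the main technical obstacle. If the averaged $V$ proves awkward, I would first try the equivalent-norm route through $e^{\theta_\star\sum_i w_i/c}$, using convexity.

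\textbf{Contraction and reset.} Run two copies from $w,w'$ with common $(S_n,T_n)$. Coordinatewise positive part and sorting are nonexpansive in $L_1$, so $\|W_n-W_n'\|_1$ is nonincreasing. A reset occurs if $T_n\ge \max(W_n(c),W'_n(c))+S_n$, after which both copies equal $0$ forever. On $\{V(w)\vee V(w')\le R\}$ this has probability at least $\mathbb P(S\le s_0)\,\mathbb P(T_\star\ge R'+s_0)>0$. This lower bound is uniform in $a$ because $T_\star$ has unbounded support.

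Standard drift-plus-small-set arguments then give a geometric tail for the coupling time $\tau$: $\mathbb P(\tau>n)\le C(V(w)+V(w'))\rho_1^n$. This yields
\[
\mathbb E\|W_n-W'_n\|_1\le \mathbb E\bigl[\|w-w'\|_1\mathbf 1\{\tau>n\}\bigr]\le C(V(w)+V(w'))\rho^n,
\]
using $\|w\|_1\lesssim \log V(w)\le V(w)$. Existence of $\pi_a$ follows from this estimate: $(\delta_0P_a^n)_n$ is Cauchy in $W_1$, and the limit is invariant because $P_a$ is $W_1$-Lipschitz. Uniqueness and the stated rate follow by integrating against $\nu\otimes\pi_a$.

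\textbf{Perturbation.} Write
\[
W_1(\pi_a,\pi_b)\le W_1(\pi_aP_a^n,\pi_aP_b^n)+W_1(\pi_aP_b^n,\pi_b).
\]
The second term is at most $C_0(1+\sup_a\pi_aV)\rho^n$. For the first term, start both chains from the same draw of $\pi_a$, and use common $S_k$ with $(T_k(a),T_k(b))$ drawn from the $W_1$-optimal coupling. Since $\|F(w,s,t)-F(w',s,t')\|_1\le\|w-w'\|_1+c|t-t'|$, we get
\[
\mathbb E\|W^a_n-W^b_n\|_1\le ncL_T\|a-b\|.
\]
Thus
\[
W_1(\pi_a,\pi_b)\le ncL_T\|a-b\|+C'\rho^n .
\]
Choose $n=\lceil\log_+(B/\|a-b\|)/\log(1/\rho)\rceil$ with $B=C'$. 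This gives \eqref{eq:ggc-log-lip-w1}. The bound on $G$ then follows immediately from $G(a)=H(\pi_a)$ and the $L_H$-Lipschitz property of $H$ with respect to $W_1$.
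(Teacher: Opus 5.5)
Your overall architecture is sound, and your perturbation step is essentially the paper's argument. The paper routes it through Theorem~\ref{th:log-Lipschitz} with the unbounded $L_1$ metric (Remark~\ref{rem:bunb}) and the one-step bounds of Lemma~\ref{lem:ggc-one-step}. The genuine gap is the drift step: you leave it unproven, and your reason for abandoning a one-step drift is mistaken. Since $V_\theta(w)=c^{-1}\sum_ie^{\theta w_i}$ is permutation invariant, the sorting is irrelevant. Applying $e^{\theta(x-t)^+}\le 1+e^{-\theta t}e^{\theta x}$ coordinatewise gives
\[
P_aV_\theta(w)\le 1+\mathbb E e^{-\theta T(a)}\Bigl(V_\theta(w)+\tfrac1c\bigl(\mathbb E e^{\theta S}-1\bigr)e^{\theta w_1}\Bigr)\le 1+\varphi(\theta)V_\theta(w),
\qquad
\varphi(\theta):=\mathbb E e^{-\theta T_\star}\Bigl(1+\tfrac1c\bigl(\mathbb E e^{\theta S}-1\bigr)\Bigr).
\]
The point you missed is that the coordinate receiving $S$ is the \emph{smallest} one, so $e^{\theta w_1}\le V_\theta(w)$. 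Its growth therefore costs only the factor $(\mathbb E e^{\theta S}-1)/c$, and this is exactly where the per-server load condition enters: $\varphi(0)=1$ and $\varphi'(0)=-\mathbb ET_\star+\mathbb ES/c\le-\epsilon\,\mathbb ET_\star<0$. A small $\theta_\star$ then gives $\eta=\varphi(\theta_\star)<1$ uniformly in $a$; this is Lemma~\ref{lem:ggc-drift}. Your $c$-step plan is unnecessary. Your fallback $e^{\theta\sum_iw_i/c}$ in fact fails to have a one-step drift in general. At $w=(0,\dots,0,L)$ with $L$ large, the sum changes by about $(S-T)^+-T$. When $c\ge 2$, the mean of this can be positive even though $\mathbb ES\le(1-\epsilon)c\,\mathbb ET_\star$.

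Your contraction step takes a genuinely different route from the paper. The paper proves a one-step $W_1$ contraction by the factor $1-p_\star$ on the sublevel set $\{V(x)+V(y)<d\}$, with $d>2/(1-\eta)$, plus global nonexpansiveness. It then invokes the generalized drift-and-contraction theorem of \citet{qin2022geometric} to obtain existence, uniqueness, and the uniform geometric rate. You instead use the fact that the reset is an exact coalescence under synchronous coupling. You bound the tail of the coalescence time via the bivariate drift of $V(w)+V(w')$ and the uniform reset probability on its sublevel sets, and you build $\pi_a$ as a $W_1$-Cauchy limit. Your route is more elementary and self-contained. The paper's route is more general, because it needs only local contraction rather than exact coalescence.

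One estimate in your argument needs repair. The bound $\|w-w'\|_1\,\mathbb P(\tau>n)\le C\|w-w'\|_1\bigl(V(w)+V(w')\bigr)\rho_1^n$ is not $O\bigl((V(w)+V(w'))\rho^n\bigr)$. The factor $\|w\|_1\sim\log V(w)$ multiplies $V$, and $\log V\le V$ only yields a quadratic bound. To fix this, run the tail estimate with $V_{\theta_\star/2}$ (the drift holds for every small $\theta$). Then use $\|w-w'\|_1\bigl(V_{\theta_\star/2}(w)+V_{\theta_\star/2}(w')\bigr)\le C\bigl(V_{\theta_\star}(w)+V_{\theta_\star}(w')\bigr)$.
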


Theorem~\ref{thm:ggc-perturbation} also yields the uniform moment bound needed for equilibrium existence.

\begin{corollary}[Existence of a stationary equilibrium]
\label{cor:ggc-existence}
Suppose Assumption~\ref{ass:ggc} holds and $Z\subseteq\mathbb R^d$ is nonempty, closed, and convex. Then $G$ has a fixed point $a^\star\in Z$. Consequently, $\pi_{a^\star}$ is a stationary equilibrium of the strategic $G/G/c$ queue.
\end{corollary}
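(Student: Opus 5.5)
The plan is to apply Brouwer's (or Schauder's) fixed-point theorem to $G$ restricted to a suitable compact convex subset of $Z$. This needs three ingredients: well-definedness of $G$, boundedness of its range, and continuity. Theorem~\ref{thm:ggc-perturbation} gives, under Assumption~\ref{ass:ggc}(1)--(3), a unique stationary distribution $\pi_a$ for every $a\in Z$, together with $\sup_{a\in Z}\pi_aV<\infty$, where $V(w)=c^{-1}\sum_i e^{\theta_\star w_i}$. Since $w_1\le w_i$ and $e^{\theta_\star w_1}\ge (\theta_\star w_1)^k/k!$, we get
\[
\sup_{a\in Z}\int_{\mathcal S} w_1^k\,\pi_a(dw)\le \frac{k!}{\theta_\star^k}\sup_{a\in Z}\pi_aV=:M_k<\infty .
\]
By Assumption~\ref{ass:ggc}(4), $\|G(a)\|=\|H(\pi_a)\|\le a_H+b_HM_k=:R$ for all $a\in Z$. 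Hence $G$ maps $Z$ into $K:=Z\cap \bar B(0,R)$, and likewise maps $K$ into $K$. The set $K$ is compact and convex because $Z$ is closed and convex. It is also nonempty: $G(Z)\subseteq Z$ (as $H$ takes values in $Z$), and $Z\neq\emptyset$.

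For continuity, take $a_n\to a$ in $K$. The log-Lipschitz bound \eqref{eq:ggc-log-lip-w1} gives $W_1(\pi_{a_n},\pi_a)\le C\|a_n-a\|(1+\log_+(B/\|a_n-a\|))\to0$, because $x\log(1/x)\to0$ as $x\downarrow0$. Convergence in $W_1$ implies weak convergence $\pi_{a_n}\Rightarrow\pi_a$. The $k$th first-coordinate moments are bounded uniformly by $M_k$, so the continuity clause in Assumption~\ref{ass:ggc}(4) gives $G(a_n)=H(\pi_{a_n})\to H(\pi_a)=G(a)$. Note that we do not assume $H$ is $W_1$-Lipschitz; the weak-continuity-with-moments clause is exactly what is used here.

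Brouwer's theorem on the compact convex set $K\subset\mathbb R^d$ then yields $a^\star\in K$ with $G(a^\star)=a^\star$. The reduction in Section~\ref{sec:problem-setup} (uniqueness of $\pi_{a^\star}$ from Theorem~\ref{thm:ggc-perturbation}) shows that $\pi_{a^\star}P_{a^\star}=\pi_{a^\star}$ with $H(\pi_{a^\star})=a^\star$, i.e., $\pi_{a^\star}$ is a stationary equilibrium. The only delicate point is the uniform-moment step. It relies on the constants in Theorem~\ref{thm:ggc-perturbation} being uniform in $a$, which is what makes the range of $G$ bounded even though $Z$ itself may be unbounded.
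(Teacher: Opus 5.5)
Your proposal is correct and follows essentially the same route as the paper. Both arguments derive a uniform $k$th first-coordinate moment bound from $\sup_{a}\pi_aV<\infty$, use it to bound the range of $G$ and obtain a compact convex invariant set $K=Z\cap\bar B(0,R)$, get continuity from $W_1$-convergence plus the weak-continuity clause of Assumption~\ref{ass:ggc}(4), and then apply Brouwer. The only differences are cosmetic: you drop the factor $c$ from the moment constant by using $w_1\le w_i$, and you show $K\neq\emptyset$ via $G(Z)\subseteq Z$, whereas the paper enlarges the radius to include a chosen point.
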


\begin{remark}[Aggregate-dependent service laws]
The preceding results extend to a service law $S(a)$. A convenient uniform formulation assumes
$\sup_{a\in Z}\mathbb E e^{\theta S(a)}<\infty$ for all $ 0\le\theta\le\bar\theta$,
a common stability/reset condition, including
$\inf_{a\in Z}\mathbb P\{S(a)\le s_0\}>0$, and
\[
    W_1\bigl(\operatorname{Law}(S(a)),\operatorname{Law}(S(b))\bigr)
    \le L_S\|a-b\|.
\]
The one-step perturbation bound then contains the additional term $L_S\|a-b\|$; the remainder of the argument is unchanged.
\end{remark}

\subsection{Relaxed Fixed Point Iterations for Strategic $G/G/c$ Queue}
Theorem~\ref{thm:ggc-perturbation} verifies the log-Lipschitz regularity required by the general RFPI result. Convergence additionally requires an incremental one-sided condition. We give two examples in which this condition follows from queue structure.

\begin{example}[Scalar congestion feedback] \label{eg:GGcOneDim}
Suppose $A\subseteq\mathbb R$, and $H$ is increasing with respect to the stochastic order of the waiting-time marginal. Assume also that $T(b)\succeq_{\rm st}T(a)$ whenever $b\geq a$. Thus, a larger congestion signal induces longer interarrival times. Under a common-noise coupling, longer interarrival times produce smaller workload vectors, so the stationary response $G(a)=H(\pi_a)$ is decreasing. This negative feedback need not make $G$ a contraction, but it supplies the required one-sided stability.
\end{example}


\begin{proposition}[Scalar strategic $G/G/c$ queue]
\label{prop:scalar-ggc}
Suppose the conditions of Example~\ref{eg:GGcOneDim} and Assumption~\ref{ass:ggc} hold. Let $K\subseteq A$ be the compact interval constructed in the proof of Corollary~\ref{cor:ggc-existence}, so that $G(K)\subseteq K$. Assume, in addition, that $H$ is Lipschitz with respect to $W_1$. Then $G$ is decreasing on $K$ and has a unique fixed point $a^\star$. For
$\delta_k=\delta_0(k+1)^{-2/3}$,
with $\delta_0>0$ sufficiently small, the relaxed fixed-point iteration converges to $a^{\star}$ from every $a_0\in K$. 
\end{proposition}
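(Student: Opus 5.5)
We will show three things in turn: $G$ is nonincreasing on $K$, $G$ has a unique fixed point there, and the conditions of Proposition~\ref{prop:Log-Lip} (RFPI under log-Lipschitz regularity plus an incremental one-sided condition) hold on $K$.

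\emph{Monotonicity.} Fix $a\le b$ in $K$. By hypothesis $T(b)\succeq_{\rm st}T(a)$, so we can realize the two interarrival sequences on one probability space with $T_n(b)\ge T_n(a)$ almost surely, using the same service times $S_n$. The Kiefer--Wolfowitz map $F(w,s,t)=r((w+se_1-t\mathbf 1)^+)$ is coordinatewise nondecreasing in $w$, because sorting preserves the componentwise order. It is also nonincreasing in $t$. Induction from a common initial state then gives $W_n(b)\le W_n(a)$ coordinatewise for every $n$. Theorem~\ref{thm:ggc-perturbation} gives $W_1$-convergence to $\pi_a$ and $\pi_b$. Since stochastic order is closed under weak limits, the first-coordinate marginal of $\pi_b$ is stochastically dominated by that of $\pi_a$. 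Because $H$ is increasing in this order, $G(b)=H(\pi_b)\le H(\pi_a)=G(a)$.

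\emph{Uniqueness.} The map $a\mapsto a-G(a)$ is strictly increasing, so it has at most one zero. Corollary~\ref{cor:ggc-existence}, together with $G(K)\subseteq K$ and continuity of $G$ (a consequence of \eqref{eq:ggc-log-lip-w1}), gives existence. So $G$ has exactly one fixed point $a^\star$ in $K$.

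\emph{One-sided condition.} In the scalar case, monotonicity gives $(G(a)-G(b))(a-b)\le 0\le c|a-b|^2$ for any $c\in[0,1)$; take $c=0$. Log-Lipschitz regularity of $G$ is the last display of Theorem~\ref{thm:ggc-perturbation}, using that $H$ is $W_1$-Lipschitz, and its constants are uniform on $K$. Since $K$ is a compact interval, it is convex and invariant under $G$. Hence every relaxed iterate $(1-\delta_k)a_k+\delta_kG(a_k)$, being a convex combination of two points of $K$, stays in $K$.

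\emph{Convergence.} We then invoke Proposition~\ref{prop:Log-Lip} with step sizes $\delta_k=\delta_0(k+1)^{-2/3}$. For orientation, the expected mechanism is as follows. Write $e_k=a_k-a^\star$. Using $G(a^\star)=a^\star$,
\[
e_{k+1}^2=(1-\delta_k)^2e_k^2+2\delta_k(1-\delta_k)e_k\bigl(G(a_k)-a^\star\bigr)+\delta_k^2\bigl(G(a_k)-a^\star\bigr)^2 .
\]
The cross term is nonpositive by monotonicity. The last term is bounded by $\delta_k^2C^2e_k^2(1+\log_+(B/|e_k|))^2$, and this is at most a constant times $\delta_k^2$ because $K$ is bounded. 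This gives $e_{k+1}^2\le(1-\delta_k)e_k^2+C'\delta_k^2$ once $\delta_0$ is small. Since $\sum_k\delta_k=\infty$ and $\delta_k\to0$, it follows that $e_k\to0$.

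The step most likely to need care is matching the precise hypotheses of Proposition~\ref{prop:Log-Lip}, which may be stated on all of $A$ or with specific constants in the step-size rule. The fix is to restrict $G$ to $K$ as above, which is legitimate because $K$ is convex, compact and $G$-invariant. A secondary technical point is the weak-limit preservation of stochastic order, which follows from $W_1$ convergence of the marginals.
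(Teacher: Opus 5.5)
Your proposal is correct and follows essentially the same route as the paper. The paper likewise uses a common-noise coupling with $T_n(b)\ge T_n(a)$ and monotonicity of the Kiefer--Wolfowitz map to get $W_n(b)\le W_n(a)$ and hence $G$ decreasing, which gives uniqueness and the one-sided condition with $c=0$, and it takes the log-Lipschitz bound from Theorem~\ref{thm:ggc-perturbation} to invoke Proposition~\ref{prop:Log-Lip}. Your extra observations are valid supplements rather than a different approach: that the iterates stay in the convex, $G$-invariant interval $K$, and the elementary recursion $e_{k+1}^2\le(1-\delta_k)e_k^2+C'\delta_k^2$, which gives convergence without a rate.
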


In particular, Proposition~\ref{prop:scalar-ggc} covers mean waiting time. Indeed,
$H(\mu)=\int w_1\,\mu(dw)$ is increasing in stochastic order and is $1$-Lipschitz with respect to $W_1$.

\begin{example}[Mean--variance feedback in an $M/M/1$ queue]
\label{eg:2D-MM1}
Consider a strategic $M/M/1$ model in which customers respond to both the mean and variance of the stationary waiting time. For $a=(a_1,a_2)\in\mathbb R_+^2$, define
$\lambda(a):=\bigl(\tau_0-\alpha_1a_1-\alpha_2a_2\bigr)^+$ and $\nu(a):=\nu_0+\gamma a_1$,
where $\tau_0,\nu_0>0$ and $\alpha_1,\alpha_2,\gamma\ge0$. The arrival rate decreases with both summary statistics, whereas the service rate increases with mean waiting time.

For each fixed $a$, let $\pi_a$ be the stationary waiting-time distribution, with the convention $\pi_a=\delta_0$ when $\lambda(a)=0$. Define
$G(a)
:=\left(\mathbb E_{\pi_a}[W],\operatorname{Var}_{\pi_a}(W)\right)$.
Let $\Delta_0:=\nu_0-\tau_0>0$, $\rho_0:=\tau_0/\nu_0$,
$\bar m:=\rho_0/\Delta_0$,
$\bar v:=\rho_0(2-\rho_0)/\Delta_0^2$, and set
$K:=[0,\bar m]\times[0,\bar v]$.
Similar to above, for $a\in K$, write
$\rho(a):=\lambda(a)/\nu(a)$ and $\Delta(a):=\nu(a)-\lambda(a)$.
Then we have
\begin{equation}
\label{eq:mm1-response}
    G(a)
    =\left(
        \frac{\rho(a)}{\Delta(a)},
        \frac{\rho(a)(2-\rho(a))}{\Delta(a)^2}
    \right),
\end{equation}
with $G(a)=(0,0)$ when $\lambda(a)=0$.
\end{example}

\begin{proposition}[Strategic $M/M/1$ mean--variance example]
\label{prop:mm1-mean-var}
Fix $\tau_0>0$ and $\alpha_1,\alpha_2,\gamma\ge0$. There exists $\bar\Delta<\infty$ such that, whenever $\Delta_0>\bar\Delta$, the map $G$ in \eqref{eq:mm1-response} maps $K$ into itself, is Lipschitz on $K$, and satisfies
\[
    \langle G(x)-G(y),x-y\rangle
    \le c_{\rm os}\|x-y\|^2,
    \qquad x,y\in K,
\]
for some one-sided Lipschitz constant $c_{\rm os}<1$. One explicit sufficient condition is
\begin{equation}
\label{eq:mm1-simple-sufficient}
    \frac{\alpha_2}{2\Delta_0^2}
    +\frac{\alpha_1+\gamma}{\Delta_0^3}
    <1.
\end{equation}
Consequently, if $L$ is any Lipschitz constant of $G$ on $K$, then every fixed damping parameter $\delta\in(0,1]$ satisfying
$(1-\delta)^2+2c_{\rm os}\delta(1-\delta)+L^2\delta^2<1$
yields linear convergence of the relaxed iteration to the unique fixed point $a^\star\in K$.
\end{proposition}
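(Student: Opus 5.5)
The plan is to verify the three ingredients (self-map, Lipschitz, one-sided bound) by direct calculus on the explicit formula \eqref{eq:mm1-response}, treating $\Delta_0$ as large. Then the linear-convergence claim follows from the standard RFPI computation.

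\textbf{Step 1: $G(K)\subseteq K$.} Since $\alpha_i\ge0$ and $a\in\mathbb R_+^2$, we have $\lambda(a)\le\tau_0$ and $\nu(a)\ge\nu_0$. Hence $\rho(a)\le\rho_0<1$ and $\Delta(a)\ge\nu_0-\tau_0=\Delta_0$. The maps $(\rho,\Delta)\mapsto\rho/\Delta$ and $(\rho,\Delta)\mapsto\rho(2-\rho)/\Delta^2$ are increasing in $\rho\in[0,1]$ and decreasing in $\Delta>0$. Therefore $G_1(a)\le\bar m$ and $G_2(a)\le\bar v$, and $G\ge0$ trivially. The case $\lambda(a)=0$ gives $(0,0)\in K$.

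\textbf{Step 2: Jacobian bounds.} On the region where $\lambda(a)>0$, write $G$ as a composition of $(\lambda,\nu)$, which is affine in $a$, with smooth functions of $(\lambda,\nu)$. It helps to express everything in $\lambda$ and $\nu$:
\[
G_1=\frac{\lambda}{\nu(\nu-\lambda)},\qquad
G_2=\frac{\lambda(2\nu-\lambda)}{\nu^2(\nu-\lambda)^2}.
\]
Differentiating, using $\partial_{a_1}\lambda=-\alpha_1$, $\partial_{a_2}\lambda=-\alpha_2$, $\partial_{a_1}\nu=\gamma$, and bounding with $\nu\ge\nu_0\ge\Delta_0$, $\nu-\lambda\ge\Delta_0$, $\lambda\le\tau_0$, gives entrywise bounds. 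Each partial derivative is $O\bigl((\alpha_1+\alpha_2+\gamma)\Delta_0^{-2}\bigr)$ or smaller, with constants depending only on $\tau_0$. The relevant orders are $\partial_\lambda G_1\sim\Delta_0^{-2}$, $\partial_\nu G_1\sim\tau_0\Delta_0^{-3}$, and $\partial_\lambda G_2\sim 2\Delta_0^{-3}$. Hence $\sup_K\|\nabla G\|\le L(\Delta_0)\to0$ as $\Delta_0\to\infty$.

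The map $\lambda(a)$ has a kink where $\tau_0-\alpha_1a_1-\alpha_2a_2=0$. However, $G$ is continuous there, since $G\to0$ as $\lambda\to0$, and it is piecewise $C^1$ along segments. Lipschitzness on the convex set $K$ therefore follows by integrating $\nabla G$ along the segment between two points, which crosses the kink at most once. The same Lipschitz bound gives the one-sided estimate
\[
\langle G(x)-G(y),x-y\rangle\le L\|x-y\|^2 .
\]
Choosing $\bar\Delta$ so that $L(\Delta_0)<1$ then yields $c_{\rm os}:=L<1$. This proves the existence part of the statement.

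\textbf{Step 3: the explicit sufficient condition \eqref{eq:mm1-simple-sufficient}.} This is the step I expect to be the main obstacle, since it requires exact constants rather than orders. A cleaner route is to bound the symmetric part of $\nabla G$ rather than its full norm. By the mean value theorem along segments,
\[
\langle G(x)-G(y),x-y\rangle=\int_0^1 (x-y)^\top\nabla G\,(x-y)\,ds .
\]
The diagonal entries behave well: $\partial_{a_1}G_1=-\alpha_1\partial_\lambda G_1+\gamma\partial_\nu G_1\le \gamma|\partial_\nu G_1|$, because $\partial_\lambda G_1>0$. Similarly, $\partial_{a_2}G_2=-\alpha_2\partial_\lambda G_2\le0$. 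This reflects the negative feedback. Only the off-diagonal entries $\partial_{a_2}G_1$ and $\partial_{a_1}G_2$ remain, and I would bound them in absolute value. Then apply $2|uv|\le u^2+v^2$. Using $\partial_\lambda G_2\le 2/\Delta_0^3$ (sharp near $\rho_0\to1$ after rescaling) and the $\Delta_0^{-2}$ bound on $\partial_\lambda G_1$, the aim is to arrive at
\[
c_{\rm os}\le \frac{\alpha_2}{2\Delta_0^2}+\frac{\alpha_1+\gamma}{\Delta_0^3}.
\]
If the constants do not come out exactly as in \eqref{eq:mm1-simple-sufficient}, I would adjust by bounding $\nu\ge\Delta_0$ and $\rho\le1$ more carefully.

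\textbf{Step 4: convergence of the relaxed iteration.} Set $T(a)=(1-\delta)a+\delta G(a)$. Then $T(K)\subseteq K$ by convexity of $K$ and Step 1. Expanding the square gives
\[
\|T(x)-T(y)\|^2=(1-\delta)^2\|x-y\|^2+2\delta(1-\delta)\langle G(x)-G(y),x-y\rangle+\delta^2\|G(x)-G(y)\|^2 .
\]
This is at most $\bigl[(1-\delta)^2+2c_{\rm os}\delta(1-\delta)+L^2\delta^2\bigr]\|x-y\|^2$. So $T$ is a contraction on the closed set $K$, and Banach's fixed-point theorem gives a unique fixed point $a^\star$ with linear convergence of the iterates. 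Since the fixed points of $T$ and of $G$ coincide, $a^\star$ is the unique fixed point of $G$ in $K$.
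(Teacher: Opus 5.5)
Steps 1, 2 and 4 are sound. Your route to $\bar\Delta$ differs from the paper's. You show that $\sup_K\|\nabla G\|\to0$, so for large $\Delta_0$ the map $G$ is a contraction and $c_{\rm os}:=L$ works; the paper instead notes that the left-hand side of \eqref{eq:mm1-simple-sufficient} tends to zero. Your Banach argument in Step 4 also produces the fixed point $a^\star$, whereas Proposition~\ref{prop:Lip} presupposes one. The gap is Step 3, which is needed for the claim that \eqref{eq:mm1-simple-sufficient} is sufficient. Step 2 cannot replace it, because that condition allows $G$ to be far from a contraction. For example, with $\alpha_2=\gamma=0$ and $\alpha_1/\Delta_0^3<1$, one has $|\partial_{a_1}G_1(0)|=\alpha_1/\Delta_0^2$, which can be much larger than $1$.

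As you set it up, Step 3 cannot reach the stated constant. You bound the diagonal entry by $\partial_{a_1}G_1\le\gamma|\partial_\nu G_1|$. That positive term equals $\gamma\rho(2-\rho)/\Delta^2$ and reaches $\gamma\rho_0(2-\rho_0)/\Delta_0^2$ at $a=0$. It does not appear in \eqref{eq:mm1-simple-sufficient}, and it can exceed $1$ while \eqref{eq:mm1-simple-sufficient} holds. Take $\alpha_1=\alpha_2=0$, $\tau_0=\Delta_0=10$ and $\gamma=900$: the term is $6.75$, while the left-hand side of the condition is $0.9$. Adjusting constants will not remove it.

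What is missing is a sign: $\partial_\nu G_1=-\rho(2-\rho)/\Delta^2\le0$, since a faster server lowers the mean wait. Hence $\partial_{a_1}G_1=-(\alpha_1+\gamma\rho(2-\rho))/\Delta^2\le0$, in addition to $\partial_{a_2}G_2=-2\alpha_2/\Delta^3\le0$. You also need the exact value $\partial_\nu G_2=-2\rho(\rho^2-3\rho+3)/\Delta^3$, together with $\rho(\rho^2-3\rho+3)=1-(1-\rho)^3\in[0,1]$. An order-of-magnitude bound is not enough here.

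With both diagonal entries of the symmetric part $S$ nonpositive, $u^\top\nabla G(a)u\le|S_{12}(a)|\,\|u\|^2$, where
\[
|S_{12}(a)|=\frac{\alpha_2}{2\Delta(a)^2}+\frac{\alpha_1+\gamma\rho(a)(\rho(a)^2-3\rho(a)+3)}{\Delta(a)^3}\le\frac{\alpha_2}{2\Delta_0^2}+\frac{\alpha_1+\gamma}{\Delta_0^3}.
\]
The inactive region contributes $\nabla G=0$. This gives $c_{\rm os}$ equal to the left-hand side of \eqref{eq:mm1-simple-sufficient}. Once repaired, your route is slightly more direct than the paper's, which uses the same bound on $|S_{12}|$ only to verify $\det(I-S(a))>0$ and then obtains $c_{\rm os}<1$ by compactness.
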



Figure~\ref{fig:osl_map} illustrates the one-sided condition above numerically. The blue region is the parameter region in which the analytical one-sided condition is certified.  The red region indicates only that this sufficient condition is not verified; it does not imply that the iteration diverges. The grey region corresponds to an unstable baseline queue, $\nu_0\le\tau_0$.


\begin{figure}[htbp]
    \centering
    \includegraphics[width=0.6\linewidth, height=7cm, keepaspectratio]{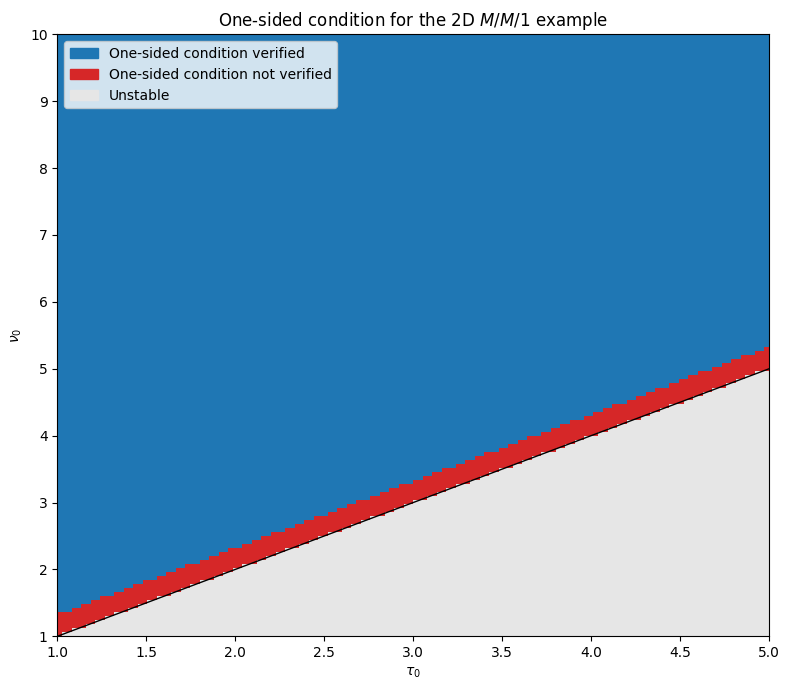}
    \caption{
    The blue region represents parameter combinations for which the one-sided condition is verified numerically. The red region indicates where this numerical check does not verify the condition (but it does not imply divergence of the relaxed fixed-point iteration). The grey area represents an unstable baseline queue ($\nu_0 \le \tau_0$). We use $\alpha_1=0.05$, $\alpha_2=0.05$, and $\gamma=0.03$.}
    \label{fig:osl_map}
\end{figure}

\subsection{IPA for the Strategic $G/G/c$ Queue}
\label{subsec:ggc-ipa}

The relaxed fixed-point results above are useful when log-Lipschitz regularity and an incremental one-sided condition can be verified analytically. In more complex queueing models, however, these conditions may be difficult to establish. A natural alternative is to minimize
\[
    \ell(a)=\frac12\|a-G(a)\|^2
\]
by a gradient-based method. This requires an estimator of $\nabla G(a)$. The result below shows that the queue's regenerative empty-state structure controls both the pathwise sensitivity and the finite-horizon bias of an IPA estimator.

Fix a compact convex set $K\subseteq A$ and an open set $K^+\subseteq\mathbb R^d$ containing $K$. Suppose the frozen primitives are defined on $K^+$, the light-tail and stability/reset conditions in Assumption~\ref{ass:ggc}(2)--(3) hold uniformly for $a\in K^+$, and the interarrival time has a common-random-number representation $T(a)=\tau(a,U)$ for $a\in K^+$, where $U$ is exogenous. For each $a\in K^+$, define
\[
    W_{n+1}(a)
    =
    r\bigl((W_n(a)+S_ne_1-\tau(a,U_n){\bf 1})^+\bigr).
\]
Let $J_n(a):=\nabla_a W_n(a)\in\mathbb R^{c\times d}$.

\begin{assumption}[IPA regularity for the \(G/G/c\) queue]
\label{ass:GGc-ipa}
The following conditions hold.

\begin{enumerate}
\item For every $a\in K^+$, the law of $\tau(a,U)$ has a density with respect to Lebesgue measure, and the service-time distribution is non-atomic.

\item For almost every $U$, the map $a\mapsto\tau(a,U)$ is continuously differentiable on $K^+$. There exist $q>2$ and a random variable $B_T(U)$ such that
\[
    \sup_{a\in K^+}\|\nabla_a\tau(a,U)\|
    \le B_T(U),
    \quad
    \mathbb E[B_T(U)^q]<\infty.
\]

\item The aggregator has the expectation form
$H(\mu)=\int_{\mathcal S} h(w)\,\mu(dw)$,
where \(h: \mathcal S\to\mathbb R^d\) has a continuously differentiable extension to an open neighborhood of  \(\mathcal S\). For the Lyapunov function $V$ in \eqref{eq:lyapunov_ggc}, there exist $C_h,L_h<\infty$ such that
$\|h(w)\|\le C_h V(w)$ and $\|\nabla h(w)\|_F\le L_h$ for any 
$w\in \mathcal S$.
\end{enumerate}
\end{assumption}

Under Assumption~\ref{ass:GGc-ipa}, the finite-horizon map $a\mapsto W_n(a)$ is differentiable almost surely. Let $R_n(a)$ be the masked permutation matrix that records which coordinates survive the positive-part operation and how the positive coordinates are reordered. Then $\|R_n(a)\|\le1$ and
\begin{equation}
\label{eq:ggc-sensitivity-recursion}
    J_{n+1}(a)
    =R_n(a)\left(J_n(a)-{\bf 1}\nabla_a\tau(a,U_n)^\top\right).
\end{equation}
When the next workload vector is zero, $R_n(a)=0$, and hence the entire sensitivity matrix is reset to zero. This exact reset is the key difference between the queueing application and the general fractional-moment setting of Section~\ref{sec:gradient}: together with the moment condition $q>2$, it yields second-moment stability of the unmodified IPA summand.

For $W_0=0$ and $J_0=0$, define
\begin{equation}
\label{eq:ggc-ipa-estimator}
    \widehat{\nabla G}_{N,n}(a)
    :=\frac1N\sum_{i=1}^N
    \nabla h\bigl(W_n^{(i)}(a)\bigr)J_n^{(i)}(a),
\end{equation}
where the $N$ workload--sensitivity trajectories are independent.

\begin{theorem}[Finite-time IPA for the strategic $G/G/c$ queue]
\label{thm:GGc_IPA_valid}
Suppose the light-tail and stability/reset conditions in Assumption~\ref{ass:ggc}(2)--(3) hold uniformly for $a\in K^+$, and suppose Assumption~\ref{ass:GGc-ipa} holds. Then $G$ is continuously differentiable on $K$. For each $a\in K$, the augmented process $(W_n(a),J_n(a))$ admits an invariant distribution $\Pi_a$, whose workload marginal is $\pi_a$, and
\begin{equation}
\label{eq:ggc-steady-state-ipa}
    \nabla G(a)=\mathbb E_{\Pi_a}[\nabla h(W)J].
\end{equation}
Moreover, there exist constants $C<\infty$ and $\beta\in(0,1)$, independent of $a\in K$, $N$, and $n$, such that
\begin{equation}
\label{eq:ggc-ipa-mse}
    \mathbb E\left[
        \left\|\widehat{\nabla G}_{N,n}(a)-\nabla G(a)\right\|_F^2
    \right]
    \le
    \frac{CL_h^2}{N}
    +CL_h^2\beta^{2n}.
\end{equation}
Consequently, choosing $n$ proportional to $\log N$ yields
\[
    \mathbb E\left[
        \left\|\widehat{\nabla G}_{N,n}(a)-\nabla G(a)\right\|_F^2
    \right]
    =O(N^{-1})
\]
uniformly over $a\in K$.
\end{theorem}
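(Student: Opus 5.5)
The plan is to exploit the exact reset structure of the sensitivity recursion \eqref{eq:ggc-sensitivity-recursion}. This lets me control $J_n$ directly through excursions between empty-system times, instead of going through the general fractional-moment machinery of Theorem~\ref{thm:fractional-contraction}.

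\textbf{Step 1: a uniform geometric reset time.} By Assumption~\ref{ass:ggc}(3), holding uniformly on $K^+$, there is a common-noise reset event. From any workload $w$ with $V(w)$ controlled, there is a probability bounded below, uniformly in $a$, that after a bounded number of steps all servers empty, so that $W_n=0$ and $R_n=0$. To see this, note that $T(a)\succeq_{\rm st}T_\star$ has unbounded support, and $\mathbb P\{S\le s_0\}>0$ gives a step in which the interarrival time exceeds $\max_i w_i + s_0$. Combining this with the exponential drift of $V$ from Theorem~\ref{thm:ggc-perturbation}, I obtain, by the standard drift/small-set argument, $\mathbb P(\sigma\ge n\mid W_0=w)\le C V(w)\rho^n$ uniformly in $a\in K^+$, where $\sigma$ is the first time $W_n=0$.

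\textbf{Step 2: moment bounds on the sensitivity.} Since $\|R_n\|\le1$ and resets erase the past, between resets $\|J_n\|_F\le \sqrt c\sum_{k=\sigma_{\rm last}}^{n-1}B_T(U_k)$, a sum over the current excursion. By H\"older with $q>2$ and the geometric tail of the excursion length,
\[
\mathbb E\|J_n\|_F^{2}\le \mathbb E\Big[L_n^{2}\max_{k} B_T(U_k)^2\Big]\lesssim \sum_\ell \ell^{2+2/q}\,\mathbb P(L_n\ge \ell)^{1-2/q}<\infty
\]
uniformly in $n$ and $a$, where $L_n$ is the age of the current excursion. Its tail is geometric, by Step 1 applied along the chain using $\sup_n\mathbb E V(W_n)<\infty$ from $W_0=0$. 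I will also get a slightly higher moment $2+\varepsilon$ for uniform integrability. This is the main obstacle: the excursion-age tail must be controlled uniformly in time and in $a$, and the exponents must be chosen so that $q>2$ suffices. Since $\nabla h$ is bounded by $L_h$, the Monte Carlo term is at most $L_h^2\sup_n\mathbb E\|J_n\|_F^2/N$.

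\textbf{Step 3: coupling, invariance, and bias.} I couple the trajectory started at $(0,0)$ with one started from a stationary $(W,J)$ using common $(S_n,U_n)$. The Kiefer--Wolfowitz map is nonexpansive. After a common reset, which occurs once a long interarrival time empties both systems, both $W$ and $J$ coincide forever, because $J$ is reset to $0$ simultaneously. The coupling time therefore has a geometric tail. The stationary version is constructed as a limit, or via regeneration: $\Pi_a$ is the law of $(W,J)$ at a typical time of the regenerative process with cycles between resets, and it exists because cycle lengths have finite mean and Step 2 gives integrability. Its workload marginal is $\pi_a$ by uniqueness (Theorem~\ref{thm:ggc-perturbation}). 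The bias is then
\[
\|\mathbb E[\nabla h(W_n)J_n]-\mathbb E_{\Pi_a}[\nabla h(W)J]\|\le L_h\,\mathbb E[(\|J_n\|+\|\tilde J_n\|)\mathbf 1\{\text{not coupled by } n\}].
\]
By Cauchy--Schwarz and the $L^2$ bounds, this is at most $C L_h\beta^n$ for some $\beta\in(0,1)$, uniformly in $a$. Squaring gives \eqref{eq:ggc-ipa-mse}.

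\textbf{Step 4: differentiability and the identity \eqref{eq:ggc-steady-state-ipa}.} The non-atomic assumptions ensure that, almost surely, no ties or zero crossings occur at any fixed $a$. Hence $a\mapsto W_n(a)$ is a.s.\ differentiable with derivative $J_n$, and it is Lipschitz on $K^+$ with a constant bounded by $\sum_k B_T(U_k)$. Dominated convergence then gives $\nabla_a\mathbb E[h(W_n(a))]=\mathbb E[\nabla h(W_n)J_n]$. These derivatives converge uniformly on $K$ by Step 3, and $\mathbb E h(W_n(a))\to G(a)$ uniformly by Theorem~\ref{thm:ggc-perturbation} together with the growth bound $\|h\|\le C_hV$. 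So $G$ is differentiable with the limit as its gradient, which is \eqref{eq:ggc-steady-state-ipa}. For continuity of the gradient in $a$, each finite-$n$ derivative is continuous in $a$ because of the $C^1$ dependence of $\tau$ and the a.s.\ stability of orderings, and a uniform limit of continuous functions is continuous.
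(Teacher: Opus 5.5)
Your proposal is correct and follows the paper's proof in all essentials. The shared steps are: exponential return to the empty-system atom $(0,0)$, where $R_n=0$ erases the sensitivity; a uniform $L^2$ bound on $J_n$ from excursion lengths via H\"older with $q>2$; a geometric bias bound; and dominated finite-horizon differentiation plus uniform convergence of $\nabla G_n$ to identify $\nabla G$. The only real variation is Step~3. There you obtain the bias from an explicit common-noise coupling that merges at the first simultaneous reset, whereas the paper uses Lemma~\ref{lem:reset-affine-L2} and a regenerative $f$-ergodic theorem. Your route is more elementary, and the geometric tail of the coupling time follows directly from the monotonicity of the Kiefer--Wolfowitz map in the workload: the copy started at $0$ empties whenever the stationary copy does.
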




\begin{remark}[Bounded-gradient statistics]
The bounded-gradient condition covers mean waiting time, $h(w)=w_1$, and other Lipschitz performance statistics. Statistics with unbounded derivative, such as
$h(w)=w_1^2$, require a polynomial-growth extension of
Theorem~\ref{thm:GGc_IPA_valid}. The exponential workload Lyapunov bound makes a polynomial-growth extension feasible, but that extension requires corresponding moment assumptions on both $\nabla h(W)$ and the sensitivity process.
\end{remark}

\section{Application: Opinion Dynamics}
\label{sec:opinion-dynamics}

We next consider an opinion-dynamics model with heterogeneous agent types. Agents update their beliefs based on their own current beliefs and on the average beliefs of other agents.

There are \(m\) types of agents, denoted by \(\Theta=\{\tau_1,\ldots,\tau_m\}\). An agent's state is \(s=(y,\tau)\), where \(y\in\mathbb R\) is her current belief and
\(\tau\in\Theta\) is her type. Thus, the state space is $S := Y\times \Theta$. Types are fixed over time. Let \(\lambda_\tau>0\) denote the population share of type
\(\tau\), with \(\sum_{\tau\in\Theta}\lambda_\tau=1\). Since types do not change, the
type composition remains fixed, and only the belief distribution within each type evolves.

For a population distribution \(\mu\) on \(S\), define the type-level mean-belief vector (aggregator)
$H(\mu)
    =
    \bigl(H(\mu)_\tau\bigr)_{\tau\in\Theta}
    \in \mathbb R^m$
as the average belief for each agent type:
\begin{equation}
\label{eq:opinion-H}
    H(\mu)_\tau
    :=
    \frac{1}{\lambda_\tau}
    \int_S y\,\mathbf 1\{\theta=\tau\}\,\mu(dy,d\theta),
    \quad \tau\in\Theta.
\end{equation}
Let \(W\in\mathbb R_+^{m\times m}\) be a row-stochastic matrix. The entry
\(W_{\tau,\tau'}\) measures how much type-\(\tau\) agents weight the average belief of
type-\(\tau'\) agents. Given an aggregate belief vector \(a\in\mathbb R^m\), the social
signal observed by a type-\(\tau\) agent is
\begin{equation}
\label{eq:opinion-social-signal}
    S_\tau(a)
    :=
    (Wa)_\tau
    =
    \sum_{\tau'\in\Theta} W_{\tau,\tau'}a_{\tau'} .
\end{equation}
Since \(W\) has nonnegative entries, a larger aggregate belief vector generates a larger
social signal for every type.

The belief dynamics are as follows. Suppose the current population distribution is
\(\mu_n\), and let
$a_n:=H(\mu_n)$
be the vector of current type-level mean beliefs. A type-\(\tau\) agent with current belief
\(Y_n\) updates according to
\begin{equation}
\label{eq:opinion-nonlinear-dynamics}
    Y_{n+1}
    =
    w_\tau\bigl(Y_n,S_\tau(a_n),\varepsilon_{n+1}\bigr),
\end{equation}
where \(w_\tau\) is the type-\(\tau\) updating function and \(\varepsilon_{n+1}\) is an
idiosyncratic shock with distribution \(F_\tau\). The type remains fixed. Therefore the
population law evolves according to
$\mu_{n+1}=\mu_n P_{H(\mu_n)}$, i.e., it is a nonlinear Markov chain.
More explicitly, for a fixed \(a\in\mathbb R^m\), the frozen kernel \(P_a\) is defined by
\begin{equation}
\label{eq:opinion-Pa}
P_a\bigl((y,\tau),A\times B\bigr)
=
\mathbf 1\{\tau\in B\}
\int
\mathbf 1\left\{
w_\tau\bigl(y,S_\tau(a),\epsilon\bigr)\in A
\right\}
F_\tau(d\epsilon),
\end{equation}
for \(A\in\mathcal B(\mathbb R)\) and \(B\subseteq\Theta\). Thus, for each frozen aggregate
\(a\), the system is an ordinary Markov chain. We denote its stationary distribution by
\(\pi_a\), when it exists. 

\begin{example}[Noisy mean-field DeGroot model]
\label{ex:opinion-degroot}
A useful special case is a noisy mean-field version of the DeGroot model. Suppose
\[
    w_\tau(y,s,\epsilon)
    =
    \alpha_\tau y+\beta_\tau s+\gamma_\tau\epsilon,
\]
where \(0\le \alpha_\tau<1\), \(\beta_\tau\ge 0\), and
\(\mathbb E|\epsilon_\tau|<\infty\). Then a type-\(\tau\) agent updates by taking a weighted
combination of her current belief $y$, the social signal \(s\), and an idiosyncratic shock $\epsilon$.

For a fixed aggregate vector \(a\), the frozen belief process for type \(\tau\) is the affine
autoregressive recursion
\[
    Y_{n+1}
    =
    \alpha_\tau Y_n+\beta_\tau (Wa)_\tau+\gamma_\tau\epsilon_{n+1}.
\]
Since \(0\le \alpha_\tau<1\), this recursion has a unique stationary distribution under the
moment condition above. Its stationary mean is
\[
    G_\tau(a)
    =
    \frac{\beta_\tau}{1-\alpha_\tau}(Wa)_\tau
    +
    \frac{\gamma_\tau}{1-\alpha_\tau}\mathbb E[\epsilon_\tau].
\]
Thus, \(G\) is an affine, coordinatewise nondecreasing map. If, in addition,
\[
    \rho\left(
    \operatorname{diag}\left\{\frac{\beta_\tau}{1-\alpha_\tau}:\tau\in\Theta\right\}
    W
    \right)<1,
\]
then the fixed point is unique and the standard fixed-point iteration
\(a^{n+1}=G(a^n)\) converges globally.
\end{example}

The DeGroot example is linear, but the same monotonicity idea applies much more broadly.
If an agent's next-period belief is increasing in both her current belief and the social
signal she receives from the population, then a larger aggregate belief vector produces a
stochastically larger stationary belief distribution. Consequently, the map
\(G(a)=H(\pi_a)\) is increasing. This is exactly the structure needed for monotone
fixed-point iteration in  Proposition \ref{prop:monotone}. 
The following theorem makes this statement precise.

\begin{theorem}[Monotone computation for opinion dynamics]
\label{thm:opinion-dynamics}
Let \(Z=\mathbb R^m\), endowed with the coordinate-wise order. Let
\(\underline a,\overline a\in Z\) satisfy \(\underline a\le \overline a\), and define
\[
    K:=[\underline a,\overline a]
    :=
    \{a\in Z:\underline a\le a\le \overline a\}.
\]
Suppose $\underline a\le G(\underline a)$ and $G(\overline a)\le \overline a$.
Assume the following conditions hold.

\begin{enumerate}
\item[\textup{(i)}]
For every \(a\in K\), the frozen kernel \(P_a\) admits a unique stationary distribution
\(\pi_a\) among probability measures with type marginal \(\lambda\), and the frozen chain
converges weakly to \(\pi_a\) from any initial distribution with type marginal \(\lambda\).

\item[\textup{(ii)}]
The frozen kernels $P_a$ are weakly continuous on \(K\). In addition, the stationary distributions are
uniformly integrable in the belief coordinate:
\[
    \lim_{R\to\infty}
    \sup_{a\in K}
    \int_S |y|\,\mathbf 1\{|y|>R\}\,\pi_a(dy,d\theta)
    =0.
\]

\item[\textup{(iii)}]
For every type \(\tau\in\Theta\), the updating function is nondecreasing in the agent's own
belief and in the social signal. That is, for all \(k,\epsilon\in\mathbb R\),
$y\mapsto w_\tau(y,k,\epsilon)$
is nondecreasing, and for all \(y,\epsilon\in\mathbb R\),
$k\mapsto w_\tau(y,k,\epsilon)$
is nondecreasing.
\end{enumerate}

Then \(G\) is continuous and coordinatewise nondecreasing on \(K\), and \(G(K)\subseteq K\).
Consequently, the fixed-point iteration
with $a^0=\underline a$
converges monotonically to a fixed point \(a^*\in K\).
\end{theorem}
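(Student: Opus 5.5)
Monotonicity of $G$ comes from a monotone coupling of the frozen chains. Fix $a\le b$ in $K$. Then $Wa\le Wb$ coordinatewise because $W$ has nonnegative entries, so $S_\tau(a)\le S_\tau(b)$ for every type.

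We run two frozen chains from the same initial distribution with type marginal $\lambda$ (for instance, each type's belief equal to $0$). They use the same type and the same shocks $\varepsilon_{n+1}$:
\[
Y^a_{n+1}=w_\tau(Y^a_n,S_\tau(a),\varepsilon_{n+1}),\qquad Y^b_{n+1}=w_\tau(Y^b_n,S_\tau(b),\varepsilon_{n+1}).
\]
Induction on $n$ using (iii) shows $Y^a_n\le Y^b_n$ almost surely for all $n$. The inductive step is
\[
w_\tau(Y^a_n,S_\tau(a),\epsilon)\le w_\tau(Y^b_n,S_\tau(a),\epsilon)\le w_\tau(Y^b_n,S_\tau(b),\epsilon).
\]
Hence, for each type $\tau$, the law of $Y^a_n$ given type $\tau$ is stochastically dominated by the law of $Y^b_n$ given type $\tau$. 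By (i) both chains converge weakly to $\pi_a$ and $\pi_b$. Stochastic dominance of the type-conditional belief laws survives weak limits, since it is characterized by $P(Y>x)$ at continuity points, or equivalently by integrals of bounded nondecreasing continuous functions. So the conditional belief laws under $\pi_a$ are dominated by those under $\pi_b$.

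Integrability of $y$ under every $\pi_a$ follows from (ii). Then $H(\pi_a)_\tau\le H(\pi_b)_\tau$, because the mean is monotone under stochastic order; this can be checked by applying the order to truncations $y\wedge R\vee(-R)$ and letting $R\to\infty$ with uniform integrability. Therefore $G(a)\le G(b)$. With $\underline a\le G(\underline a)$ and $G(\overline a)\le\overline a$, any $a\in K$ satisfies
\[
\underline a\le G(\underline a)\le G(a)\le G(\overline a)\le\overline a,
\]
so $G(K)\subseteq K$.

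Continuity of $G$ is the step I expect to be the main obstacle. Take $a_k\to a$ in $K$. Since the $\pi_{a_k}$ are uniformly integrable, they are tight, and a subsequence converges weakly to some $\nu$ with type marginal $\lambda$. Weak continuity of $a\mapsto P_a$ means $\mu_k\Rightarrow\mu$ and $a_k\to a$ imply $\mu_kP_{a_k}\Rightarrow\mu P_a$. Passing to the limit in $\pi_{a_k}=\pi_{a_k}P_{a_k}$ then gives $\nu=\nu P_a$. Uniqueness in (i) gives $\nu=\pi_a$, so the whole sequence converges to $\pi_a$. The delicate point is that weak continuity of the kernels must be used jointly in $(\mu,a)$; I will read (ii) in that sense, or derive the joint form from continuity of $(y,a)\mapsto P_a((y,\tau),\cdot)$. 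Weak convergence together with uniform integrability of $|y|$ then yields convergence of the first moments of each type-conditional law, i.e.\ $H(\pi_{a_k})\to H(\pi_a)$, because $\lambda_\tau>0$ and the type set is finite.

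Finally, I apply Proposition~\ref{prop:monotone}. Alternatively, I argue directly. The sequence $a^{n+1}=G(a^n)$ with $a^0=\underline a$ is nondecreasing by induction from $\underline a\le G(\underline a)$ and monotonicity, and it stays in $K$. Being bounded and monotone coordinatewise in $\mathbb R^m$, it converges to some $a^*\in K$, since $K$ is closed. Continuity of $G$ then gives $a^*=\lim G(a^n)=G(a^*)$.
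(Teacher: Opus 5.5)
Your proposal is correct and follows essentially the same route as the paper. Monotonicity comes from stochastically ordering the $n$-step laws, passing to the weak limit, and reaching the type-level means by truncation. Continuity comes from tightness via uniform integrability, invariance of subsequential limits and uniqueness in (i); this is the paper's Lemma~\ref{lem:mu_continuous}, which handles the joint $(x,a)$ continuity through Serfozo's convergence theorem for varying measures, exactly the point you flag. The conclusion then follows from Proposition~\ref{prop:monotone}.

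The only variation is how you order $\nu_0P_u^n$ and $\nu_0P_v^n$. You use a common-shock pathwise coupling, whereas the paper uses the kernel-level induction $\nu_0P_u^n(P_ug)\le\nu_0P_v^n(P_ug)\le\nu_0P_v^n(P_vg)$. Your coupling spares you from checking that $P_ag$ is continuous, which the paper needs so that $P_ag$ is an admissible test function.
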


\section{Conclusion}
In this paper, we develop a computational framework for stationary equilibria in Markov systems whose transition laws depend on their own state distributions.
When this dependence operates through a finite-dimensional aggregate, freezing the aggregate at $a$ yields an ordinary Markov chain with stationary distribution $\pi_a$, reducing the equilibrium problem to the finite-dimensional self-consistency equation $a=G(a):=H(\pi_a)$. This reduction separates the steady-state analysis of the frozen system from the equilibrium computation and highlights how Markov-chain stability and the geometry of $G$ jointly determine an appropriate computational method.

When the fixed-point structure of $G$ is difficult to exploit directly, we instead formulate equilibrium computation as the residual-minimization problem 
$\min_a \frac12|a-G(a)|^2$.
To support this approach, we develop finite-time IPA estimators for $\nabla G(a)$, with error bounds that separate Monte Carlo error from finite-time bias and characterize the stability required of the associated sensitivity process.

We illustrate the framework through strategic $G/G/c$ queues and opinion dynamics. These applications demonstrate both the breadth of measure-dependent Markov models and the fact that different structural properties of the stationary response map naturally call for different equilibrium-computation methods.

\section*{Acknowledgments}

The research of  X.T.T. is supported by  Ministry of Education, Singapore, under the Academic Research Fund Tier 1 A-8002956-00-00.


\bibliographystyle{abbrvnat} 
\bibliography{nonlinear} 

\newpage

\appendix

\section{Algorithmic Summary for the Fixed Point Problem} \label{app:fixed_point_main}

Assume that \(G\) admits a fixed point \(a^\star\), except in parts \textup{(1)} and
\textup{(2)} below, where existence follows from the stated assumptions. 


\begin{enumerate}
\item \textbf{Contraction.}
If \(G\) is a contraction with modulus \(L\in[0,1)\), then \(a^\star\) is unique and Picard
iteration satisfies
\[
\|a_k-a^\star\|\le L^k\|a_0-a^\star\|.
\]
The relaxed iteration with constant damping $0<\delta\le 1$
also converges linearly, with factor \(1-\delta+\delta L\).
See Propositions ~\ref{prop:contraction} and ~\ref{prop:Lip}.

\item \textbf{Monotone order interval.}
Suppose \(K=[\underline a,\overline a]\subseteq A\subseteq \mathbb R^d\), \(G:K\to K\) is
continuous and coordinatewise nondecreasing, and
$\underline a\le G(\underline a)$, $G(\overline a)\le \overline a$.
Then Picard iteration started from \(\underline a\) converges monotonically to a fixed point
in \(K\). The same holds for the relaxed iteration with constant damping \(0<\delta\le 1\).
See Proposition~\ref{prop:monotone}.

\item \textbf{Nonexpansive.}
If \(G\) is nonexpansive, then the relaxed iteration with constant damping \(0<\delta<1\) satisfies
\[
\min_{0\le i\le k-1}\|a_i-G(a_i)\|^2
\le
\frac{1}{\delta(1-\delta)k}\|a_0-a^\star\|^2,
\]
i.e., the fixed-point residual decays at the ergodic rate \(O(1/k)\).
See  Proposition~\ref{prop:non-expansive}.

\item \textbf{Lipschitz with an incremental one-sided condition.}
Suppose \(G\) is \(L\)-Lipschitz on \(A\), and
\begin{equation}\label{eq:one_sided}
    \langle G(a)-G(b),a-b\rangle
    \le c\|a-b\|^2,
    \quad a,b\in Z, \mbox{ for some \(c<1\).}
\end{equation}
Then the relaxed iteration with constant damping
\(0<\delta\le 1\) converges linearly whenever
$(1-\delta)^2+2c\delta(1-\delta)+\delta^2L^2<1$.
See Proposition~\ref{prop:Lip}. When \(G\) is continuously
differentiable, the above incremental condition is implied by the Jacobian bound
\begin{equation}\label{eq:Jaco_bound}
    \frac{\nabla G(a)+\nabla G(a)^\top}{2}\preceq cI.
\end{equation}

\item \textbf{Log-Lipschitz with an incremental one-sided condition.}
Suppose 
\begin{equation}\label{eq:log_lip}
    \|G(a)-G(b)\|
    \le
    L\|a-b\|
    \left(1+\log_+\frac{B}{\|a-b\|}\right),
    \quad a\ne b.
\end{equation}
If \(G\) also satisfies the incremental one-sided condition \eqref{eq:one_sided},
then the relaxed iteration with decreasing damping
$\delta_k=\delta_0(k+1)^{-2/3}$,
for \(\delta_0\) sufficiently small, achieves a stretched-exponential bound 
\[
    \|a_k-a^\star\|^2
    \le
    C_0\exp\{-q\delta_0 k^{1/3}\}
\]
for a suitable constant \(q>0\).
See Proposition~\ref{prop:Log-Lip}.



\item \textbf{Gradient descent under PL geometry.} Suppose \(Z=\mathbb R^d\), \(G\in C^1(\mathbb R^d;\mathbb R^d)\). If \(\nabla \ell\) is \(L\)-Lipschitz continuous on \(\mathbb R^d\), and 
\(\ell\) satisfies the Polyak--\L ojasiewicz inequality
\begin{equation}\label{eq:PL_ineq}
\frac12\|\nabla \ell(a)\|^2\ge \mu \ell(a),
\quad a\in\mathbb R^d,
\mbox{ for some \(\mu>0\).} 
\end{equation}
Then the gradient-descent iteration with $0<\eta\leq 1/L$ yields
\[
\ell(a_k)\le (1-\mu\eta)^k \ell(a_0),
\quad k\ge 0.
\]
See Proposition~\ref{prop:gd_pl}.
Note that if $G$ is \(L\)-Lipschitz and satisfies \eqref{eq:Jaco_bound} with $c<1$, then \(\ell\) satisfies the
Polyak--\L ojasiewicz inequality with constant $\mu=(1-c)^2$.
\end{enumerate}

\subsection{Primitive conditions for regularity of the self-consistency map}
\label{sec:primitive-conditions}

The algorithmic summary above is stated in terms of the self-consistency map $G$.
We now provide simple sufficient conditions on the frozen kernels
\(\{P_a:a\in Z\}\) under which \(G\) is monotone or Lipschitz. These results
provide the link between the abstract convergence criteria for the outer algorithms and
the model primitives that appear in applications.

Recall that for each \(a\in Z\), \(P_a\) denote the one-step kernel of the frozen Markov
chain. Let \(P_a^n\) denote its \(n\)-step kernel. For a bounded measurable function
\(g:\mathcal S\to\mathbb R\), write
\[
(P_ag)(x):=\int_{\mathcal S} g(y)\,P_a(x,dy).
\]
For concreteness, we state the next results for aggregators of the form
\[
H(\mu)=\int_{\mathcal S} f(x)\,\mu(dx),
\]
although the same arguments extend to more general Lipschitz or order-preserving
aggregators. 

We first establish sufficient conditions for monotonicity. Suppose \(S\) and \(Z\subseteq \mathbb R^d\)
are partially ordered, where \(Z\) carries the coordinatewise order. We write
\(\mu\le_{st}\nu\) if
$\int_{\mathcal S} g(x)\,\mu(dx)\le \int_{\mathcal S} g(x)\,\nu(dx)$
for every bounded increasing measurable function \(g: S\to\mathbb R\).

\begin{assumption}[Increasing kernels]
\label{ass:increase}
For every bounded increasing measurable function \(g: S\to\mathbb R\),
the map \(x\mapsto (P_ag)(x)\) is increasing for each \(a\in Z\). In addition,
whenever \(a\le b\),
$(P_ag)(x)\le (P_bg)(x)$,for $x\in S$.
\end{assumption}

\begin{assumption}[Decreasing kernels]
\label{ass:decrease}
For every bounded increasing measurable function \(g: S\to\mathbb R\),
the map \(x\mapsto (P_ag)(x)\) is increasing for each \(a\in Z\). In addition,
whenever \(a\le b\),
$(P_ag)(x)\ge (P_bg)(x)$, for $x\in S$.
\end{assumption}

\begin{theorem}
\label{th:Monotone}
For
$H(\mu)=\int_{\mathcal S} f(x)\,\mu(dx)$,
suppose \(f:\mathcal S\to \mathbb R^d\) has increasing components.
\begin{enumerate}
\item Under Assumption~\ref{ass:increase}, \(G\) is increasing on \(Z\).
\item Under Assumption~\ref{ass:decrease}, \(G\) is decreasing on \(Z\).
\end{enumerate}
\end{theorem}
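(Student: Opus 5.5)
The plan is to prove part 1 by propagating stochastic order through iterates of the frozen kernels and then passing to the limit with Assumption~\ref{ass:stationarity}; part 2 follows by the same argument with the inequalities reversed.

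\textbf{Step 1 (order propagation).} Fix $a\le b$ in $Z$ and a starting point $x_0\in S$, and use the Dirac mass $\delta_{x_0}$ as the common initial law. I will show by induction on $n$ that $\delta_{x_0}P_a^n\le_{st}\delta_{x_0}P_b^n$. The case $n=0$ is trivial. For the induction step, take any bounded increasing measurable $g$ and write
\[
\int g\,d(\delta_{x_0}P_a^{n+1})=\int (P_ag)\,d(\delta_{x_0}P_a^n)\le \int (P_ag)\,d(\delta_{x_0}P_b^n)\le \int (P_bg)\,d(\delta_{x_0}P_b^n).
\]
The first inequality uses that $P_ag$ is bounded and increasing, by the first part of Assumption~\ref{ass:increase}, together with the induction hypothesis. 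The second uses the pointwise domination $P_ag\le P_bg$ for $a\le b$.

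\textbf{Step 2 (passage to stationarity).} By Assumption~\ref{ass:stationarity}, $\delta_{x_0}P_a^n\Rightarrow\pi_a$ and $\delta_{x_0}P_b^n\Rightarrow\pi_b$. I then need $\pi_a\le_{st}\pi_b$, i.e.\ the inequality $\int g\,d\pi_a\le\int g\,d\pi_b$ for every bounded increasing measurable $g$, not only continuous ones.

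This is the main obstacle: weak convergence only transfers the inequality for bounded \emph{continuous} increasing $g$, and measurable increasing functions need not be continuous. There are two possible routes.
\begin{itemize}
\item[(a)] Invoke the standard fact, valid on Polish spaces with a closed partial order (a Kamae--Krengel--O'Brien type result), that stochastic dominance is determined by bounded continuous increasing functions. Equivalently, the set $\{(\mu,\nu):\mu\le_{st}\nu\}$ is weakly closed. This is the route I would take first, stating the closed-order assumption on $S$ as implicit in the setting.
\item[(b)] Alternatively, for $S\subseteq\mathbb R^k$ with the coordinatewise order, approximate indicators of upper sets by continuous increasing functions, and use that $\pi_a$ and $\pi_b$ charge the boundaries of only countably many translates.
\end{itemize}

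\textbf{Step 3 (conclusion for $G$).} Each component $f_j$ is increasing, so $G_j(a)=\int f_j\,d\pi_a\le\int f_j\,d\pi_b=G_j(b)$. The functions $f_j$ need not be bounded, so I apply the order inequality to the truncations $(f_j\wedge M)\vee(-M)$, which are bounded and increasing. Then let $M\to\infty$ using monotone convergence, which is legitimate since $H(\pi_a)$ is assumed well defined and integrable.

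\textbf{Part 2.} Under Assumption~\ref{ass:decrease}, the same induction gives $\delta_{x_0}P_b^n\le_{st}\delta_{x_0}P_a^n$, and Steps 2 and 3 then yield $G(b)\le G(a)$.
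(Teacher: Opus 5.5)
Your proposal is correct and is essentially the paper's proof: the same induction $\nu_0P_a^n\le_{st}\nu_0P_b^n$ using both parts of Assumption~\ref{ass:increase}, then passage to the limit $n\to\infty$ and integration of the increasing components of $f$. The paper does the limit step in one line, so your explicit treatment of the weak closedness of $\le_{st}$ (which does need a closed partial order on $S$, e.g.\ via the Strassen/Kamae--Krengel--O'Brien coupling) and of unbounded $f$ adds rigor the paper omits. One small correction: the two-sided truncation is not monotone in $M$, so in Step 3 use dominated convergence instead, or apply monotone convergence to $f_j^{+}$ and $f_j^{-}$ separately.
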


We next establish some sufficient conditions for Lipschitz or log-Lipschitz continuity.

\begin{assumption}
\label{ass:continuous}
There exist constants \(\rho\in[0,1)\) and \(L_P\in(0,\infty)\), and a measurable function
\(V:\mathcal S\to [1,\infty)\), such that for all \(a,b\in Z\) and all
probability measures \(\mu,\nu\in\mathcal P(\mathcal S)\),
\[
W_d(\mu P_a,\nu P_a)\le \rho\, W_d(\mu,\nu)
\quad\mbox{ and }\quad
W_d(\mu P_a,\mu P_b)
\le
L_P\|a-b\|\int_{ S}V(x)\,\mu(dx).
\]
In addition,
$M:=\sup_{a\in Z}\int_{ S}V(x)\,\pi_a(dx)<\infty$.
\end{assumption}

\begin{theorem}
\label{th:Lipschitz}
Under Assumption~\ref{ass:continuous}, suppose
\(f:\mathcal S\to \mathbb R^d\) is \(L_f\)-Lipschitz. Then
\[
\|G(a)-G(b)\|
\le
\frac{L_f L_P M}{1-\rho}\,\|a-b\|,
\quad a,b\in Z.
\]
\end{theorem}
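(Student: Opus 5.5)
The argument is the standard stationary-perturbation bound. Write $W_d$ for the Wasserstein distance associated with the metric $d$ of Assumption~\ref{ass:continuous}; we read that assumption, as is standard, with $f$ Lipschitz with respect to the same $d$. The goal is the bound
\[
W_d(\pi_a,\pi_b)\le \frac{L_PM}{1-\rho}\|a-b\|,
\]
after which the conclusion for $G$ follows from Kantorovich--Rubinstein duality. Indeed, each component $f_j$ of an $L_f$-Lipschitz vector map is $L_f$-Lipschitz, so $|\int f_j\,d\pi_a-\int f_j\,d\pi_b|\le L_f W_d(\pi_a,\pi_b)$ for each $j$. To obtain the Euclidean norm of the vector difference with constant $L_f$ rather than $\sqrt d\,L_f$, take an optimal coupling $(X,Y)$ of $(\pi_a,\pi_b)$. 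Then
\[
\|G(a)-G(b)\|=\|\mathbb E[f(X)-f(Y)]\|\le \mathbb E\|f(X)-f(Y)\|\le L_f\,\mathbb E\, d(X,Y)=L_fW_d(\pi_a,\pi_b).
\]
Optimal couplings exist on Polish spaces; if one prefers to avoid that, use $\varepsilon$-optimal couplings and let $\varepsilon\to0$.

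For the key step, use the stationarity identities $\pi_a=\pi_aP_a$ and $\pi_b=\pi_bP_b$ and insert the intermediate measure $\pi_bP_a$:
\[
W_d(\pi_a,\pi_b)=W_d(\pi_aP_a,\pi_bP_b)\le W_d(\pi_aP_a,\pi_bP_a)+W_d(\pi_bP_a,\pi_bP_b).
\]
The first term is at most $\rho W_d(\pi_a,\pi_b)$ by the contraction part of Assumption~\ref{ass:continuous}. The second is at most $L_P\|a-b\|\int V\,d\pi_b\le L_PM\|a-b\|$ by the perturbation part of the assumption together with the uniform moment bound $M$. Rearranging gives the displayed bound, and hence the theorem.

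The main obstacle is that rearranging requires $W_d(\pi_a,\pi_b)<\infty$, otherwise we would be subtracting infinities. I would handle this without imposing first moments by iterating instead. Start the chain $P_a$ from $\pi_b$: by Assumption~\ref{ass:continuous} and $\pi_bP_b=\pi_b$,
\[
W_d(\pi_bP_a^{n+1},\pi_b)\le \rho W_d(\pi_bP_a^n,\pi_b)+L_PM\|a-b\|.
\]
By induction from $W_d(\pi_b,\pi_b)=0$, this gives $W_d(\pi_bP_a^n,\pi_b)\le L_PM\|a-b\|/(1-\rho)$ for every $n$. By Assumption~\ref{ass:stationarity}, $\pi_bP_a^n\Rightarrow\pi_a$. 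Lower semicontinuity of $W_d$ under weak convergence, for a lower semicontinuous metric $d$, then transfers the bound to $W_d(\pi_a,\pi_b)$. This route never needs a priori finiteness, so I would present the proof this way.
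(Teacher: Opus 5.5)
Your proposal is correct and follows the paper's proof. It uses the same intermediate measure $\pi_bP_a$, the contraction and perturbation bounds of Assumption~\ref{ass:continuous} with $\int V\,d\pi_b\le M$, and the same coupling argument for $\|G(a)-G(b)\|\le L_fW_d(\pi_a,\pi_b)$. The paper rearranges $W_d(\pi_a,\pi_b)\le\rho W_d(\pi_a,\pi_b)+L_PM\|a-b\|$ directly, implicitly assuming $W_d(\pi_a,\pi_b)<\infty$. Your iteration from $\pi_b$, followed by lower semicontinuity of $W_d$, closes that small gap; it uses Assumption~\ref{ass:stationarity} and lower semicontinuity of $d$, which Assumption~\ref{ass:continuous} does not state but which holds if $d$ metrizes the Polish topology of the state space.
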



The preceding result obtains Lipschitz continuity of \(G\) when the frozen kernels are uniformly contractive. The next result gives a log-Lipschitz
conclusion when the frozen kernels are only nonexpansive but converge geometrically to stationarity.

\begin{assumption}\label{ass:continuous2}
Let \(d\) be a bounded lower semicontinuous metric on \(S\). 
\begin{enumerate}
\item Uniform geometric convergence: there exist \(K_0<\infty\) and \(\rho\in(0,1)\)
such that
\[
W_d(\pi_b P_a^n,\pi_a)\le K_0\rho^n,
\quad a,b\in Z,\ n\ge 0.
\]
\item One-step parameter Lipschitzness: there exists \(L_0\in(0,\infty)\) such that
\[
W_d(\delta_xP_a,\delta_xP_b)\le L_0\|a-b\|,
\quad x\in S, ~ a,b\in Z.
\]
\item Nonexpansiveness of the frozen kernel:
\[
W_d(\mu P_a,\nu P_a)\le W_d(\mu,\nu),
\quad \mu,\nu\in\mathcal P(S).
\]
\end{enumerate}
\end{assumption}

\begin{theorem} \label{th:log-Lipschitz}
Under Assumption \ref{ass:continuous2}, with \(\kappa=-\log\rho\), for any $a,b\in Z$, $a\neq b$,
\[
W_d(\pi_a,\pi_b)
\le
L_0\left(1+\frac1\kappa\right)
\|a-b\|
\left(1+\log_+\frac{K_0\kappa/L_0}{\|a-b\|}\right).
\]
If, in addition, \(H\) is \(L_H\)-Lipschitz with respect to \(W_d\), i.e.,
$\|H(\mu)-H(\nu)\|
    \le
    L_H W_d(\mu,\nu)$,
for $\mu,\nu\in\mathcal P(S)$,
then the self-consistency map \(G(a)=H(\pi_a)\) satisfies
\[
    \|G(a)-G(b)\|
    \le
    L_H L_0\left(1+\frac1\kappa\right)\|a-b\|
    \left(
        1+\log_+\frac{K_0\kappa/L_0}{\|a-b\|}
    \right),
\]
i.e., \(G\) is log-Lipschitz on \(Z\).
\end{theorem}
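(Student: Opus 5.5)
The plan is a telescoping (perturbation) argument that splits the gap between $\pi_a$ and $\pi_b$ into a short-horizon part, controlled by parameter Lipschitzness, and a long-horizon part, controlled by geometric convergence; we then optimize over the split point $n$.

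Fix $a\ne b$ and write $\varepsilon:=\|a-b\|$. Since $\pi_b=\pi_bP_b^n$ for every $n\ge0$, the triangle inequality gives
\[
W_d(\pi_a,\pi_b)\le W_d(\pi_a,\pi_bP_a^n)+W_d(\pi_bP_a^n,\pi_bP_b^n).
\]
The first term is at most $K_0\rho^n$ by Assumption~\ref{ass:continuous2}(1).

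For the second term we telescope:
\[
W_d(\pi_bP_a^n,\pi_bP_b^n)\le\sum_{k=0}^{n-1}W_d\bigl(\pi_bP_b^{k}P_aP_a^{n-1-k},\ \pi_bP_b^{k}P_bP_a^{n-1-k}\bigr).
\]
By nonexpansiveness (Assumption~\ref{ass:continuous2}(3)), applied $n-1-k$ times, the $k$th summand is at most $W_d(\nu P_a,\nu P_b)$ with $\nu=\pi_bP_b^k=\pi_b$.

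It remains to bound $W_d(\nu P_a,\nu P_b)$ by $L_0\varepsilon$ using the pointwise bound in Assumption~\ref{ass:continuous2}(2). This is the one step requiring care: we need convexity of $W_d$ under mixtures. Choose, for each $x$, an optimal coupling of $\delta_xP_a$ and $\delta_xP_b$, and glue these together via a measurable selection. Optimal couplings exist and can be selected measurably because $d$ is lower semicontinuous and bounded on a Polish space; see Villani's measurable-selection result. Alternatively, invoke Kantorovich duality with lower-semicontinuous cost, which makes $\mu\mapsto W_d(\mu P_a,\mu P_b)$ convex (indeed $\le\int W_d(\delta_xP_a,\delta_xP_b)\,\nu(dx)$). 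Either route gives $W_d(\nu P_a,\nu P_b)\le L_0\varepsilon$, hence $W_d(\pi_a,\pi_b)\le nL_0\varepsilon+K_0\rho^n$ for all integers $n\ge0$.

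Finally we optimize over $n$. With $\kappa=-\log\rho$, the bound reads $nL_0\varepsilon+K_0e^{-\kappa n}$.
\begin{itemize}
\item[] \emph{Case $\varepsilon\ge K_0\kappa/L_0$.} Take $n=0$. Then the bound is $K_0\le L_0\varepsilon/\kappa\le L_0(1+1/\kappa)\varepsilon$, since $\log_+=0$ in this case.
\item[] \emph{Case $\varepsilon< K_0\kappa/L_0$.} Take $n=\lceil \kappa^{-1}\log(K_0\kappa/(L_0\varepsilon))\rceil$. Then $K_0e^{-\kappa n}\le L_0\varepsilon/\kappa$, and $nL_0\varepsilon\le L_0\varepsilon\bigl(1+\kappa^{-1}\log\frac{K_0\kappa/L_0}{\varepsilon}\bigr)$.
\end{itemize}
Summing the two pieces in the second case gives $L_0\varepsilon\bigl(1+1/\kappa+\kappa^{-1}\log(\cdot)\bigr)$. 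This is at most $L_0(1+1/\kappa)\varepsilon(1+\log_+(\cdot))$, because $\kappa^{-1}\log\le(1+1/\kappa)\log$ and $1+1/\kappa\le(1+1/\kappa)(1+\log_+)$.

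The bound on $G$ then follows immediately by applying $\|H(\pi_a)-H(\pi_b)\|\le L_HW_d(\pi_a,\pi_b)$.
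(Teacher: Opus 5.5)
Your proposal is correct and follows the paper's proof step for step: the same split through $\pi_bP_a^n$, the same telescoping plus nonexpansiveness giving $W_d(\pi_a,\pi_b)\le nL_0\|a-b\|+K_0e^{-\kappa n}$, and the same choice $n=0$ or $n=\lceil\kappa^{-1}\log R\rceil$ with $R=(K_0\kappa/L_0)/\|a-b\|$. You also justify $W_d(\nu P_a,\nu P_b)\le\int W_d(\delta_xP_a,\delta_xP_b)\,\nu(dx)$, which the paper only asserts, and your duality route is the one that applies directly to a lower semicontinuous $d$; in the last step, the two inequalities you cite cannot simply be added, and the clean reason is the expansion $(1+\frac1\kappa)(1+\log R)=(1+\frac1\kappa)+(1+\frac1\kappa)\log R\ge 1+\frac1\kappa+\frac1\kappa\log R$.
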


\begin{remark}[Bounded versus unbounded metrics]
\label{rem:bunb}
The boundedness of \(d\) in Assumption \ref{ass:continuous2} is imposed mainly for technical convenience. It ensures that
\(W_d\) is finite for all probability measures and avoids imposing moment conditions on
the frozen stationary distributions. The proof of Theorem~\ref{th:log-Lipschitz}
uses only the three inequalities in Assumption~\ref{ass:continuous2} and therefore extends
without change to an unbounded lower semicontinuous metric \(d\), provided all measures
under consideration belong to the finite-\(d\)-moment class
\[
    \mathcal P_d(S)
    :=
    \left\{
        \mu\in\mathcal P(S):
        \int_S d(x,x_0)\,\mu(dx)<\infty
    \right\}
\]
for some, and hence every, \(x_0\in S\), and provided Assumption~\ref{ass:continuous2}
holds on this class.
\end{remark}

\subsection{Convergence Analyses for Fixed Point Algorithms} \label{app:fixed_point}

\begin{proposition}[Contraction]\label{prop:contraction}
Suppose $G$ is a contraction, i.e., $\|G(a)-G(\tilde a)\|\leq L\|a-\tilde a\|$ for some $L\in[0,1)$ and $a^\star$ is a fixed point of $G$. For the fixed-point iteration,
\[
\|a_k-a^{\star}\|\leq L^k\|a_0-a^{\star}\|.
\]
\end{proposition}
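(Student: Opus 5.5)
The plan is a direct induction on $k$ that uses only the fixed-point identity $a^\star=G(a^\star)$ and the contraction inequality. The Picard iterates are $a_{k+1}=G(a_k)$. Subtracting $a^\star=G(a^\star)$ and applying the Lipschitz bound with $\tilde a=a^\star$ gives the one-step estimate
\[
\|a_{k+1}-a^\star\|=\|G(a_k)-G(a^\star)\|\le L\|a_k-a^\star\|.
\]

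The base case $k=0$ is the trivial equality $\|a_0-a^\star\|=L^0\|a_0-a^\star\|$. For the inductive step, assume $\|a_k-a^\star\|\le L^k\|a_0-a^\star\|$. The one-step estimate together with $L\ge 0$ then yields $\|a_{k+1}-a^\star\|\le L^{k+1}\|a_0-a^\star\|$, which closes the induction.

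There is no genuine obstacle. The one point worth noting is well-definedness: the iteration requires $G(a_k)\in A$ for every $k$. This holds because $G:A\to A$, as assumed throughout Section~\ref{sec:fixed_point}, so every iterate lies in the domain where the contraction inequality applies.

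As a side remark, uniqueness of $a^\star$ follows the same way. If $b^\star$ were another fixed point, then $\|a^\star-b^\star\|\le L\|a^\star-b^\star\|$ with $L<1$, which forces $a^\star=b^\star$. This justifies the uniqueness claim stated in the summary in Appendix~\ref{app:fixed_point_main}.
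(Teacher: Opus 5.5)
Your proof is correct and follows the paper's argument: the paper applies the same one-step bound $\|a_k-a^\star\|=\|G(a_{k-1})-G(a^\star)\|\le L\|a_{k-1}-a^\star\|$ and iterates it, and your induction only makes that iteration explicit. Your uniqueness remark is also correct, and it supplies the justification for the uniqueness claim in the appendix summary, which the paper's proof of this proposition does not address.
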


\proof{Proof.}
For the fixed-point iteration, since
\[
\|a_k-a^{\star}\| = \|G(a_{k-1})-G(a^{\star})\| \leq L\|a_{k-1}-a^{\star}\|
\]
we have $\|a_k-a^{\star}\|\leq L^k\|a_0-a^{\star}\|$.
\Halmos
\endproof

\begin{proposition}[Monotone convergence on an order interval] \label{prop:monotone}
Suppose $A\subseteq \mathbb R^d$ is endowed with the coordinatewise order, and let
$G:A \to A$ be continuous and coordinatewise nondecreasing. In addition, suppose we start with an $a_0$ such that $a_0\leq G(a_0)$, and there exists $b\geq a_0$ such that $G(b)\leq b$ and the order interval
$[a_0,b]:=\{a\in\mathbb R^d:\ a_0\le a\le b\}$ is contained in $A$. For the fixed-point iteration and the relaxed fixed-point iteration with $\delta_k\equiv\delta$ and $0<\delta\leq 1$, we have that there exists $a^{\star}$ with $G(a^{\star})=a^{\star}$ such that
$\lim_{k\rightarrow\infty} a_k = a^{\star}$.
\end{proposition}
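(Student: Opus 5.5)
The plan is a standard Kleene-type monotone iteration argument, run inside the order interval \([a_0,b]\).

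\textbf{Step 1: invariance of the interval.} For \(a\in[a_0,b]\), monotonicity gives \(a_0\le G(a_0)\le G(a)\le G(b)\le b\), so \(G\) maps \([a_0,b]\) into itself. For the relaxed iteration with \(0<\delta\le1\), the update \((1-\delta)a+\delta G(a)\) is a convex combination of two points of the box \([a_0,b]\). The box is coordinatewise convex, so it is also preserved. Hence every iterate stays in \([a_0,b]\subseteq A\), and \(G\) is always evaluated on its domain.

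\textbf{Step 2: monotonicity of the iterates, by induction.} For Picard iteration, suppose \(a_{k-1}\le a_k\). Then \(a_k=G(a_{k-1})\le G(a_k)=a_{k+1}\); the base case is \(a_0\le G(a_0)=a_1\). For the relaxed iteration I would carry the stronger induction hypothesis \(a_k\le G(a_k)\) together with \(a_k\le a_{k+1}\).
\begin{itemize}
\item If \(a_k\le G(a_k)\), then \(a_{k+1}=a_k+\delta(G(a_k)-a_k)\ge a_k\).
\item Also \(a_{k+1}\le G(a_k)\), since it is a convex combination of \(a_k\le G(a_k)\) and \(G(a_k)\) itself.
\item Monotonicity then gives \(G(a_k)\le G(a_{k+1})\), so \(a_{k+1}\le G(a_{k+1})\), which closes the induction.
\end{itemize}
The only delicate point is choosing this combined hypothesis. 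Monotonicity of the sequence alone does not propagate for the relaxed scheme; one needs the sub-fixed-point property \(a_k\le G(a_k)\) as well.

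\textbf{Step 3: passage to the limit.} The iterates are coordinatewise nondecreasing and bounded above by \(b\). Each coordinate is therefore a bounded monotone real sequence, so \(a_k\to a^\star\) for some \(a^\star\in[a_0,b]\) (the box is closed). Continuity of \(G\) then identifies the limit.
\begin{itemize}
\item For Picard iteration, letting \(k\to\infty\) in \(a_{k+1}=G(a_k)\) gives \(a^\star=G(a^\star)\).
\item For the relaxed iteration, the limit gives \(a^\star=(1-\delta)a^\star+\delta G(a^\star)\). Since \(\delta>0\), this again yields \(G(a^\star)=a^\star\).
\end{itemize}

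None of these steps is a genuine obstacle; the proof is a Kleene–Tarski-type argument.
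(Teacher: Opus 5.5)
Your proof is correct and has the same skeleton as the paper's:
\begin{itemize}
\item invariance of $[a_0,b]$ under both $G$ and the relaxed map;
\item monotonicity of the iterates by induction;
\item coordinatewise monotone convergence plus continuity, with $\delta>0$ used to turn a fixed point of the relaxed map into one of $G$.
\end{itemize}

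The only difference is in Step 2, and there your remark that monotonicity of the sequence alone does not propagate for the relaxed scheme is mistaken. The paper sets $T_\delta(a):=(1-\delta)a+\delta G(a)$ and observes that $T_\delta$ is itself coordinatewise nondecreasing because $1-\delta\ge 0$. It then runs exactly your Picard induction, $a_{k-1}\le a_k\Rightarrow a_k=T_\delta(a_{k-1})\le T_\delta(a_k)=a_{k+1}$, with Picard as the case $\delta=1$.

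Your invariant $a_k\le G(a_k)$ is also not really stronger. Since $a_{k+1}-a_k=\delta\bigl(G(a_k)-a_k\bigr)$ with $\delta>0$, it is equivalent to $a_k\le a_{k+1}$.

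What your formulation does buy is that it never uses the fact that the same map is applied at every step. It therefore extends verbatim to time-varying $\delta_k\in(0,1]$, although identifying the limit as a fixed point would then need a condition such as $\sum_k\delta_k=\infty$. The paper's one-line induction would instead require an extra comparison of $T_{\delta_{k-1}}$ and $T_{\delta_k}$.
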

\proof{Proof.}
Fix $0<\delta\le1$, and define
$T_\delta(a):=(1-\delta)a+\delta G(a)$, $a\in[a_0,b]$. Note that when $\delta=1$, we have $T_\delta(a)=G(a)$.
Since $G$ is coordinatewise nondecreasing, and addition and multiplication by
nonnegative scalars preserve the coordinatewise order, $T_\delta$ is also
coordinatewise nondecreasing.

Next, for any $a\in[a_0,b]$, monotonicity of $G$ gives
\[
a_0\le G(a_0)\le G(a)\le G(b)\le b,
\]
so $G(a)\in[a_0,b]$. Because $T_\delta(a)$ is a
convex combination of $a$ and $G(a)$, both of which belong to $[a_0,b]$, $T_\delta(a)\in[a_0,b]$ as well. 

We now show by induction that
$a_0\le a_1\le a_2\le \cdots \le b$.
Assume $a_{k-1}\le a_k\le b$. Since $T_\delta$ is nondecreasing,
\[
a_k=T_\delta(a_{k-1})\le T_\delta(a_k)=a_{k+1}.
\]
Also, as noted above, $a_k\in[a_0,b]$ implies $a_{k+1}=T_\delta(a_k)\in[a_0,b]$, so
$a_{k+1}\le b$. Thus $\{a_k\}_{k\ge0}$ is coordinatewise nondecreasing and bounded
above by $b$.

Because $\mathbb R^d$ is finite-dimensional, each coordinate sequence
$\{(a_k)_i\}_{k\ge0}$ is increasing and bounded above, hence convergent. Therefore
$a_k\to a^\star$ for some $a^\star\in[a_0,b]$.

Since $T_\delta$ is continuous,
\[
a^\star=\lim_{k\to\infty}a_{k+1}
=\lim_{k\to\infty}T_\delta(a_k)
=T_\delta\!\left(\lim_{k\to\infty}a_k\right)
=T_\delta(a^\star).
\]
Thus $a^\star$ is a fixed point of $T_\delta$. Because $\delta>0$, $T_\delta(a^\star)=a^\star$ implies $G(a^\star)=a^\star$.
\Halmos
\endproof

\begin{proposition}[Nonexpansive] \label{prop:non-expansive}
Suppose $A\subseteq \mathbb R^d$ is convex, $G: A\to A$ is
nonexpansive, that is, $\|G(a)-G(\tilde a)\|\leq \|a-\tilde a\|$, and $a^\star$ is a fixed point of $G$. For the relaxed fixed-point iteration with $\delta_k\equiv\delta$ and $0<\delta<1$,
\[
\min_{0\leq k\leq N-1} \|a_k - G(a_k)\|^2 \leq \frac{1}{\delta(1-\delta)N}\|a_0-a^{\star}\|^2.
\]
\end{proposition}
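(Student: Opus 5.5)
The plan is the standard Krasnosel'skii--Mann argument, run in the Euclidean geometry of $\mathbb R^d$. Write $T_\delta(a)=(1-\delta)a+\delta G(a)$. Convexity of $A$ and $G(A)\subseteq A$ guarantee that every iterate $a_{k+1}=T_\delta(a_k)$ stays in $A$, so $G(a_k)$ is always defined.

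The main step is an exact one-step identity for the distance to $a^\star$. Use $a^\star=G(a^\star)$ and the convex-combination identity
\[
\|(1-\delta)u+\delta v\|^2=(1-\delta)\|u\|^2+\delta\|v\|^2-\delta(1-\delta)\|u-v\|^2,
\]
with $u=a_k-a^\star$ and $v=G(a_k)-G(a^\star)$. Since $u-v=a_k-G(a_k)$, this gives
\[
\|a_{k+1}-a^\star\|^2=(1-\delta)\|a_k-a^\star\|^2+\delta\|G(a_k)-G(a^\star)\|^2-\delta(1-\delta)\|a_k-G(a_k)\|^2 .
\]
Nonexpansiveness bounds the middle term by $\delta\|a_k-a^\star\|^2$. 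Hence
\[
\|a_{k+1}-a^\star\|^2\le\|a_k-a^\star\|^2-\delta(1-\delta)\|a_k-G(a_k)\|^2 .
\]
This is Fejér monotonicity with an explicit decrease term. It is the only place where the hypotheses are really used, and the identity itself is a one-line expansion of the squared norm.

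Finally, sum the inequality for $k=0,\dots,N-1$. The sum telescopes, and after dropping the nonnegative term $\|a_N-a^\star\|^2$ we get
\[
\delta(1-\delta)\sum_{k=0}^{N-1}\|a_k-G(a_k)\|^2\le\|a_0-a^\star\|^2 .
\]
Since $0<\delta<1$, the factor $\delta(1-\delta)$ is strictly positive. Bounding the minimum by the average, $\min_{0\le k\le N-1}\|a_k-G(a_k)\|^2\le\frac1N\sum_{k=0}^{N-1}\|a_k-G(a_k)\|^2$, then gives the claimed rate. I expect no real obstacle. The only care needed is to use the exact convex-combination identity rather than the triangle inequality, because the triangle inequality would lose the negative residual term that drives the bound.
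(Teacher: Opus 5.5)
Your proof is correct and follows the paper's argument essentially step for step. Both use the exact convex-combination identity with $u=a_k-a^\star$ and $v=G(a_k)-G(a^\star)$, apply nonexpansiveness to get the Fej\'er-type decrease, then telescope and bound the minimum by the average; your remark that convexity of $A$ keeps the iterates in $A$ is a small point the paper leaves implicit.
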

\proof{Proof.}
Since $a^{\star}$ is a fixed point of $G$, for the relaxed fixed-point iteration, we have
\[\begin{split}
\|a_{k+1}-a^{\star}\|^2 &= \|(1-\delta)(a_k-a^{\star}) + \delta(G(a_k)-a^{\star})\|^2\\
&=(1-\delta)\|a_{k}-a^{\star}\|^2 + \delta\|G(a_k)-G(a^{\star})\|^2 - \delta(1-\delta)\|a_k - G(a_k)\|^2\\
&\leq \|a_k-a^{\star}\|^2-\delta(1-\delta)\|a_k - G(a_k)\|^2
\end{split}\]
Sum over $k$, we have
\[
\sum_{k=0}^{N-1}\delta(1-\delta)\|a_k - G(a_k)\|^2\leq \|a_0-a^{\star}\|^2.
\]
Then,
\[
\min_{0\leq k\leq N-1} \|a_k - G(a_k)\|^2 \leq \frac{1}{N}\sum_{k=0}^{N-1}\|a_k - G(a_k)\|^2 \leq \frac{1}{\delta(1-\delta)N}\|a_0-a^{\star}\|^2.
\]
\Halmos
\endproof

We note from Proposition \ref{prop:non-expansive} that setting $\delta=1/2$ maximizes the decaying rate. 

For the following proposition, we focus on $L>1$. Otherwise, we have a contraction.

\begin{proposition}[Lipschitz] \label{prop:Lip}
Let \(A\subseteq \mathbb R^d\) be nonempty, closed, and convex, and let
\(G:A\to A\). Suppose that \(G\) admits a fixed point \(a^\star\in A\).
Assume that \(G\) is \(L\)-Lipschitz on \(A\), i.e., $\|G(a)-G(\tilde a)\|\leq L\|a-\tilde a\|$,
and satisfies the incremental one-sided condition \eqref{eq:one_sided}.
For the relaxed fixed-point iteration with constant damping \(0<\delta\le 1\), if
$q_{\delta}:=(1-\delta)^2+2c\delta(1-\delta)+\delta^2L^2<1$,
then
\[
    \|a_k-a^\star\|^2
    \le q_\delta^k\|a_0-a^\star\|^2,
    \quad k\ge 0.
\]
In particular, \(a_k\to a^\star\). Moreover, the fixed point is unique.
\end{proposition}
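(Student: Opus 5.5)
The plan is to expand the squared distance to the fixed point after one relaxed step, exactly as in the proof of Proposition~\ref{prop:non-expansive}, but now bounding the cross term through the incremental one-sided condition \eqref{eq:one_sided} rather than through nonexpansiveness. Feasibility comes first: since \(A\) is convex and \(G(A)\subseteq A\), every iterate \(a_{k+1}=(1-\delta)a_k+\delta G(a_k)\) stays in \(A\), so \(G\) may be applied at every step.

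Write \(e_k:=a_k-a^\star\) and \(g_k:=G(a_k)-G(a^\star)=G(a_k)-a^\star\). Then \(e_{k+1}=(1-\delta)e_k+\delta g_k\), and expanding gives
\[
\|e_{k+1}\|^2=(1-\delta)^2\|e_k\|^2+2\delta(1-\delta)\langle g_k,e_k\rangle+\delta^2\|g_k\|^2 .
\]
Because \(\delta\in(0,1]\), the coefficient \(2\delta(1-\delta)\) is nonnegative. We can therefore apply \eqref{eq:one_sided} with \(a=a_k\), \(b=a^\star\) to get \(\langle g_k,e_k\rangle\le c\|e_k\|^2\), and use the Lipschitz bound \(\|g_k\|\le L\|e_k\|\) on the last term. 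This yields \(\|e_{k+1}\|^2\le q_\delta\|e_k\|^2\), and induction gives \(\|a_k-a^\star\|^2\le q_\delta^k\|a_0-a^\star\|^2\). Since \(q_\delta<1\), it follows that \(a_k\to a^\star\).

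For uniqueness, suppose \(b^\star\) is another fixed point. Apply \eqref{eq:one_sided} to the pair \(a^\star,b^\star\):
\[
\|a^\star-b^\star\|^2=\langle G(a^\star)-G(b^\star),a^\star-b^\star\rangle\le c\|a^\star-b^\star\|^2 .
\]
Since \(c<1\), this forces \(a^\star=b^\star\). Alternatively, running the iteration from \(a_0=b^\star\) keeps it fixed at \(b^\star\), while the contraction estimate sends it to \(a^\star\), which again gives \(a^\star=b^\star\).

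There is no real obstacle in this argument. The only points needing care are that the sign of the cross-term coefficient permits the one-sided bound (this is where \(\delta\le1\) is used), and that \(c\) may be negative. A negative \(c\) causes no difficulty, since the inequality is used only in the direction stated.
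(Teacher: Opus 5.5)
Your proposal is correct and follows essentially the same route as the paper's proof. Both expand $\|a_{k+1}-a^\star\|^2$, bound the cross term by the one-sided condition and the quadratic term by the Lipschitz bound to obtain the factor $q_\delta$, and prove uniqueness by applying \eqref{eq:one_sided} to two fixed points with $c<1$. Your added checks (iterates remain in $A$ by convexity, and $2\delta(1-\delta)\ge 0$ justifies using the one-sided bound) are correct details that the paper leaves implicit.
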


\proof{Proof.}
Since $G(a^{\star})=a^{\star}$,
\[
    a_{k+1}-a^\star
    =
    (1-\delta)(a_k-a^\star)
    +
    \delta\bigl(G(a_k)-G(a^\star)\bigr).
\]
Expanding the squared norm gives
\[
\begin{aligned}
    \|a_{k+1}-a^\star\|^2
    &=
    (1-\delta)^2\|a_k-a^\star\|^2
    +\delta^2\|G(a_k)-G(a^\star)\|^2  \\
    &\quad
    +2\delta(1-\delta)
    \langle a_k-a^\star, G(a_k)-G(a^\star)\rangle .
\end{aligned}
\]
By the Lipschitz condition and the one-sided condition,
\[
    \|G(a_k)-G(a^\star)\|^2
    \le L^2\|a_k-a^\star\|^2
\quad\mbox{ and }\quad
    \langle a_k-a^\star, G(a_k)-G(a^\star)\rangle
    \le c\|a_k-a^\star\|^2.
\]
Therefore, for $q_\delta=(1-\delta)^2+2c\delta(1-\delta)+\delta^2L^2$,
\[
    \|a_{k+1}-a^\star\|^2
    \le
    q_\delta \|a_k-a^\star\|^2.
\]
Iterating this inequality yields
$\|a_k-a^\star\|^2\le q_\delta^k\|a_0-a^\star\|^2$.

It remains to show uniqueness. Suppose \(a^\star\) and \(\bar a\) are both fixed points.
Then, by \eqref{eq:one_sided}
\[
    \|a^\star-\bar a\|^2
    =
    \langle G(a^\star)-G(\bar a),a^\star-\bar a\rangle
    \le
    c\|a^\star-\bar a\|^2.
\]
Since \(c<1\), this implies \(\|a^\star-\bar a\|=0\). 
\Halmos
\endproof





\begin{proposition}[Log-Lipschitz]
\label{prop:Log-Lip}
Let \(A\subseteq \mathbb R^d\) be nonempty, closed, and convex, and let
\(G:A\to A\). Suppose that \(G\) admits a fixed point \(a^\star\in A\).
Assume that \(G\) satisfies the incremental one-sided condition \eqref{eq:one_sided}.
Assume moreover that there exist constants \(L<\infty\)
and \(B\ge 1\) such that the log-Lipschitz condition \eqref{eq:log_lip} holds.
Fix any $q\in \bigl(0,\min\{1,1-c\}\bigr)$.
Then there exists \(\bar\delta>0\) such that, for every
\(0<\delta_0\le \bar\delta\), the relaxed fixed-point iteration with $\delta_k=\delta_0(k+1)^{-2/3}$
satisfies
\[
    \|a_k-a^\star\|^2
    \le
    (\|a_0-a^\star\|^2\vee B^2\vee 1)\exp\{-q\delta_0 k^{1/3}\},
    \quad k\ge 0.
\]
In particular, \(a_k\to a^\star\). Moreover, the fixed point is unique.
\end{proposition}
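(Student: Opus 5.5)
The plan is a direct induction on the squared error $e_k:=\|a_k-a^\star\|^2$, with $r_k:=\|a_k-a^\star\|$, $\delta_k=\delta_0(k+1)^{-2/3}$, $D:=\|a_0-a^\star\|^2\vee B^2\vee 1$ and target envelope $s_k^2:=D\exp\{-q\delta_0k^{1/3}\}$. I will show $e_k\le s_k^2$ for all $k$. The base case $k=0$ holds because $D\ge e_0$. Uniqueness of the fixed point follows exactly as in Proposition~\ref{prop:Lip}, since that argument used only \eqref{eq:one_sided} with $c<1$.

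\textbf{One-step recursion.} As in the proof of Proposition~\ref{prop:Lip}, expanding $a_{k+1}-a^\star=(1-\delta_k)(a_k-a^\star)+\delta_k(G(a_k)-G(a^\star))$ and applying \eqref{eq:one_sided} to the cross term gives
\[
e_{k+1}\le \bigl[(1-\delta_k)^2+2c\delta_k(1-\delta_k)\bigr]e_k+\delta_k^2L^2 e_k\Bigl(1+\log_+\tfrac{B}{r_k}\Bigr)^2 .
\]
The first bracket is at most $1-2(1-c)\delta_k+\delta_k^2$. The difficulty is the logarithmic factor, which blows up as $r_k\to0$. I therefore split according to whether $r_k$ is comparable to the envelope $s_k$ or much smaller. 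To do this I fix a large constant $M$ and compare $r_k$ with $s_ke^{-M}$.

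\textbf{Case A: $r_k\ge s_ke^{-M}$.} Since $D\ge B^2$, we have $\log(B/r_k)\le M+\tfrac12 q\delta_0k^{1/3}$. Hence
\[
\delta_k^2L^2(1+\log_+(B/r_k))^2\le \delta_k\cdot \delta_0L^2(k+1)^{-2/3}\bigl(1+M+\delta_0k^{1/3}\bigr)^2\le \delta_k\cdot C_M\delta_0 L^2 ,
\]
uniformly in $k$, because the $k^{2/3}$ growth is cancelled by $(k+1)^{-2/3}$. Then
\[
e_{k+1}\le \bigl(1-\delta_k[2(1-c)-\delta_0-C_M\delta_0L^2]\bigr)e_k .
\]
Because $q<1-c<2(1-c)$, taking $\delta_0$ small (depending on $M$) makes the bracket at least $q$, so $e_{k+1}\le(1-q\delta_k)e_k$. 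To close the induction I compare the envelopes: $s_{k+1}^2/s_k^2=\exp\{-q\delta_0((k+1)^{1/3}-k^{1/3})\}\ge \exp\{-q\delta_0(k+1)^{-2/3}\}\ge 1-q\delta_k$. The first inequality holds because the mean value theorem gives $(k+1)^{1/3}-k^{1/3}\le\tfrac13 k^{-2/3}$, which I will compare carefully with $(k+1)^{-2/3}$ and absorb into the slack between $q$ and $2(1-c)$ (the $k=0$ step is handled directly). This yields $e_{k+1}\le s_{k+1}^2$.

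\textbf{Case B: $r_k< s_ke^{-M}$.} Here I instead use the triangle inequality, $r_{k+1}\le(1-\delta_k)r_k+\delta_kLr_k(1+\log_+(B/r_k))$. The map $x\mapsto x(1+\log_+(B/x))$ is nondecreasing, so I may replace $r_k$ by $s_ke^{-M}$. This gives
\[
r_{k+1}\le s_ke^{-M}\bigl(1+\delta_0L(k+1)^{-2/3}(1+M+\tfrac12q\delta_0k^{1/3})\bigr)\le s_ke^{-M}(1+C'_M\delta_0).
\]
Also $s_{k+1}/s_k\ge e^{-q\delta_0/2}\ge\tfrac12$. So I first choose $M$ with $e^{-M}\le 1/4$, and then $\delta_0$ small enough that $C'_M\delta_0\le1$. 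This gives $r_{k+1}\le \tfrac12 s_k\le s_{k+1}$.

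The main obstacle is the order in which the constants are chosen. $M$ must be fixed first, independently of $\delta_0$; only then is $\bar\delta$ chosen small relative to $M$, $L$, and $1-c-q$. I also need to check that the $(k+1)^{1/3}-k^{1/3}$ versus $(k+1)^{-2/3}$ comparison leaves enough room in Case A. If it does not, I will fall back on running the envelope with a slightly smaller rate $q'\in(q,1-c)$ internally and absorbing the discrepancy.
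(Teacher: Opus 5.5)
Your proof is correct, but it is organized differently from the paper's.

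\textbf{The paper's route.} The paper never splits into cases. With $r$ the squared error, $h_B(r)=1+\log_+(B/\sqrt r)$ and $A=L^2+(1-2c)_+$, it writes the one-step bound as $r_{k+1}\le F_k(r_k)$, where $F_k(r)=[1-2(1-c)\delta_k+A\delta_k^2h_B(r)^2]\,r$. A short derivative computation shows $F_k$ is increasing on $(0,\infty)$ once $\delta_0<1/[2(1-c)]$. The paper then compares with the envelope $x_k=C_0e^{-qS_k}$, $S_k=\sum_{j<k}\delta_j$, which satisfies $x_{k+1}=e^{-q\delta_k}x_k$ exactly. Monotonicity gives $r_{k+1}\le F_k(x_k)\le(1-q\delta_k)x_k\le x_{k+1}$. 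The bound $S_k\ge\delta_0k^{1/3}$ is used only at the very end to reach the stated form.

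\textbf{Your route.} Your Case B does the job of the monotonicity of $F_k$. When the iterate is far below the envelope, you only need that the error grows by at most a factor $1+C'_M\delta_0$ in one step. That uses only the trivial monotonicity of $x\mapsto x(1+\log_+(B/x))$. The price is the extra constant $M$, chosen before $\bar\delta$, and a direct comparison of $(k+1)^{1/3}-k^{1/3}$ with $(k+1)^{-2/3}$, since you run the final envelope directly. The paper's version is a single chain of inequalities. Yours avoids analyzing the derivative of the squared-error majorant and is a more robust device when such a majorant is not monotone.

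\textbf{The envelope comparison needs no slack.} You do not need the fallback rate $q'$. From $x-y=(x^3-y^3)/(x^2+xy+y^2)$,
\[
(k+1)^{1/3}-k^{1/3}=\frac{1}{(k+1)^{2/3}+(k(k+1))^{1/3}+k^{2/3}}\le(k+1)^{-2/3}\quad\text{for every }k\ge0,
\]
so $s_{k+1}^2/s_k^2\ge e^{-q\delta_k}\ge 1-q\delta_k$ holds outright.

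\textbf{A small correction.} Expanding gives $(1-\delta_k)^2+2c\delta_k(1-\delta_k)=1-2(1-c)\delta_k+(1-2c)\delta_k^2$. Bounding the last term by $\delta_k^2$ requires $c\ge0$. The statement allows $c<0$, which is why $q<\min\{1,1-c\}$. Use $(1-2c)_+\delta_k^2$ instead, as the paper does; this changes only constants.
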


\proof{Proof.}

Let $r_k:=\|a_k-a^\star\|^2$ and $h_B(r):=1+\log_+(B/\sqrt r)$ for $r>0$.
Since \(G(a^\star)=a^\star\),
\[
    a_{k+1}-a^\star
    =
    (1-\delta_k)(a_k-a^\star)
    +
    \delta_k\bigl(G(a_k)-G(a^\star)\bigr).
\]
Expanding the squared norm gives
\[
    r_{k+1}
    =
    (1-\delta_k)^2 r_k
    +\delta_k^2\|G(a_k)-G(a^\star)\|^2 
    +2\delta_k(1-\delta_k)
    \langle a_k-a^\star,G(a_k)-G(a^\star)\rangle .
\]
By the one-sided condition \eqref{eq:one_sided},
\[
    \langle a_k-a^\star,G(a_k)-G(a^\star)\rangle
    \le c r_k.
\]
By the log-Lipschitz condition \eqref{eq:log_lip},
\[
    \|G(a_k)-G(a^\star)\|
    \le
    L\sqrt{r_k}\,h_B(r_k).
\]
Therefore,
\[
\begin{aligned}
    r_{k+1}
    &\le
    \left[
        (1-\delta_k)^2
        +2c\delta_k(1-\delta_k)
        +L^2\delta_k^2 h_B(r_k)^2
    \right]r_k  \\
    &=
    \left[
        1-2(1-c)\delta_k
        +(1-2c)\delta_k^2
        +L^2\delta_k^2h_B(r_k)^2
    \right]r_k.
\end{aligned}
\]
Let $A:=L^2+(1-2c)_+$.
Since \(h_B(r)\ge 1\),
\[
    r_{k+1}
    \le
    \left[
        1-2(1-c)\delta_k
        +A\delta_k^2 h_B(r_k)^2
    \right]r_k
    =:F_k(r_k).
\]
In the $r\to 0$ limit, we can also define $F_k(r)=0$ as the continuous limit. 
We next note that, for \(\delta_0\) sufficiently small, \(F_k\) is increasing on
\((0,\infty)\) for every \(k\). Indeed, if \(r>B^2\), then \(h_B(r)=1\), and
\[
    F_k'(r)=1-2(1-c)\delta_k+A\delta_k^2>0
\]
for all \(k\) when \(\delta_0\) is small enough. If \(0<r\le B^2\), then
\[
    h_B(r)=1+\log B-\frac12\log r,
    \quad
    h_B'(r)=-\frac{1}{2r}.
\]
Thus,
\[
\begin{aligned}
    F_k'(r)
    &=
    1-2(1-c)\delta_k
    +A\delta_k^2
    \left[
        h_B(r)^2+2rh_B(r)h_B'(r)
    \right]  \\
    &=
    1-2(1-c)\delta_k
    +A\delta_k^2
    \left[
        h_B(r)^2-h_B(r)
    \right]
    \ge
    1-2(1-c)\delta_k.
\end{aligned}
\]
Hence \(F_k\) is increasing if \(\delta_0<1/[2(1-c)]\).

Now define
$S_k:=\sum_{j=0}^{k-1}\delta_j$, $C_0:=r_0\vee B^2\vee 1$, $x_k:=C_0e^{-qS_k}$.
Then \(r_0\le x_0\). Also, since \(C_0\ge B^2\),
\[
    \log_+\frac{B}{\sqrt{x_k}}
    =
    \log_+\left(\frac{B}{\sqrt{C_0}}e^{qS_k/2}\right)
    \le
    \frac{q}{2}S_k.
\]
Because
\[
    S_k
    =
    \delta_0\sum_{j=0}^{k-1}(j+1)^{-2/3}
    \le
    3\delta_0 k^{1/3},
\]
we have
\[
    h_B(x_k)
    \le
    1+\frac{3}{2}q\delta_0 k^{1/3}.
\]
Therefore,
\[
    \delta_k h_B(x_k)^2
    \le
    \delta_0(k+1)^{-2/3}
    \left(
        1+\frac32 q\delta_0 k^{1/3}
    \right)^2 
    \le
    \delta_0
    \left(
        1+3q\delta_0+\frac94 q^2\delta_0^2
    \right).
\]
Choose \(\bar\delta>0\) small enough so that, for every
\(0<\delta_0\le \bar\delta\),
\[
    A\delta_0
    \left(
        1+3q\delta_0+\frac94 q^2\delta_0^2
    \right)
    \le
    2(1-c)-q,
\]
and also \(\delta_0<1/[2(1-c)]\).

We prove by induction that \(r_k\le x_k\) for all \(k\). The base case holds by the
definition of \(C_0\). If \(r_k\le x_k\), then, using the monotonicity of \(F_k\),
\[
\begin{aligned}
    r_{k+1}
    \le
    F_k(r_k)
    &\le
    F_k(x_k)  \\
    &\le
    \left[
        1-2(1-c)\delta_k
        +A\delta_k^2h_B(x_k)^2
    \right]x_k  \\
    &\le
    (1-q\delta_k)x_k
    \le
    e^{-q\delta_k}x_k
    =
    x_{k+1}.
\end{aligned}
\]
Thus \(r_k\le x_k\) for all \(k\). Finally,
\[
    S_k
    =
    \delta_0\sum_{j=0}^{k-1}(j+1)^{-2/3}
    \ge
    \delta_0 k\cdot k^{-2/3}
    =
    \delta_0 k^{1/3}.
\]
Therefore,
\[
    \|a_k-a^\star\|^2
    =
    r_k
    \le
    C_0e^{-qS_k}
    \le
    C_0\exp\{-q\delta_0 k^{1/3}\}.
\]

For uniqueness, if \(a^\star\) and \(\bar a\) are both fixed points, then
\[
    \|a^\star-\bar a\|^2
    =
    \langle G(a^\star)-G(\bar a),a^\star-\bar a\rangle
    \le
    c\|a^\star-\bar a\|^2,
\]
and \(c<1\) implies \(a^\star=\bar a\).
\endproof

Unlike the relaxed fixed-point iteration, whose convergence analysis is phrased directly
in terms of \(G\), gradient descent is more naturally analyzed through the residual objective
$\ell(a):=\frac12\|a-G(a)\|^2$.
Assume throughout that \(G\in C^1(\mathbb R^d;\mathbb R^d)\), so that
\[
\nabla \ell(a)=(I-\nabla G(a))^\top(a-G(a)).
\]
The next proposition gives a general linear-convergence criterion; the Jacobian condition
used below is then a simple sufficient condition for that criterion.

\begin{proposition}[Gradient descent under a PL condition] 
\label{prop:gd_pl}
Suppose \(\nabla \ell\) is \(L_0\)-Lipschitz continuous on \(\mathbb R^d\), and assume that
\(\ell\) satisfies the Polyak--\L ojasiewicz inequality \eqref{eq:PL_ineq}
Then the gradient-descent iteration with $0<\eta\leq 1/L_0$ yields
\[
\ell(a_k)\le (1-\mu\eta)^k \ell(a_0),
\qquad k\ge 0.
\]
In particular, \(\ell(a_k)\to 0\), and every accumulation point of \((a_k:k\geq 0)\) is a fixed
point of \(G\).
\end{proposition}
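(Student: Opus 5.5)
The argument is the standard descent-lemma plus Polyak--\L ojasiewicz argument. First, because \(\nabla\ell\) is \(L_0\)-Lipschitz on \(\mathbb R^d\), the descent lemma gives, for every \(a\) and every step,
\[
\ell\bigl(a-\eta\nabla\ell(a)\bigr)
\le
\ell(a)-\eta\|\nabla\ell(a)\|^2+\frac{L_0\eta^2}{2}\|\nabla\ell(a)\|^2
=
\ell(a)-\eta\Bigl(1-\frac{L_0\eta}{2}\Bigr)\|\nabla\ell(a)\|^2 .
\]
This follows from writing \(\ell(b)-\ell(a)=\int_0^1\langle\nabla\ell(a+s(b-a)),b-a\rangle ds\) and bounding the deviation of the gradient along the segment by \(L_0 s\|b-a\|\). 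For \(0<\eta\le 1/L_0\) we have \(1-L_0\eta/2\ge 1/2\). Hence \(\ell(a_{k+1})\le \ell(a_k)-\frac{\eta}{2}\|\nabla\ell(a_k)\|^2\).

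Second, I insert the PL inequality \eqref{eq:PL_ineq}, \(\frac12\|\nabla\ell(a_k)\|^2\ge\mu\ell(a_k)\). This yields \(\ell(a_{k+1})\le(1-\mu\eta)\ell(a_k)\), and induction gives the claimed bound. Note that \(1-\mu\eta\ge0\). Indeed, combining PL with the descent bound and \(\ell\ge0\) shows \(\mu\le L_0\), since \(0\le \ell(a-\frac1{L_0}\nabla\ell(a))\le \ell(a)-\frac1{2L_0}\|\nabla\ell(a)\|^2\le(1-\mu/L_0)\ell(a)\) at any \(a\) with \(\ell(a)>0\). If \(\ell\equiv0\), the statement is trivial. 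So the factor is a genuine contraction in \([0,1)\).

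Finally, \(\ell(a_k)\to0\) follows immediately since \(\mu\eta>0\). For an accumulation point \(\bar a\) along a subsequence \(a_{k_j}\to\bar a\), continuity of \(G\in C^1\) gives continuity of \(\ell\). Hence \(\ell(\bar a)=\lim_j\ell(a_{k_j})=0\), that is, \(\|\bar a-G(\bar a)\|=0\), so \(\bar a\) is a fixed point. There is no real obstacle. The only point needing care is the descent lemma with the step-size restriction and the observation \(\mu\eta\le1\), which keeps the rate meaningful.
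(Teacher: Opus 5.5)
Your proposal is correct and follows the paper's proof essentially line by line: the descent lemma with $\eta\le 1/L_0$ gives $\ell(a_{k+1})\le\ell(a_k)-\frac{\eta}{2}\|\nabla\ell(a_k)\|^2$, the PL inequality converts this into the contraction factor $1-\mu\eta$, and continuity of $\ell$ (from $G\in C^1$) shows that every accumulation point is a fixed point. Your extra check that $\mu\le L_0$ guarantees $1-\mu\eta\ge 0$, which is what lets the one-step inequalities be chained by induction; the paper leaves this small point implicit.
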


\proof{Proof.}
Since \(\nabla \ell\) is \(L_0\)-Lipschitz continuous, the descent lemma gives
\[
\ell(a_{k+1})
\le
\ell(a_k)-\eta\|\nabla \ell(a_k)\|^2+\frac{L_0}{2}\eta^2\|\nabla \ell(a_k)\|^2.
\]
If \(0<\eta\le 1/L_0\), then
\[
\ell(a_{k+1})
\le
\ell(a_k)-\frac{\eta}{2}\|\nabla \ell(a_k)\|^2.
\]
Using the Polyak--\L ojasiewicz inequality \eqref{eq:PL_ineq},
$\frac12\|\nabla \ell(a_k)\|^2\ge \mu \ell(a_k)$.
Then we obtain
\[
\ell(a_{k+1})
\le
\ell(a_k)-\eta\mu \ell(a_k)
=
(1-\mu\eta)\ell(a_k).
\]
Iterating yields
\[
\ell(a_k)\le (1-\mu\eta)^k\ell(a_0).
\]
Since \(\ell(a_k)\to 0\) and \(\ell\) is continuous, any accumulation point \(\bar a\) of
\(\{a_k\}\) satisfies \(\ell(\bar a)=0\), hence \(\bar a=G(\bar a)\).
\Halmos
\endproof

\paragraph{A sufficient Jacobian condition.}
Suppose \(\nabla \ell\) is \(L_0\)-Lipschitz continuous on \(\mathbb R^d\), and assume
\[
\frac{\nabla G(a)+\nabla G(a)^\top}{2}\preceq cI,
\qquad a\in\mathbb R^d,
\]
for some \(c<1\). Then \(\ell\) satisfies the Polyak--\L ojasiewicz inequality with constant
$\mu=(1-c)^2$.
\[
\frac{(I-\nabla G(a))+(I-\nabla G(a))^\top}{2}
=
I-\frac{\nabla G(a)+\nabla G(a)^\top}{2}
\succeq (1-c)I.
\]
Hence, for every \(v\in\mathbb R^d\),
\[
(1-c)\|v\|^2
\le
v^\top (I-\nabla G(a))v
\le
\|v\|\,\|(I-\nabla G(a))^\top v\|,
\]
so
$\|(I-\nabla G(a))^\top v\|\ge (1-c)\|v\|$.
Taking \(v=a-G(a)\) and using \(\nabla \ell(a)=(I-\nabla G(a))^\top (a-G(a))\), we obtain
\[
\frac12\|\nabla \ell(a)\|^2
=
\frac12\|(I-\nabla G(a))^\top (a-G(a))\|^2
\ge
\frac12(1-c)^2\|a-G(a)\|^2
=
(1-c)^2\ell(a).
\]
In particular, $\nabla \ell(a)=0$ implies $\ell(a)=0$, which further implies $a=G(a)$.
Hence the residual objective has no spurious critical points. Therefore,
Proposition~\ref{prop:gd_pl} applies with \(\mu=(1-c)^2\). 

\begin{remark}
[What remains true without a PL condition.]
If \(\nabla \ell\) is \(L_0\)-Lipschitz continuous on \(\mathbb R^d\) and
\(0<\eta\le 1/L_0\), then the same descent estimate still gives
\[
\ell(a_0)-\ell(a_k)
\ge
\frac{\eta}{2}\sum_{i=0}^{k-1}\|\nabla \ell(a_i)\|^2.
\]
Consequently,
\[
\min_{0\le i\le k-1}\|\nabla \ell(a_i)\|^2
\le
\frac{1}{k}\sum_{i=0}^{k-1}\|\nabla \ell(a_i)\|^2
\le
\frac{2\ell(a_0)}{\eta k}.
\]
Thus, even without a PL condition, gradient descent yields an \(O(1/k)\) stationarity
bound for the objective \(\ell\). However, this does not by itself imply convergence to a
fixed point of \(G\), because \(I-\nabla G(a)\) may be singular, so a small value of
\(\|\nabla \ell(a)\|\) need not force \(\|a-G(a)\|\) to be small.
\end{remark}

\subsection{Proof of Theorems \ref{th:Monotone} -- \ref{th:log-Lipschitz}}

\proof{Proof of Theorem \ref{th:Monotone}.}
We prove part~1. Fix \(a,b\in Z\) with \(a\le b\), and let
$\nu_n^a:=\nu_0P_a^n$ and $\nu_n^{b}:=\nu_0P_{b}^n$,
for an arbitrary initial distribution \(\nu_0\). We claim that
$\nu_n^a\le_{st}\nu_n^{b}$, for
$n\ge 0$.
The claim is immediate for \(n=0\). Suppose it holds for some \(n\ge 0\), and let
\(g:\mathcal S\to\mathbb R\) be bounded and increasing. Then
$\nu_{n+1}^a(g)=\nu_n^a(P_ag)$.
By Assumption~\ref{ass:increase}, the function \(P_ag\) is increasing in \(x\). Hence the
induction hypothesis gives
\[
\nu_n^a(P_ag)\le \nu_n^{b}(P_ag).
\]
Using again Assumption~\ref{ass:increase},
\[
\nu_n^{b}(P_ag)\le \nu_n^{b}(P_{b}g)=\nu_{n+1}^{b}(g).
\]
Therefore \(\nu_{n+1}^a\le_{st}\nu_{n+1}^{b}\), and the claim follows by induction.

Passing to the limit \(n\to\infty\) yields
$\pi_a\le_{st}\pi_{b}$.
Since each component of \(f\) is increasing, we obtain
\[
G(a)=\int_{\mathcal S} f(x)\,\pi_a(dx)
\le
\int_{\mathcal S} f(x)\,\pi_{b}(dx)
=G(b),
\]
where the inequality is understood coordinatewise. 

Part~2 is identical, except that Assumption~\ref{ass:decrease} reverses the second
inequality in the induction step, so one obtains
$\nu_n^a\ge_{st}\nu_n^{b}$ for $n\geq 0$,
and hence
$G(a)\ge G(b)$.
\Halmos
\endproof

\proof{Proof of Theorem \ref{th:Lipschitz}.}
Using the stationarity of \(\pi_a\) and \(\pi_{b}\),
$W_d(\pi_a,\pi_{\tilde a})
=
W_d(\pi_aP_a,\pi_{b}P_{b})$.
Thus,
\begin{align*}
W_d(\pi_a,\pi_{b})
&\le
W_d(\pi_aP_a,\pi_{b}P_a)
+
W_d(\pi_{b}P_a,\pi_{b}P_{b}) \\
&\le
\rho\,W_d(\pi_a,\pi_{b})
+
L_P\|a-b\|\int_{\mathcal S}V(x)\,\pi_{b}(dx) 
\le
\rho\,W_d(\pi_a,\pi_b) + L_PM\|a-b\|.
\end{align*}
Rearranging yields
\[
W_d(\pi_a,\pi_{b})
\le
\frac{L_P M}{1-\rho}\,\|a-b\|.
\]

Next, let \(\Gamma\) be any coupling of \(\pi_a\) and \(\pi_{b}\). Then
\begin{align*}
\|G(a)-G(b)\|
&=
\left\|\int_{\mathcal S} f(x)\,\pi_a(dx)-\int_{\mathcal S} f(y)\,\pi_{b}(dy)\right\| \\
&=
\left\|\int_{\mathcal S\times\mathcal S} \bigl(f(x)-f(y)\bigr)\,\Gamma(dx,dy)\right\| \\
&\le
\int_{\mathcal S\times\mathcal S}\|f(x)-f(y)\|\,\Gamma(dx,dy) \\
&\le
L_f\int_{\mathcal S\times\mathcal S} d(x,y)\,\Gamma(dx,dy).
\end{align*}
Taking the infimum over all couplings \(\Gamma\) gives
\[
\|G(a)-G(b)\|
\le
L_f W_d(\pi_a,\pi_{b})
\le
\frac{L_f L_PM}{1-\rho}\,\|a-b\|.
\]
\Halmos
\endproof

\proof{Proof of Theorem \ref{th:log-Lipschitz}}
Fix \(a,b\in Z\), $a\neq b$.
We first extend the one-step parameter Lipschitz condition from point masses to arbitrary
initial distributions. For any \(\mu\in\mathcal P(S)\),
\begin{equation}\label{eq:intial_Lip}
    W_d(\mu P_a,\mu P_b)
    \le
    \int_S W_d(\delta_xP_a,\delta_xP_b)\,\mu(dx)
    \le
    L_0\|a-b\|.
\end{equation}

Now fix an integer \(n\ge0\). Since \(\pi_b\) is invariant for \(P_b\),
$\pi_b=\pi_bP_b^n$.
Hence, by the triangle inequality,
\[
\begin{aligned}
    W_d(\pi_a,\pi_b)
    &\le
    W_d(\pi_a,\pi_bP_a^n)
    +
    W_d(\pi_bP_a^n,\pi_bP_b^n).
\end{aligned}
\]
The first term is controlled by uniform geometric convergence:
\[
W_d(\pi_a,\pi_bP_a^n)\le K_0e^{-\kappa n} .
\]
We next bound the second term. We first note that
\[
    W_d(\pi_bP_a^n,\pi_bP_b^n)
    \le
    \sum_{j=0}^{n-1} W_d\bigl(\pi_bP_b^jP_aP_a^{\,n-j-1},
    \pi_bP_b^jP_bP_a^{\,n-j-1}\bigr).
\]
Using the nonexpansiveness of \(P_a\), iterated \(n-j-1\) times, we have 
\[
W_d\bigl(\pi_bP_b^jP_aP_a^{\,n-j-1},
             \pi_bP_b^jP_bP_a^{\,n-j-1}\bigr)  
    \le
    W_d(\pi_bP_b^jP_a,\pi_bP_b^jP_b)  \leq L_0\|a-b\|,
\]
where the last inequality follows from \eqref{eq:intial_Lip}.
Thus,
\[
W_d(\pi_bP_a^n,\pi_bP_b^n)\le n L_0\|a-b\|.
\]
Combining the two bounds gives, for every integer \(n\ge0\),
\begin{equation}\label{eq:loglip_main}
    W_d(\pi_a,\pi_b)
    \le
    nL_{0}\|a-b\|+K_0e^{-\kappa n}.
\end{equation}

It remains to optimize over \(n\). Let
$R:=\frac{K_0\kappa/L_0}{\|a-b\|}$.
If \(R\le1\), then taking \(n=0\) in (1) gives
\[
    W_d(\pi_a,\pi_b)\le K_0\le \frac{L_0\|a-b\|}{\kappa}
    \le
    L_0\|a-b\|\left(1+\frac1\kappa\right).
\]
Since \(\log_+R=0\) in this case, this implies the desired bound.

If \(R>1\), choose
$n=\left\lceil \frac{1}{\kappa}\log R\right\rceil$.
Then
\[
    K_0e^{-\kappa n}
    \le
    K_0e^{-\log R}
    =
    \frac{K_0}{R}
    =
    \frac{L_0\|a-b\|}{\kappa}.
\]
Substituting this choice of \(n\) into \eqref{eq:loglip_main}, we obtain
\[
W_d(\pi_a,\pi_b)
    \le
    L_0\|a-b\|\left(1+\frac{1}{\kappa}\log R\right)
    +
    \frac{L_0\|a-b\|}{\kappa} 
    =
    L_0\|a-b\|\left(
        1+\frac{1+\log R}{\kappa}
    \right).
\]
Since \(\log R>0\),
\[
    1+\frac{1+\log R}{\kappa}
    \le
    \left(1+\frac1\kappa\right)(1+\log R).
\]
Therefore,
\[
    W_d(\pi_a,\pi_b)
    \le
    L_0\|a-b\|\left(1+\frac1\kappa\right)(1+\log R).
\]
\endproof

\section{Gradient Estimation Appendix} \label{app:gradient_estimation}
Let $\psi(y):=-\mathbb E[\eta(Y_1)\mid Y_0=y]$.

\subsection{Auxiliary Results}

The following lemma is needed for the proof of Theorem  \ref{thm:fractional-contraction}.

\begin{lemma}\label{lem:Poisson-Lip}
Suppose Assumption \ref{ass:ipa-stability} (1) holds. If $b: Y\to\mathbb R$ is Lipschitz and $\mathbb E[b(Y_\infty)]=0$, then the solution $\phi$ to $\phi(y)-\mathbb E[\phi(Y_1)\mid Y_0=y]=b(y)$ is Lipschitz in $d_{\calY}$. Moreover $\tilde\phi(y):=-\mathbb E[\phi(Y_1)\mid Y_0=y]$ is Lipschitz as well.
\end{lemma}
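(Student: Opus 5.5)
We will construct $\phi$ through the series
\[
\phi(y):=\sum_{k=0}^\infty \mathbb E\bigl[b(Y_k)\mid Y_0=y\bigr]=\sum_{k=0}^\infty P^k b(y),
\]
where $P$ is the transition kernel of the driving process, and prove that it is well defined, solves the Poisson equation, and is Lipschitz. The key estimate is a pointwise geometric bound on $P^k b(y)-P^kb(\tilde y)$. Run the Markovian coupling of Assumption~\ref{ass:ipa-stability}(1) from $(y,\tilde y)$. Since $b$ is Lipschitz with some constant $L_b$ in $d_{\calY}$, Condition~1 gives
\[
|P^kb(y)-P^kb(\tilde y)|\le L_b\,\mathbb E[d_{\calY}(Y_k,\widetilde Y_k)\mid Y_0=y,\widetilde Y_0=\tilde y]\le CL_b\gamma^{-k}d_{\calY}(y,\tilde y).
\]

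The first step is convergence of the series. Because $\mathbb E[b(Y_\infty)]=0$, we can write $P^kb(y)=\int (P^kb(y)-P^kb(\tilde y))\,\pi(d\tilde y)$, using invariance of $\pi$. This gives
\[
|P^kb(y)|\le CL_b\gamma^{-k}\int d_{\calY}(y,\tilde y)\,\pi(d\tilde y).
\]
That integral is finite under the standing integrability of $d_{\calY}$ under $\pi$ (e.g.\ $\pi V<\infty$ for the weighted discrete metric); I will state this as the implicit integrability needed for $\mathbb E[b(Y_\infty)]$ to make sense. Summing over $k$ shows the series converges absolutely for each $y$. Exchanging $P$ with the sum is justified by the same bound integrated against $P(y,\cdot)$, using $PV\le CV$ from Condition~4. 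This yields $P\phi=\sum_{k\ge1}P^kb$, and therefore $\phi-P\phi=b$.

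The second step is Lipschitz continuity, which follows by summing the coupling bound:
\[
|\phi(y)-\phi(\tilde y)|\le \sum_k CL_b\gamma^{-k}d_{\calY}(y,\tilde y)=\frac{CL_b}{1-\gamma^{-1}}\,d_{\calY}(y,\tilde y).
\]
For $\tilde\phi=-P\phi=-\sum_{k\ge1}P^kb$, the identical sum starting at $k=1$ gives a Lipschitz constant $CL_b\gamma^{-1}/(1-\gamma^{-1})$. Alternatively, one can note that $\tilde\phi=b-\phi$ is a difference of two Lipschitz functions.

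The main obstacle I anticipate is the well-posedness and uniqueness of "the solution" $\phi$. Poisson solutions are only unique up to additive constants (within a growth class), so Lipschitzness should be asserted for the series solution, or for any solution differing from it by a constant. Under ergodicity, any solution with controlled growth (say $|\phi|\le CV$) satisfies $\phi-\phi^{\text{series}}=P^n(\phi-\phi^{\text{series}})\to$ const, which is harmless for Lipschitzness. I will make that remark rather than prove a full uniqueness theorem.
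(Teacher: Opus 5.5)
Your proposal is correct and follows essentially the same route as the paper: represent $\phi$ by the series $\sum_{k\ge0}P^kb$, bound each term $|P^kb(y)-P^kb(\tilde y)|\le CL_b\gamma^{-k}d_{\mathcal Y}(y,\tilde y)$ via the coupling in Condition~1, and sum the geometric series, with the same argument starting at $k=1$ for $\tilde\phi$. Your extra steps (absolute convergence of the series, checking $\phi-P\phi=b$, uniqueness up to constants) are left implicit in the paper. They rely on $\pi$-integrability of $d_{\mathcal Y}$ and on the $V$-moment bound from Condition~4, which go beyond the lemma's stated hypothesis, but the Lipschitz estimate itself does not need them.
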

\proof{Proof.}
Write $\phi(y)=\sum_{k\ge 0}\mathbb E[b(Y_k)\mid Y_0=y]$. Under the coupling in Assumption \ref{ass:ipa-stability} (1),
\[
|\phi(y)-\phi(\tilde y)|
\leq \sum_{k\ge 0} L_b\,\mathbb E[d_{\calY}(Y_k,\tilde Y_k)]
\leq L_b\,C_\gamma\!\sum_{k\ge 0}\gamma^{-k} d_{\calY}(y,\tilde y)=\frac{L_b\gamma C_\gamma}{\gamma-1}d_{\calY}(y,\tilde y).
\]
The same argument applies to $\tilde\phi(y)=-\sum_{k\ge 1}\mathbb E[b(Y_k)\mid Y_0=y]$.
\Halmos
\endproof
\proof{Proof of Proposition \ref{prop:reset}.}
The product $\Gamma_{0:n}$ contains a zero factor whenever $\tau_0<n$. Hence
$\|\Gamma_{0:n}\|^q=\|\Gamma_{0:n}\|^q\mathbf 1\{\tau_0\ge n\}$.
Then
\[
    \mathbb E\left[
        \|\Gamma_{0:n}\|^q
        \mid Y_0=y
    \right]
    \le
    M_\Gamma^{qn}
    \mathbb P(\tau_0\ge n\mid Y_0=y)
    \le
    C_\tau R(y)(\rho M_\Gamma^q)^n.
\]
Choose $\gamma>1$ such that $\gamma^{-q}=\rho M_\Gamma^q$. \Halmos
\endproof

\proof{Proof of Proposition  \ref{lem:prod}}
With $g(y)=\log\|\Gamma(y)\|$ and $\psi(y)=-\E[\eta(Y_1)\mid Y_0=y]$, we have $\ g(Y_{t+1})+\psi(Y_{t+1})=-\log\alpha-\eta(Y_{t+1})$. Thus, for any $s\ge t$,
\begin{align*}
\E_s\exp\!\big(2q(g(Y_{s+1})+\psi(Y_{s+1}))\big)
&=\exp(-2q\log\alpha)\, \E_s \exp\!\big(-2q\,\eta(Y_{s+1})\big)\\
&\le \exp(-2q\log\alpha)\,\exp\!\Big(-2q\,\E_s\eta(Y_{s+1})+4c_\eta q^2\Big)\\
&= \exp\!\Big(-2q\log\alpha + 4c_\eta q^2\Big)\,\exp\!\big(2q\,\psi(Y_s)\big),
\end{align*}
where we used Assumption \ref{ass:IPAweakaverage} Condition 2 with $2q<q_0$.
Iterating this bound yields
\begin{align}
&\E_t \exp\left(2q \sum_{i=1}^{n}g(Y_{t+i}) + 2q\psi(Y_{t+n})\right)\nonumber\\
=&\E_t\left[
\exp\big(2q \sum_{i=1}^{n-1}g(Y_{t+i})\big)
\E_{t+n-1}\exp\big(2q(g(Y_{t+n})+\psi(Y_{t+n}))\big)
\right]\nonumber\\
\leq& \exp(-2q\log\alpha + 4c_\eta q^2)
\E_t \exp\big(2q \sum_{i=1}^{n-1}g(Y_{t+i}) + 2q\psi(Y_{t+n-1})\big)\nonumber\\
\leq& \exp\!\Big((- 2q\log\alpha + 4c_\eta q^2 )\,n\Big)\,\exp\!\big(2q\psi(Y_t)\big). \label{eq:iterative}
\end{align}
Next, using
\[
\prod_{i=1}^n \|\Gamma(Y_{t+i})\|^q
= \exp\!\left(q \sum_{i=1}^{n}g(Y_{t+i})\right)
= \exp\!\left(q \sum_{i=1}^{n}g(Y_{t+i})+q\psi(Y_{t+n})\right)\,e^{-q\psi(Y_{t+n})}.
\]
Cauchy–Schwarz gives
\begin{align*}
\E_t\!\left[\prod_{i=1}^n \|\Gamma(Y_{t+i})\|^q\right]
&= \E_t\!\left[\exp\left(q \sum_{i=1}^{n}g(Y_{t+i})+q\psi(Y_{t+n})\right)\,\exp(-q\psi(Y_{t+n}))\right]\\
&\le \E_t\!\left[\exp\left(2q \sum_{i=1}^{n}g(Y_{t+i})+2q\psi(Y_{t+n})\right)\right]^{\!1/2}\,
     \E_t\!\left[\exp(-2q\psi(Y_{t+n}))\right]^{\!1/2}\,.
\end{align*}
Combining this with the bound in \eqref{eq:iterative}, we have
\[
\E_t\!\left[\prod_{i=1}^n \|\Gamma(Y_{t+i})\|^q\right]
\;\le\; \exp\!\Big((2c_\eta q^2 - q\log\alpha)\,n\Big)\,
\exp\big(q\,\psi(Y_t)\big)\,
\E_t[\exp(-2q\psi(Y_{t+n}))]^{1/2}.
\]
Then we note that by Jensen's inequality and Condition 3
\[
\exp(q\psi(Y_t))=\exp(-q\E[\eta(Y_{t+1})|Y_t])\leq 
\E[\exp(-q\eta(Y_{t+1}))|Y_t]\leq 
C[\exp(-q\eta(Y_{t}))+1].
\]
\[
\begin{split}
\E_t[\exp(-2q\psi(Y_{t+n}))]^{1/2}
&=
\E_t[\exp(2q\E[\eta(Y_{t+n+1})|Y_{t+n}])]^{1/2}\\
&\leq
\E_t[\exp(2q\eta(Y_{t+n+1}))|Y_t]\leq 
C[\exp(q\eta(Y_{t}))+1].
\end{split}
\]

Finally, choosing $\alpha_0\in(1,\alpha)$ and take $q>0$ small enough so that $\log \alpha_0 \leq \log \alpha - 2c_\eta q$, i.e., $q\leq \log(\alpha/\alpha_0)/(2c_\eta)$, we have
\[
\exp\!\Big((2c_\eta q^2 - q\log\alpha)\,n\Big)\leq \alpha_0^{-q n}.
\]
This yields the conditional product bound \eqref{eq:prod-conditional}, i.e.,
\[
\E_t\!\left[\prod_{i=1}^{n}\|\Gamma(Y_{t+i})\|^{q}\right]
\;\le\; C\alpha_0^{-q n}(\exp(q\eta(Y_t))+1)(\exp(-q\eta(Y_t))+1)=
4C a_0^{-qn}\cosh(q\eta(Y_t)/2)^2,
\]
which is finite and uniformly bounded in $t$ by assumption. 
\Halmos\endproof

\subsection{Proof of Theorem \ref{thm:fractional-contraction}}
Throughout the proof, let $\mathbb E_0[\cdot]:=
    \mathbb E[\cdot\mid
        Y_0=y_0,\widetilde Y_0=\widetilde y_0].$
We repeatedly use three consequences of Assumption~\ref{ass:ipa-stability}.
First, by the weak average contraction condition, the Markov property, and
Jensen's inequality, for any \(r\in(0,\min\{q_0,1\}]\),
\[
    \mathbb E_0\|\Gamma_{a:b}\|^r
    \le
    C\gamma^{-r(b-a)}V(y_0)^r,
    \qquad
    \mathbb E_0\|\widetilde\Gamma_{a:b}\|^r
    \le
    C\gamma^{-r(b-a)}V(\widetilde y_0)^r .
\]
Indeed, conditioning on \(Y_a\) gives
\[
    \mathbb E_0\|\Gamma_{a:b}\|^r
    \le
    C\gamma^{-r(b-a)}
    \mathbb E_0[V(Y_a)^r],
\]
and since \(r\le1\),
\[
    \mathbb E_0[V(Y_a)^r]
    \le
    \left(\mathbb E_0[V(Y_a)]\right)^r
    \le
    C V(y_0)^r .
\]
The bound for \(\widetilde\Gamma_{a:b}\) is identical.
Second, let $q=2p$, since \(2q\le 1\), the coupling contraction implies
\[
    \mathbb E_0\bigl[
        d_{\calY}(Y_t,\widetilde Y_t)^{2q}
    \bigr]
    \le
    \left(
        \mathbb E_0[
            d_{\calY}(Y_t,\widetilde Y_t)
        ]
    \right)^{2q}
    \le
    C\gamma^{-2qt}
    d_{\calY}(y_0,\widetilde y_0)^{2q}.
\]
Third, by the transient moment condition for \(\xi\),
$\sup_{t\ge0}
    \mathbb E_0\|\xi(Y_t)\|^{2q}
    \le
    C V(y_0)^{2q}$.

From \eqref{eq:variation-of-constants},
\[
Z_T-\tilde Z_T=\sum_{m=0}^{T-1}\Big((\Gamma_{m+1:T}-\tilde \Gamma_{m+1:T})\xi(Y_m)+\tilde \Gamma_{m+1:T}\big(\xi(Y_m)-\xi(\tilde Y_m)\big)\Big).
\]
Since $q/2<1$, the map \(x\mapsto x^{q/2}\) is subadditive on
\(\mathbb R_+\). Hence
\[
\mathbb E_0[\|Z_T-\tilde Z_T\|^{q/2}]
\le \sum_{m=0}^{T-1} \left(\mathbb E_0\left\|(\Gamma_{m+1:T}-\tilde \Gamma_{m+1:T})\xi(Y_m)\right\|^{q/2}
+\E_0\left\|\tilde \Gamma_{m+1:T}\big(\xi(Y_m)-\xi(\tilde Y_m)\big)\right\|^{q/2}\right).
\]
We next bound the two terms separately.

First, note that by Assumption \ref{ass:ipa-stability}, Conditions 2 and 4, and Cauchy-Schwarz,
\begin{align}
\notag
\mathbb E_0[\|\tilde \Gamma_{m+1:T}(\xi(Y_m)-\xi(\tilde Y_m))\|^{q/2}]
&\le L_\xi^{q/2} \Big(\mathbb E_0[\|\widetilde\Gamma_{m+1:T}\|^q]\Big)^{1/2}\Big(\mathbb E_0 [d_{\calY}(Y_m,\tilde Y_m)^{q}]\Big)^{1/2} \\
\notag
&\leq C
\gamma^{-q(T-m-1)/2}
V(\widetilde y_0)^{q/2}
\gamma^{-qm/2}
d_{\calY}(y_0,\widetilde y_0)^{q/2}\\
\label{eq:bound01}
&\leq C\gamma^{-qT/2} V(\tilde{y}_0)^{q/2} d_{\calY}(y_0,\tilde y_0)^{q/2},
\end{align}
where the harmless factor \(\gamma^{q/2}\) is absorbed into \(C\).


Second, for  \(m\le T-2\), the telescoping identity (valid for right-to-left products) gives
\[
\Gamma_{m+1:T}-\tilde \Gamma_{m+1:T}
=\sum_{k=m+1}^{T-1}\tilde \Gamma_{k+1:T}\,\big(\Gamma(Y_k)-\Gamma(\tilde Y_k)\big)\,\Gamma_{m+1:k}.
\]
For \(m=T-1\), this sum is empty and the corresponding term is zero.  Therefore,
using subadditivity and the Lipschitz continuity of \(\Gamma\),
\[
\left\|
    \bigl(\Gamma_{m+1:T}-\widetilde\Gamma_{m+1:T}\bigr)\xi(Y_m)
\right\|^{q/2}  
\le
L_\Gamma^{q/2}
\sum_{k=m+1}^{T-1}
\|\widetilde\Gamma_{k+1:T}\|^{q/2}
\|\Gamma_{m+1:k}\|^{q/2}
d_{\calY}(Y_k,\widetilde Y_k)^{q/2}
\|\xi(Y_m)\|^{q/2}.
\]
Fix \(m\) and \(k\). By Cauchy--Schwarz,
\[
\begin{aligned}
&\mathbb E_0\left[
\|\widetilde\Gamma_{k+1:T}\|^{q/2}
\|\Gamma_{m+1:k}\|^{q/2}
d_{\calY}(Y_k,\widetilde Y_k)^{q/2}
\|\xi(Y_m)\|^{q/2}
\right]  \\
\le&
\left(
    \mathbb E_0[
        \|\widetilde\Gamma_{k+1:T}\|^q
        \|\Gamma_{m+1:k}\|^q
    ]
\right)^{1/2}
\left(
    \mathbb E_0[
        d_{\calY}(Y_k,\widetilde Y_k)^q
        \|\xi(Y_m)\|^q
    ]
\right)^{1/2}.
\end{aligned}
\]
Applying Cauchy--Schwarz once more to each factor,
\[
\begin{aligned}
\left(
    \mathbb E_0[
        \|\widetilde\Gamma_{k+1:T}\|^q
        \|\Gamma_{m+1:k}\|^q
    ]
\right)^{1/2}
&\le
\left(
    \mathbb E_0\|\widetilde\Gamma_{k+1:T}\|^{2q}
\right)^{1/4}
\left(
    \mathbb E_0\|\Gamma_{m+1:k}\|^{2q}
\right)^{1/4}  \\
&\le
C
\gamma^{-q(T-k-1)/2}
\gamma^{-q(k-m-1)/2}
V(\widetilde y_0)^{q/2}
V(y_0)^{q/2},
\end{aligned}
\]
and
\[
\begin{aligned}
\left(
    \mathbb E_0[
        d_{\calY}(Y_k,\widetilde Y_k)^q
        \|\xi(Y_m)\|^q
    ]
\right)^{1/2}
&\le
\left(
    \mathbb E_0[
        d_{\calY}(Y_k,\widetilde Y_k)^{2q}
    ]
\right)^{1/4}
\left(
    \mathbb E_0[
        \|\xi(Y_m)\|^{2q}
    ]
\right)^{1/4}  \\
&\le
C
\gamma^{-qk/2}
d_{\calY}(y_0,\widetilde y_0)^{q/2}
V(y_0)^{q/2}.
\end{aligned}
\]
Combining the last two displays, then
\[\begin{split}
    &\mathbb E_0\left[
\|\widetilde\Gamma_{k+1:T}\|^{q/2}
\|\Gamma_{m+1:k}\|^{q/2}
d_{\calY}(Y_k,\widetilde Y_k)^{q/2}
\|\xi(Y_m)\|^{q/2}
\right]  \\
\leq & C
    \gamma^{-q(T-k-1)/2}
    \gamma^{-q(k-m-1)/2}
    \gamma^{-qk/2}
    d_{\calY}(y_0,\widetilde y_0)^{q/2}
    V(\widetilde y_0)^{q/2}
    V(y_0)^q .
\end{split}\]
Summing over \(k=m+1,\ldots,T-1\), we obtain
\begin{align}
\notag
\mathbb E_0
\left[
    \left\|
    \bigl(\Gamma_{m+1:T}-\widetilde\Gamma_{m+1:T}\bigr)\xi(Y_m)
    \right\|^{q/2}
\right]  
\le&
C
d_{\calY}(y_0,\widetilde y_0)^{q/2}
V(\widetilde y_0)^{q/2}
V(y_0)^q
\gamma^{-q(T-m-2)/2}
\sum_{k=m+1}^{T-1}\gamma^{-qk/2}  \\
\notag
\le&
C
d_{\calY}(y_0,\widetilde y_0)^{q/2}
V(\widetilde y_0)^{q/2}
V(y_0)^q
\gamma^{-q(T-m-2)/2}\gamma^{-q(m+1)/2} \\
\label{eq:bound02}
\le&
C
\gamma^{-qT/2}
d_{\calY}(y_0,\widetilde y_0)^{q/2}
V(\widetilde y_0)^{q/2}
V(y_0)^q .
\end{align}
The final inequality again absorbs constants depending only on \(q\) and
\(\gamma\).

Combining the bounds in \eqref{eq:bound01} and \eqref{eq:bound02},
and summing $m=0,\dots,T-1$ yields
\[
\mathbb E \|Z_T-\tilde Z_T\|^{q/2}\leq C \gamma^{-q T/2}Td_{\calY}(y_0,\tilde y_0)^{q/2}V(\tilde{y}_0)^{q/2}(1+V(y_0)^q).
\]
\Halmos

\subsection{From sensitivity contraction to summand contraction}\label{app:summand-contraction}
Fix
$s\in\left(0,\min\left\{q_0/4,1/4\right\}\right)$.
Suppose that
$\|\nabla f(x)\|_{\mathrm{op}}\le M_f$ and
$\|\nabla f(x)-\nabla f(\widetilde x)\|_{\mathrm{op}}
    \le
    L_f d_{\mathcal Y}(y,\widetilde y)$,
where \(x\) and \(\widetilde x\) are the state components of \(y\) and
\(\widetilde y\), respectively. Then
\[
\begin{aligned}
    \|h_a(y,z)-h_a(\widetilde y,\widetilde z)\|_F
    &\le
    M_f\|z-\widetilde z\|_F
    +
    L_f d_{\mathcal Y}(y,\widetilde y)\|\widetilde z\|_F .
\end{aligned}
\]

To compare a transient sensitivity \(Z_t\) with a stationary sensitivity
\(\widetilde Z_t\), let \(\overline Z_t\) be driven by the stationary
\(\widetilde Y\)-path but initialized from \(\overline Z_0=0\). Since
\(s\le1\),
\[
    \mathbb E\|Z_t-\widetilde Z_t\|_F^s
    \le
    \mathbb E\|Z_t-\overline Z_t\|_F^s
    +
    \mathbb E\|\widetilde\Gamma_{0:t}\widetilde Z_0\|_F^s.
\]
The first term is bounded by
Theorem~\ref{thm:fractional-contraction}. The second is bounded by the
Jacobian-product estimate in
Assumption~\ref{ass:ipa-stability}, together with a finite stationary
\(2s\)-moment of \(Z\). Moreover,
\[
\begin{aligned}
    \mathbb E\left[
        d_{\mathcal Y}(Y_t,\widetilde Y_t)^s
        \|\widetilde Z_t\|_F^s
    \right]
    &\le
    \left(
        \mathbb E d_{\mathcal Y}(Y_t,\widetilde Y_t)^{2s}
    \right)^{1/2}
    \left(
        \mathbb E\|\widetilde Z_t\|_F^{2s}
    \right)^{1/2}.
\end{aligned}
\]
The geometric coupling of the driving process controls the first factor,
while stationarity controls the second. Consequently, under suitable
integrability of the initial-state factors,
\[
    \mathbb E\left[
        \|h_a(Y_t,Z_t)
          -h_a(\widetilde Y_t,\widetilde Z_t)\|_F^s
    \right]
    \le
    Ct\gamma^{-st}.
\]
Thus, Condition~3 of
Assumption~\ref{ass:lyap-untruncated-ipa} holds with
\(\beta=\gamma^s\). In particular, the required stationary sensitivity
moment follows from Condition~1 whenever
\(\|z\|_F^2\le C_W(1+W(y,z))\).

\subsection{Proof of Theorem \ref{thm:untruncated-ipa-lyap}.}
Recall the decomposition \eqref{eq:decomp}. We first bound the Monte Carlo variance. Iterating
the drift inequality \eqref{eq:Wbound01} gives
\[
    \sup_{t\ge0}
    \mathbb E_\mu[W(Y_t,Z_t)]
    \le
    \mathbb E_\mu[W(Y_0,Z_0)]
    +
    \frac{b}{1-\rho}
    <\infty.
\]
For the stationary copy, invariance and the Lyapunov drift condition
imply
\[
    \mathbb E_{\Pi_a}[W(Y,Z)]
    \le
    \frac{b}{1-\rho}.
\]
Here the inequality is justified by applying the drift condition to
the truncations \(W\wedge M\) and then letting \(M\to\infty\). Hence \eqref{eq:W_secondm} implies
\[
    \sup_{t\ge0}
    \mathbb E\|h_a(Y_t,Z_t)\|^2<\infty
    \quad\mbox{ and }\quad
    \sup_{t\ge0}
    \mathbb E\|h_a(\widetilde Y_t,\widetilde Z_t)\|^2
    <\infty.
\]
Since the \(N\) simulation replications
are independent,
\[
    \mathbb E\left[
        \left\|
            \widehat g_{N,t}(a)-\mathbb E[h_a(Y_t,Z_t)]
        \right\|^2
    \right]
    =
    \frac1N
    \mathbb E\left[
        \left\|
            h_a(Y_t,Z_t)-\mathbb E[h_a(Y_t,Z_t)]
        \right\|^2
    \right]  
    \le \frac{C}{N}.
\]

Second, we control the finite-time bias. By the stationarity of the comparison copy,
\[
    \left\|
        \mathbb E[h_a(Y_t,Z_t)]-\mathbb E_{\Pi_a}[h_a(Y,Z)]
    \right\|
    =
    \left\|
        \mathbb E[h_a(Y_t,Z_t)]-\mathbb E[h_a(\widetilde Y_t,\widetilde Z_t)]
    \right\|  
    \le
    \mathbb E\|h_a(Y_t,Z_t)-h_a(\widetilde Y_t,\widetilde Z_t)\|.
\]
Since \(s<1\),
\[
    \|h_a(Y_t,Z_t)-h_a(\widetilde Y_t,\widetilde Z_t)\|
    \le
    \|h_a(Y_t,Z_t)-h_a(\widetilde Y_t,\widetilde Z_t)\|^{s/2}
    \bigl(\|h_a(Y_t,Z_t)\|+\|h_a(\widetilde Y_t,\widetilde Z_t)\|\bigr)^{1-s/2}.
\]
By Cauchy--Schwarz,
\[
\begin{aligned}
    \mathbb E\|h_a(Y_t,Z_t)-h_a(\widetilde Y_t,\widetilde Z_t)\|
    &\le
    \left(
        \mathbb E\|h_a(Y_t,Z_t)-h_a(\widetilde Y_t,\widetilde Z_t)\|^{s}
    \right)^{1/2}
    \left(
        \mathbb E
        \bigl[
            (\|h_a(Y_t,Z_t)\|+\|h_a(\widetilde Y_t,\widetilde Z_t)\|)^{2-s}
        \bigr]
    \right)^{1/2}.
\end{aligned}
\]
The second factor is uniformly bounded by the uniform second-moment bound,
because \(2-s<2\). The first factor is controlled by the third condition in Assumption~\ref{ass:lyap-untruncated-ipa}. Therefore,
\[
    \mathbb E\|h_a(Y_t,Z_t)-h_a(\widetilde Y_t,\widetilde Z_t)\|
    \le
    C \sqrt t\beta^{-t/2}.
\]
Consequently,
\[
    \left\|
        \mathbb E[h_a(Y_t,Z_t)]-\mathbb E_{\Pi_a}[h_a(Y,Z)]
    \right\|^2
    \le
    C t \beta^{-t}.
\]

Combining the variance and squared bias bounds gives
\[
    \mathbb E\left[
        \left\|
            \widehat g_{N,t}(a)-\nabla G(a)
        \right\|^2
    \right]
    \le
    C\left(
        \frac1N
        +
        t\beta^{-t}
    \right).
\]
\Halmos
\endproof

\section{G/G/c Queue Appendix} \label{app:GGC}
\subsection{Uniform ergodicity and stationary perturbation}

We begin with two elementary consequences of the Kiefer--Wolfowitz recursion.

\begin{lemma}[One-step nonexpansiveness and parameter sensitivity]
\label{lem:ggc-one-step}
For every $a,b\in Z$, $w,x,y\in\mathcal S$, and probability measures $\mu,\nu$ with finite first moments,
\begin{align}
    W_1(\delta_wP_a,\delta_wP_b)
    &\le cL_T\|a-b\|,                                             \label{eq:ggc-one-step-parameter}\\
    W_1(\mu P_a,\nu P_a)
    &\le W_1(\mu,\nu),                                            \label{eq:ggc-one-step-state}
\end{align}
where $c$ is the number of servers and $L_T$ is defined in Assumption \ref{ass:ggc}.
\end{lemma}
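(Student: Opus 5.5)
Both bounds will come from coupling arguments built on the Kiefer--Wolfowitz map $F(w,s,t)=r\bigl((w+se_1-t\mathbf 1)^+\bigr)$. The key deterministic fact is that $F$ is $1$-Lipschitz in $w$ and $c$-Lipschitz in $t$ with respect to the $L_1$ norm on $\mathcal S$. I will first prove these pointwise Lipschitz properties and then lift them to Wasserstein bounds through synchronous couplings.

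For the deterministic step, fix $s,t$ and $w,x\in\mathcal S$. The affine map $w\mapsto w+se_1-t\mathbf 1$ is an isometry in $L_1$. The coordinatewise positive part is $1$-Lipschitz coordinatewise, so it is $1$-Lipschitz in $L_1$. For the sorting map $r$, I will use the standard rearrangement inequality $\|r(u)-r(v)\|_1\le\|u-v\|_1$. This holds because sorting both vectors minimizes $\sum_i|u_{\sigma(i)}-v_{\pi(i)}|$ over pairings; equivalently, the $L_1$ distance between sorted vectors equals the $W_1$ distance between the empirical measures, which is at most the distance under any pairing. Here there is a subtlety: the two vectors entering $r$ come from $w$ and $x$ with the same $se_1$ added, but $w$ and $x$ are already sorted, so coordinate $i$ of one is naturally paired with coordinate $i$ of the other, and the inequality applies directly. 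This gives $\|F(w,s,t)-F(x,s,t)\|_1\le\|w-x\|_1$. Similarly, changing $t$ to $t'$ moves each of the $c$ coordinates of $w+se_1-t\mathbf 1$ by $|t-t'|$, so $\|F(w,s,t)-F(w,s,t')\|_1\le c|t-t'|$.

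For \eqref{eq:ggc-one-step-state}, I take an optimal coupling $(X,Y)$ of $(\mu,\nu)$ for $W_1$, which exists on Polish spaces with finite first moments. I draw $(S,T(a))$ independently of $(X,Y)$ and feed the same $(S,T)$ to both chains. The resulting pair is a coupling of $(\mu P_a,\nu P_a)$, and $\mathbb E\|F(X,S,T)-F(Y,S,T)\|_1\le\mathbb E\|X-Y\|_1=W_1(\mu,\nu)$.

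For \eqref{eq:ggc-one-step-parameter}, I fix $w$ and take an optimal $W_1$ coupling $(T_a,T_b)$ of $\mathrm{Law}(T(a))$ and $\mathrm{Law}(T(b))$ on $\mathbb R$; the quantile coupling will do. I use a common $S$ independent of both. Then $W_1(\delta_wP_a,\delta_wP_b)\le\mathbb E\|F(w,S,T_a)-F(w,S,T_b)\|_1\le c\,\mathbb E|T_a-T_b|\le cL_T\|a-b\|$, where the last step is Assumption~\ref{ass:ggc}(1).

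I expect the main obstacle to be stating the rearrangement inequality cleanly. I will handle it by the two-element swap argument: if $u_i<u_j$ but $v_i>v_j$, swapping so that both are in the same order does not increase $|u_i-v_i|+|u_j-v_j|$. Repeating such swaps until both vectors are in the same order shows that the sorted pairing is optimal. The finite-first-moment requirement only ensures that every quantity above is finite.
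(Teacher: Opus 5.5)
Your proposal is correct and follows the paper's argument. The deterministic bounds $\|F(w,s,t)-F(x,s,t)\|_1\le\|w-x\|_1$ and $\|F(w,s,t)-F(w,s,t')\|_1\le c|t-t'|$ come from $L_1$-nonexpansiveness of the positive part and of sorting. They are then lifted to Wasserstein bounds by synchronous couplings: a $W_1$-optimal coupling of $T(a),T(b)$ with common service for the parameter bound, and common service and interarrival times for the state bound. The differences are cosmetic: you take an optimal initial coupling rather than an infimum over couplings, and you supply the swap proof of sorting nonexpansiveness. That inequality holds for arbitrary vectors, so your aside about $w$ and $x$ already being sorted is unnecessary but harmless.
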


\proof{Proof.}
Since both the coordinatewise positive-part and sorting maps are nonexpansive in $L_1$,
\[
    \|F(w,s,t)-F(w,s,t')\|_1
    \le c|t-t'|, \quad
    \|F(w,s,t)-F(w',s,t)\|_1
    \le \|w-w'\|_1.
\]
For \eqref{eq:ggc-one-step-parameter}, optimally couple $T(a)$ and $T(b)$ in $W_1$ and use the same service time in both systems. For \eqref{eq:ggc-one-step-state}, start with an arbitrary coupling of $X\sim\mu$ and $Y\sim\nu$ and use common service and interarrival times. Taking expectations and then the infimum over initial couplings proves the result.\Halmos
\endproof

The next lemma supplies a uniform exponential drift.

\begin{lemma}[Uniform exponential drift]
\label{lem:ggc-drift}
Under Assumption~\ref{ass:ggc}(2)--(3), there exist $\theta_\star>0$ and $\eta\in(0,1)$ such that, with
$V(w)=\frac1c\sum_{i=1}^ce^{\theta_\star w_i}$,
we have
\begin{equation}
\label{eq:ggc-drift}
    P_aV(w)\le 1+\eta V(w),
    \qquad a\in A,\ w\in\mathcal S.
\end{equation}
\end{lemma}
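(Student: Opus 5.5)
We compute $P_aV(w)=\frac1c\,\mathbb E\sum_{i=1}^c e^{\theta W'_i}$, where $W'=F(w,S,T(a))$. Since sorting only permutes coordinates, it leaves the sum unchanged, so we may drop $r$ and work with the unsorted vector $(w+Se_1-T{\bf 1})^+$. For each coordinate we use $e^{\theta x^+}\le 1+e^{\theta x}$, which gives
\[
    P_aV(w)\le 1+\frac1c\,\mathbb E\Bigl[e^{\theta(w_1+S-T)}+\sum_{i=2}^c e^{\theta(w_i-T)}\Bigr]
    = 1+\frac1c\,\mathbb E\bigl[e^{-\theta T(a)}\bigr]\Bigl(e^{\theta w_1}\mathbb E e^{\theta S}+\sum_{i\ge 2}e^{\theta w_i}\Bigr).
\]
This step uses independence of $S$ and $T(a)$. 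Stochastic dominance $T(a)\succeq_{\rm st}T_\star$ gives $\mathbb E e^{-\theta T(a)}\le \mathbb E e^{-\theta T_\star}$ uniformly in $a$.

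The naive bound $\max\{\mathbb E e^{\theta S},1\}\,\mathbb E e^{-\theta T_\star}$ need not be below one when $c\ge 2$. The reason is that only the smallest coordinate receives the service time, so a single step does not contract the larger coordinates enough on its own. The fix is to use a weighted or multi-step Lyapunov function. The cleaner route, which is the one I would take, is to exploit $w_1\le w_i$ together with a refined bound on $\mathbb E[e^{\theta(w_1+S)}]+\sum_{i\ge2}e^{\theta w_i}$.

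For small $\theta$, expand in first order. We have $\mathbb E e^{\theta S}=1+\theta\mathbb E S+O(\theta^2)$ and $\mathbb E e^{-\theta T_\star}=1-\theta\mathbb E T_\star+o(\theta)$; the latter holds since $\mathbb E T_\star<\infty$. The weighted sum is $\sum_i e^{\theta w_i}+(\mathbb E e^{\theta S}-1)e^{\theta w_1}\le cV(w)\bigl(1+(\mathbb E e^{\theta S}-1)/c\bigr)$, because $e^{\theta w_1}\le V(w)$ as $w_1$ is the minimum coordinate. The resulting factor is
\[
    \bigl(1+\theta\,\mathbb E S/c+O(\theta^2)\bigr)\bigl(1-\theta\,\mathbb E T_\star+o(\theta)\bigr)
    = 1-\theta\bigl(\mathbb E T_\star-\mathbb E S/c\bigr)+o(\theta).
\]
By Assumption~\ref{ass:ggc}(3), $\mathbb E T_\star-\mathbb E S/c\ge\epsilon\,\mathbb E T_\star>0$, so the factor is strictly below one for small $\theta$. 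This is exactly where the per-server averaging in $V$ pays off. Choose $\theta_\star\le\bar\theta$ small enough that $\eta:=\bigl(1+(\mathbb E e^{\theta_\star S}-1)/c\bigr)\mathbb E e^{-\theta_\star T_\star}<1$.

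Assembling the pieces gives $P_aV(w)\le 1+\eta V(w)$, with constants independent of $a$.

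The main obstacle is justifying the expansion $\mathbb E e^{-\theta T_\star}=1-\theta\mathbb E T_\star+o(\theta)$ uniformly. I would handle it by noting $(1-e^{-\theta t})/\theta\uparrow t$ as $\theta\downarrow0$, so monotone convergence applies. The $O(\theta^2)$ term for $S$ is controlled by $\mathbb E e^{\bar\theta S}<\infty$ via dominated convergence.
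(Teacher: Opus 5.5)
Your proposal is correct and is essentially the paper's own proof. Both arguments bound each coordinate by $e^{\theta x^+}\le 1+e^{\theta x}$, use independence of $S$ and $T(a)$, and apply $e^{\theta w_1}\le V(w)$ to obtain the factor $\varphi(\theta)=\mathbb E[e^{-\theta T_\star}]\bigl(1+(\mathbb E e^{\theta S}-1)/c\bigr)$. They then conclude from $\varphi(0)=1$ and $\varphi'(0)=-\mathbb E T_\star+\mathbb E S/c<0$. Your monotone and dominated convergence justification of the one-sided derivative at $0$ spells out a step the paper leaves implicit, and the aside about weighted or multi-step Lyapunov functions is unnecessary, since the route you actually take already works.
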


\proof{Proof.}
For $\theta\in[0,\bar\theta]$, the inequality
$e^{\theta(x-t)^+}\le 1+e^{-\theta t}e^{\theta x}$
and symmetry of $V_\theta(w):=c^{-1}\sum_i e^{\theta w_i}$ under permutations give
\[
    P_aV_\theta(w)
    \le
    1+\mathbb E[e^{-\theta T(a)}]
    \left(1+\frac{\mathbb E[e^{\theta S}]-1}{c}\right)V_\theta(w)
    \le
    1+\varphi(\theta)V_\theta(w),
\]
where
\[
    \varphi(\theta)
    :=\mathbb E[e^{-\theta T_\star}]
      \left(1+\frac{\mathbb E[e^{\theta S}]-1}{c}\right).
\]
Here we used $T(a)\succeq_{\rm st}T_\star$. Because $\varphi(0)=1$ and
\[
    \varphi'(0)
    =-\mathbb ET_\star+\frac{\mathbb ES}{c}
    \le-\epsilon\mathbb ET_\star<0,
\]
there exists $\theta_\star>0$ sufficiently small that $\eta:=\varphi(\theta_\star)<1$. \Halmos
\endproof

We next verify a strict contraction on a Lyapunov sublevel set. 

\begin{lemma}[Reset contraction on the coupling set]
\label{lem:ggc-reset-coupling}
Let $\theta_\star$ and $\eta$ be as in Lemma~\ref{lem:ggc-drift}. There exist $d>2/(1-\eta)$ and $p_\star>0$, independent of $a$, such that, for
$\mathcal C:=\{(x,y):V(x)+V(y)<d\}$,
we have
\begin{align}
    W_1(P_a(x,\cdot),P_a(y,\cdot))
    &\le(1-p_\star)\|x-y\|_1,
    &&(x,y)\in\mathcal C,                                      \label{eq:ggc-local-contract}\\
    W_1(P_a(x,\cdot),P_a(y,\cdot))
    &\le\|x-y\|_1,
    &&(x,y)\notin\mathcal C.                                   \label{eq:ggc-global-nonexpand}
\end{align}
\end{lemma}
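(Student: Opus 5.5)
The plan is to build a synchronous (common-noise) coupling and extract a contraction from the event that one long interarrival time empties both systems. The second inequality \eqref{eq:ggc-global-nonexpand} needs no new work: use the same $S$ and the same $T(a)$ in both copies. Since $F(\cdot,s,t)$ is $1$-Lipschitz in $L_1$, as shown in the proof of Lemma~\ref{lem:ggc-one-step}, we get $\mathbb E\|F(x,S,T)-F(y,S,T)\|_1\le\|x-y\|_1$ for every pair $(x,y)$. This bound holds in particular for pairs outside $\mathcal C$.

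For \eqref{eq:ggc-local-contract} I will use the same synchronous coupling and refine the bound on a reset event. First fix $d>2/(1-\eta)$, with $\eta$ from Lemma~\ref{lem:ggc-drift}. If $V(x)+V(y)<d$, then since $V(w)\ge c^{-1}e^{\theta_\star w_c}$, every coordinate of $x$ and $y$ is at most
\[
M:=\theta_\star^{-1}\log(cd).
\]
This bound depends only on $d$ and $\theta_\star$, not on $a$.

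Next define the event
\[
E:=\{S\le s_0,\ T(a)\ge M+s_0\}.
\]
On $E$ we have $x+Se_1-T{\bf 1}\le 0$ coordinatewise, and the same holds for $y$. Hence $F(x,S,T)=F(y,S,T)=0$, and the coupled distance after one step is $0$. Off $E$, nonexpansiveness still gives distance at most $\|x-y\|_1$. Therefore
\[
W_1(P_a(x,\cdot),P_a(y,\cdot))\le (1-\mathbb P(E))\|x-y\|_1 .
\]

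It remains to bound $\mathbb P(E)$ from below uniformly in $a$. Because $S$ is independent of $T(a)$, and $T(a)\succeq_{\rm st}T_\star$,
\[
\mathbb P(E)\ge \mathbb P(S\le s_0)\,\mathbb P(T_\star\ge M+s_0)=:p_\star .
\]
Here $\mathbb P(S\le s_0)>0$ by Assumption~\ref{ass:ggc}(3). The factor $\mathbb P(T_\star\ge M+s_0)$ is also positive, because $T_\star$ has unbounded support. The resulting $p_\star$ does not depend on $a$.

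The only delicate point is the order of choices: $d$ must be fixed before $M$ and $p_\star$, so that $p_\star$ is a positive constant determined by $d$. There is no circularity, because $d$ is constrained only by $d>2/(1-\eta)$. So I take, for instance, $d=3/(1-\eta)$.

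One more remark concerns the setup. The lemma is stated with $W_1$ defined through an $L_1$ coupling of the one-step laws. Any coupling gives an upper bound on $W_1$, so exhibiting the synchronous coupling is enough.
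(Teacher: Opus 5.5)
Your proof is correct and is essentially the paper's argument. Both use a synchronous coupling with common $(S,T(a))$, pathwise $L_1$-nonexpansiveness of $F$ off the reset event, the event $\{S\le s_0,\ T(a)\ge M+s_0\}$ that sends both workloads to $0$, the Lyapunov sublevel bound confining $\mathcal C$ to a bounded set, and $T(a)\succeq_{\rm st}T_\star$ to make $p_\star$ uniform in $a$; the only difference is the order of choices. The paper first fixes an $L_1$-radius $a_0$ and then takes $d\in(2/(1-\eta),\,c^{-1}e^{\theta_\star a_0/c}]$, whereas you fix $d$ first and derive the coordinatewise threshold $M=\theta_\star^{-1}\log(cd)$, which works equally well.
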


\proof{Proof.}
Choose $a_0$ large enough that
\begin{equation}
\label{eq:a0-choice}
    a_0>\frac c{\theta_\star}\log\frac{2c}{1-\eta}.
\end{equation}
This enlargement is harmless because $T_\star$ has unbounded support. Let
$C_0:=\{w\in\mathcal S:\|w\|_1\le a_0\}$.
By Assumption~\ref{ass:ggc}(3), 
\[
    p_\star
    :=\mathbb P\{S\le s_0\}\mathbb P\{T_\star\ge a_0+s_0\}>0.
\]
For any $x\in C_0$, the event
$E:=\{S\le s_0,\ T(a)\ge a_0+s_0\}$
forces $F(x,S,T(a))=0$. Therefore, under the common-noise coupling and for $x,y\in C_0$,
\[
    \mathbb E\|F(x,S,T(a))-F(y,S,T(a))\|_1
    \le(1-p_\star)\|x-y\|_1.
\]

Choose
$d\in\left(\frac{2}{1-\eta},\frac1c e^{\theta_\star a_0/c}\right]$.
The interval is nonempty by \eqref{eq:a0-choice}. If $V(w)<d$, then
$e^{\theta_\star w_i}\le cV(w)<cd$ for every $i$, and hence
\[
    \|w\|_1
    \le\frac c{\theta_\star}\log(cd)
    \le a_0.
\]
Thus $\mathcal C\subseteq C_0\times C_0$, and the reset coupling proves \eqref{eq:ggc-local-contract}. Equation \eqref{eq:ggc-global-nonexpand} is Lemma~\ref{lem:ggc-one-step}. \Halmos
\endproof

\proof{Proof of Theorem~\ref{thm:ggc-perturbation}.}
We apply the generalized drift-and-contraction criterion of \citep{qin2022geometric} with the metric
$\psi(x,y):=\|x-y\|_1$.
The metric is controlled by the Lyapunov function because, for $u\ge0$, $u\le e^{\theta_\star u}/\theta_\star$, and therefore
\begin{equation}
\label{eq:metric-dominated-by-V}
    \|x-y\|_1
    \le\|x\|_1+\|y\|_1
    \le\frac c{\theta_\star}\bigl(V(x)+V(y)\bigr).
\end{equation}
Lemma~\ref{lem:ggc-drift} gives the required drift inequality. Lemma~\ref{lem:ggc-reset-coupling} gives strict contraction on the prescribed Lyapunov sublevel set and global nonexpansiveness outside it. Because the outside factor is one, the remaining compatibility condition in the generalized contraction theorem is automatic. All constants are uniform in $a$.

Corollary 2.1 together with Propositions 2.9--2.10 of \citep{qin2022geometric} therefore gives, for every $a\in A$, a unique invariant distribution $\pi_a$ and constants $C_0<\infty$ and $\rho\in(0,1)$, independent of $a$, such that
\begin{equation}
\label{eq:ggc-pointwise-W1}
    W_1(\delta_wP_a^n,\pi_a)
    \le C_0(1+V(w))\rho^n,
    \qquad w\in\mathcal S.
\end{equation}

We next establish the stationary $V$-moment. Starting from $W_0=0$, iteration of \eqref{eq:ggc-drift} gives
\[
    \sup_{a\in A}\sup_{n\ge0}\mathbb E_0[V(W_n(a))]
    \le V(0)+\frac1{1-\eta}<\infty.
\]
Equation \eqref{eq:ggc-pointwise-W1} implies weak convergence of $\delta_0P_a^n$ to $\pi_a$. Since $V$ is nonnegative and lower semicontinuous, the Portmanteau theorem yields
\[
    \pi_aV
    \le\liminf_{n\to\infty}\mathbb E_0[V(W_n(a))]
    \le V(0)+\frac1{1-\eta}.
\]
Hence $\sup_a\pi_aV<\infty$. Applying the same Wasserstein contraction bound to an initial law $\nu$ and using $\nu V<\infty$ now gives
\begin{equation}
\label{eq:ggc-uniform-W1}
    W_1(\nu P_a^n,\pi_a)
    \le C_0(1+\nu V)\rho^n.
\end{equation}

It remains to prove the stationary perturbation bound. By Lemma~\ref{lem:ggc-one-step},
\[
W_1(\delta_wP_a,\delta_wP_b)\le L_0\|a-b\|,
    \quad L_0:=cL_T,
    \quad
    W_1(\mu P_a,\nu P_a)\le W_1(\mu,\nu).
\]
Moreover, \eqref{eq:ggc-uniform-W1} and the uniform stationary $V$-moment imply that, for some $K_0<\infty$,
\[
    W_1(\pi_bP_a^n,\pi_a)\le K_0\rho^n,
    \qquad a,b\in A.
\]
Thus the three conditions of Theorem~\ref{th:log-Lipschitz} hold with the unbounded $L_1$ metric. Applying that theorem along with Remark \ref{rem:bunb} gives constants $C,B<\infty$ such that
\[
    W_1(\pi_a,\pi_b)
    \le C\|a-b\|
    \left(1+\log_+\frac{B}{\|a-b\|}\right).
\]
The stated bound for $G=H\circ\pi$ follows immediately when $H$ is Lipschitz in $W_1$. \Halmos
\endproof

\proof{Proof of Corollary~\ref{cor:ggc-existence}.}
Since
\[
    w_1^k
    \le \frac{k!}{\theta_\star^k}e^{\theta_\star w_1}
    \le \frac{k!c}{\theta_\star^k}V(w),
\]
we have $\sup_{a\in Z}\int w_1^k\,\pi_a(dw)<\infty$.
Assumption~\ref{ass:ggc}(4) therefore gives a constant $R_0<\infty$ such that
$\sup_{a\in Z}\|G(a)\|\le R_0$.
Choose $a_0\in A$ and define
$K:=Z\cap\{a\in\mathbb R^d:\|a\|\le\max\{R_0,\|a_0\|\}\}$.
Because $A$ is nonempty, closed, and convex, $K$ is nonempty, compact, and convex. Since $G(a)\in A$ and $\|G(a)\|\le R_0$, we have $G(K)\subseteq K$.

To show continuity, let $a_n\to a$. Theorem~\ref{thm:ggc-perturbation} gives $W_1(\pi_{a_n},\pi_a)\to0$, hence $\pi_{a_n}\Rightarrow\pi_a$. The uniform exponential moment gives uniform integrability of every fixed polynomial moment, so Assumption~\ref{ass:ggc}(4) yields
$G(a_n)=H(\pi_{a_n})\to H(\pi_a)=G(a)$. Brouwer's fixed-point theorem now gives $a^\star\in K$ with $G(a^\star)=a^\star$.\Halmos
\endproof

\subsection{Scalar Strategic G/G/c Queue}
\proof{Proof of Proposition \ref{prop:scalar-ggc}.}
Fix $a\le b$. By the stochastic-order assumption, one may couple the interarrival times so that
$T(b)\ge T(a)$ almost surely. Use the same service requirements in both frozen systems and initialize both at zero. The Kiefer--Wolfowitz map is coordinatewise increasing in the workload and decreasing in the interarrival time. Induction therefore gives
$W_n(b)\le W_n(a)$ coordinatewise for every $n$.
Passing to the stationary limit gives stochastic dominance of the waiting-time marginals:
$\pi_b^{(1)}\preceq_{\rm st}\pi_a^{(1)}$. Since $H$ is increasing in this order,
\[
    G(b)=H(\pi_b)\le H(\pi_a)=G(a).
\]
Thus $G$ is decreasing. Consequently,
\[
    (G(a)-G(b))(a-b)\le0,
\]
so the incremental one-sided condition holds with constant zero. Theorem~\ref{thm:ggc-perturbation} and the $W_1$-Lipschitz property of $H$ give the required log-Lipschitz bound. Proposition~\ref{prop:Log-Lip} therefore yields convergence of RFPI with $\delta_k=\delta_0(k+1)^{-2/3}$ for sufficiently small $\delta_0$.

Finally, a decreasing scalar map has at most one fixed point: if $a<b$ were both fixed points, then $a=G(a)\ge G(b)=b$, a contradiction. Existence follows from Corollary~\ref{cor:ggc-existence}. \Halmos
\endproof

\subsection{Mean-Variance Feedback M/M/1 Queue}
Define the active region
$A_0:=\{a\in K:\lambda(a)>0\}$.
On $A_0$, direct differentiation of \eqref{eq:mm1-response} gives
\begin{align}
    \partial_{a_1}G_1(a)
    &=-\frac{\alpha_1+\gamma\rho(a)(2-\rho(a))}{\Delta(a)^2},
    &
    \partial_{a_2}G_1(a)
    &=-\frac{\alpha_2}{\Delta(a)^2},                                      \label{eq:mm1-deriv-mean}\\
    \partial_{a_1}G_2(a)
    &=-\frac{2\alpha_1+2\gamma\rho(a)(\rho(a)^2-3\rho(a)+3)}{\Delta(a)^3},
    &
    \partial_{a_2}G_2(a)
    &=-\frac{2\alpha_2}{\Delta(a)^3}.                                    \label{eq:mm1-deriv-var}
\end{align}
Let
$S(a):=(\nabla G(a)+\nabla G(a)^\top)/2$
be the symmetric part of the Jacobian on $A_0$. Its entries are
\begin{align*}
    S_{11}(a)
    &=-\frac{\alpha_1+\gamma\rho(a)(2-\rho(a))}{\Delta(a)^2}, \quad
    S_{22}(a)
    =-\frac{2\alpha_2}{\Delta(a)^3},\\
    S_{12}(a)
    &=-\frac{\alpha_2\Delta(a)+2\alpha_1+2\gamma\rho(a)(\rho(a)^2-3\rho(a)+3)}{2\Delta(a)^3}.
\end{align*}

\begin{lemma}[A sufficient one-sided condition]
\label{lem:mm1-osl}
Suppose that, on the active-side closure,
\begin{equation}
\label{eq:mm1-exact-osl}
\left(
    \frac{\alpha_2\Delta(a)+2\alpha_1+2\gamma\rho(a)(\rho(a)^2-3\rho(a)+3)}{2\Delta(a)^3}
\right)^2
<
\left(
    1+\frac{\alpha_1+\gamma\rho(a)(2-\rho(a))}{\Delta(a)^2}
\right)
\left(
    1+\frac{2\alpha_2}{\Delta(a)^3}
\right).
\end{equation}
Then there exists $c_{\rm os}<1$ such that
$S(a)\preceq c_{\rm os}I$ throughout the active region.
Condition \eqref{eq:mm1-exact-osl} is implied by \eqref{eq:mm1-simple-sufficient}.
\end{lemma}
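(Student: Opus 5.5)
The key observation is that $S(a)$ is a symmetric $2\times2$ matrix with negative diagonal entries. Consequently, $S(a)\preceq c\,I$ for some $c<1$ holds exactly when $I-S(a)$ is uniformly positive definite. For a $2\times 2$ symmetric matrix, that reduces to two checks: positive diagonal entries of $I-S(a)$, and a positive determinant. Writing $D_1(a):=1+\frac{\alpha_1+\gamma\rho(2-\rho)}{\Delta^2}$ and $D_2(a):=1+\frac{2\alpha_2}{\Delta^3}$ for the diagonal entries of $I-S(a)$, both are at least $1$, because $\rho(a)\in[0,1)$ on the active region. The off-diagonal entry of $I-S(a)$ is $-S_{12}(a)$, so the determinant condition is exactly \eqref{eq:mm1-exact-osl}. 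The pointwise positive definiteness of $I-S(a)$ on the active region is therefore immediate from \eqref{eq:mm1-exact-osl}.

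\textbf{Uniformity in $a$.} The smallest eigenvalue of $I-S(a)$ must be bounded below uniformly. Using $\lambda_{\min}\ge \det/\mathrm{tr}$ for a positive definite $2\times2$ matrix, it suffices that $\det(I-S(a))$ stays bounded away from $0$ and $\mathrm{tr}(I-S(a))$ stays bounded above.

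The trace is bounded above because $\Delta(a)\ge \nu_0-\tau_0=\Delta_0>0$ on $K$. This holds since $\lambda(a)\le\tau_0$ and $\nu(a)\ge\nu_0$. As a result, every entry of $S$ is bounded.

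For the determinant, I will use that $\rho(a)$ and $\Delta(a)$ are continuous on the compact closure of the active region. There $\rho\in[0,\rho_0]$ with $\rho_0<1$, and $\Delta\in[\Delta_0,\nu_0+\gamma\bar m]$. The gap in \eqref{eq:mm1-exact-osl} is a continuous function of $(\rho,\Delta)$ over this compact set. This is why the hypothesis is stated on the active-side closure: a strict inequality there yields a strictly positive minimum $\epsilon$. We then get $\lambda_{\min}(I-S(a))\ge \epsilon/\mathrm{tr}_{\max}=:1-c_{\rm os}>0$ on the active region.

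\textbf{Implication from \eqref{eq:mm1-simple-sufficient}.} The right side of \eqref{eq:mm1-exact-osl} is at least $1$, so it suffices to show that $|S_{12}(a)|<1$. I will bound
\[
|S_{12}|\le \frac{\alpha_2}{2\Delta^2}+\frac{\alpha_1+\gamma\rho(\rho^2-3\rho+3)}{\Delta^3}.
\]
The polynomial $\rho^2-3\rho+3$ is decreasing on $[0,1]$, since its vertex is at $3/2$. Hence $\rho(\rho^2-3\rho+3)\le 1$ for $\rho\in[0,1]$: its derivative $3(\rho-1)^2\ge0$, so it is increasing and equals $1$ at $\rho=1$. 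Combined with $\Delta\ge\Delta_0$, this gives $|S_{12}|\le \frac{\alpha_2}{2\Delta_0^2}+\frac{\alpha_1+\gamma}{\Delta_0^3}<1$. Because this bound is uniform in $a$ and strict, it also holds on the closure, so \eqref{eq:mm1-exact-osl} holds there.

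\textbf{Main obstacle.} The delicate point is uniformity near the boundary where $\lambda(a)\downarrow0$, together with making sure the strict inequality survives to the closure. The compactness argument above handles this. Under \eqref{eq:mm1-simple-sufficient}, it can alternatively be made explicit: $\det(I-S)\ge 1-S_{12}^2\ge 1-\kappa^2$, where $\kappa$ denotes the left side of \eqref{eq:mm1-simple-sufficient}, and this is bounded away from $0$.
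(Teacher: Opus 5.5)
Your proposal is correct and follows the paper's proof. Like the paper, you identify \eqref{eq:mm1-exact-osl} as $\det(I-S(a))>0$, use diagonal entries of $I-S(a)$ that are at least $1$ to get positive definiteness, obtain uniformity by compactness of the active-side closure, and deduce the simple condition from $|S_{12}(a)|\le \frac{\alpha_2}{2\Delta_0^2}+\frac{\alpha_1+\gamma}{\Delta_0^3}<1$; your only addition is an explicit uniform eigenvalue bound via $\lambda_{\min}\ge\det/\mathrm{tr}$, where the paper simply appeals to continuity on the compact closure.
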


\proof{Proof.}
Condition \eqref{eq:mm1-exact-osl} is exactly
$\det(I-S(a))>0$. The diagonal entries of $I-S(a)$ are positive, so $I-S(a)\succ0$. By continuity on the compact active-side closure, the smallest eigenvalue is uniformly bounded away from zero. Hence $S(a)\preceq c_{\rm os}I$ for some $c_{\rm os}<1$.

For the simpler sufficient condition, note that $\Delta(a)\ge\Delta_0$ and $0\le\rho(a)\le1$. Moreover,
$0\le\rho(2-\rho)\le1$ and $0\le\rho(\rho^2-3\rho+3)\le1$.
Therefore,
\[
    |S_{12}(a)|
    \le
    \frac{\alpha_2}{2\Delta_0^2}
    +\frac{\alpha_1+\gamma}{\Delta_0^3}
    <1.
\]
The right-hand side of \eqref{eq:mm1-exact-osl} is at least one, so \eqref{eq:mm1-exact-osl} follows. \Halmos
\endproof

\proof{Proof of Proposition~\ref{prop:mm1-mean-var}.}
For every $a\in K$, $0\le\lambda(a)\le\tau_0$, $\nu(a)\ge\nu_0$, and $\Delta(a)\ge\Delta_0$. Hence $0\le\rho(a)\le\rho_0$, and
\[
    G_1(a)\le\frac{\rho_0}{\Delta_0}=\bar m,
    \qquad
    G_2(a)\le\frac{\rho_0(2-\rho_0)}{\Delta_0^2}=\bar v.
\]
Thus $G(K)\subseteq K$.

The map $G$ is continuous across the boundary $\lambda(a)=0$, because both coordinates in \eqref{eq:mm1-response} tend to zero as $\lambda(a)\downarrow0$. It is continuously differentiable on the active and inactive interiors, and the derivatives on each side are uniformly bounded. Along every line segment in $K$, the boundary is crossed at most once. It follows by integrating the a.e. derivative along the segment that $G$ is Lipschitz on $K$.

Under \eqref{eq:mm1-simple-sufficient}, Lemma~\ref{lem:mm1-osl} gives $S(a)\preceq c_{\rm os}I$ on the active region. On the inactive interior, $G$ is constant and its Jacobian is zero, so the same upper bound holds after increasing $c_{\rm os}$ if necessary. For arbitrary $x,y\in K$, the path $a(t)=y+t(x-y)$ crosses the boundary at most once and $G(a(t))$ is absolutely continuous. Hence
\begin{align*}
    \langle G(x)-G(y),x-y\rangle
    &=\int_0^1(x-y)^\top \nabla G(a(t))(x-y)\,dt\\
    &=\int_0^1(x-y)^\top S(a(t))(x-y)\,dt
    \le c_{\rm os}\|x-y\|^2.
\end{align*}
The conclusion follows from Proposition~\ref{prop:Lip}. Since the left-hand side of \eqref{eq:mm1-simple-sufficient} tends to zero as $\Delta_0\to\infty$, the claimed finite threshold $\bar\Delta$ exists. \Halmos
\endproof

\subsection{IPA for Strategic G/G/c queue}\label{app:ggc-ipa-proof}
We first establish finite-horizon pathwise differentiability. For $a\in K^+$, define the pre-sorting vector
\[
    X_n(a):=W_n(a)+S_ne_1-\tau(a,U_n){\bf 1},
    \quad \mbox{ i.e., }
    W_{n+1}(a)=r(X_n(a)^+).
\]

\begin{lemma}[Finite-horizon pathwise derivative]
\label{lem:ggc-pathwise-derivative}
Under Assumption~\ref{ass:GGc-ipa}, for every fixed $a\in K^+$ and finite $n$, the map $a'\mapsto W_n(a')$ is differentiable at $a$ almost surely. The derivative satisfies
\[
    J_{n+1}(a)
    =R_n(a)\left(J_n(a)-{\bf 1}\nabla_a\tau(a,U_n)^\top\right),
\]
where $R_n(a)$ is a partial permutation matrix. In particular,
$\|R_n(a)\|\le1$, and $R_n(a)=0$ whenever $X_n(a)<0$ coordinatewise.
\end{lemma}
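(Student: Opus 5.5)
The plan is to argue by induction on $n$, proving at each step that, outside a null set of $(S_0,U_0,\dots,S_{n-1},U_{n-1})$, the map $a'\mapsto W_n(a')$ is differentiable at $a$. The key observation is that $W_{n+1}(a')=r(X_n(a')^+)$, and the composite map $x\mapsto r(x^+)$ is piecewise linear. Specifically, on the open set of $x\in\mathbb R^c$ with no zero coordinate and no ties among the strictly positive coordinates, the map is locally linear, $x\mapsto Rx$. Here $R$ is the partial permutation matrix that deletes the negative coordinates, whose images are $0$, and sorts the positive ones. The zeros are placed in the first slots of the ordered vector. In particular $R$ has at most one unit entry in each row and column, so $\|R\|\le1$, and $R=0$ when $X_n(a)<0$ coordinatewise.

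The induction step then goes as follows. Suppose $J_n(a)=\nabla_aW_n(a)$ exists a.s. By Assumption~\ref{ass:GGc-ipa}(2), $a'\mapsto\tau(a',U_n)$ is $C^1$ for a.e. $U_n$. The map $a'\mapsto X_n(a')=W_n(a')+S_ne_1-\tau(a',U_n){\bf 1}$ is therefore differentiable at $a$, with derivative $J_n(a)-{\bf 1}\nabla_a\tau(a,U_n)^\top$. If $X_n(a)$ lies in the open regularity set above, then $X_n(a')$ lies in the same linearity cell for $a'$ near $a$, because $X_n$ is continuous at $a$. The chain rule then yields the recursion with $R_n(a)$ defined as the matrix of that cell. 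The base case is trivial: $W_0=0$ and $J_0=0$, and the initial state is constant in $a$.

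The main obstacle is to show that $X_n(a)$ avoids the boundary set a.s. The boundary consists of $\{X_n(a)_i=0\}$ and $\{X_n(a)_i=X_n(a)_j>0,\ i\ne j\}$. I will condition on $\mathcal F_n:=\sigma(S_k,U_k:k<n)$, so that $W_n(a)$ is fixed and $(S_n,U_n)$ is independent of it. Each event is a linear condition on $(S_n,\tau(a,U_n))$ given $W_n(a)$, and I treat the event types in turn.
\begin{enumerate}
\item The event $X_n(a)_i=0$ reads $W_n(a)_i+S_n\mathbf 1\{i=1\}=\tau(a,U_n)$. Its probability is zero because $\tau(a,U_n)$ has a density and is independent of $(W_n(a),S_n)$.
\item A tie $X_n(a)_1=X_n(a)_j$ with $j\ne1$ reads $S_n=W_n(a)_j-W_n(a)_1$. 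It is null because the service distribution is non-atomic.
\item A tie $X_n(a)_i=X_n(a)_j$ with $i,j\ne1$ reduces to $W_n(a)_i=W_n(a)_j$, which need not be null; for instance, several coordinates of $W_n$ may equal zero. This is harmless for two reasons. Both tied coordinates are shifted by the same $-\tau$, so they move together: they are either both positive or both negative, the latter a.s. by item 1. If both are negative, both map to $0$ and no sorting ambiguity arises. If both are positive, I will check that the two candidate matrices $R$ give the same value of $R\,J$ on the relevant rows, because the tie persists to first order. The latter claim needs care. I plan to sidestep it by strengthening the induction hypothesis: I will carry along the claim that tied positive coordinates of $W_n(a')$ coincide identically for $a'$ near $a$, so that their rows of $J_n$ agree. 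Any choice of the tie-breaking permutation then gives the same product $R_n J$.
\end{enumerate}

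Taking a countable union over $i,j$ and then over $n$ gives the a.s. statement. Integrating the conditional null probabilities gives unconditional null sets.
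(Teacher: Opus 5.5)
Your proof is correct and follows the paper's proof in all essentials. Both arguments use induction on $n$ and the local linearity of $x\mapsto r(x^+)$ away from zero coordinates and positive ties. Both condition on the past and use the density of $\tau(a,U_n)$ and the non-atomicity of $S_n$ to rule out zero coordinates and ties involving coordinate one, and then apply the chain rule.

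The one place you diverge is your item 3. The paper takes as its inductive invariant that $W_n(a)$ has \emph{no positive ties} almost surely, and this invariant propagates for free. A positive tie in $W_{n+1}(a)$ between images of coordinates $i,j\ge2$ forces $W_n(a)_i=W_n(a)_j>\tau(a,U_n)\ge0$, which is a positive tie at the previous step. A tie involving coordinate one is your null event in item 2. The only non-null ties in $W_n(a)$ are therefore at zero, and you already dispose of those. Hence $X_n(a)$ lies in your open regularity set almost surely, and no persistence argument is needed.

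Your strengthened invariant, that positive ties persist identically near $a$, is valid and propagates by the same reasoning. However, it is vacuous almost surely and costs an extra local step. In that step, what actually delivers differentiability of $W_{n+1}$ is that $X_n(a')$ stays on the subspace $\{x_i=x_j\}$, where both candidate linear maps coincide with $r(x^+)$. It is not enough that the formula $R_nJ_n$ is independent of the tie-breaking choice.
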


\proof{Proof.}
Fix a finite horizon $N$. Work on the probability-one event on which $a'\mapsto\tau(a',U_m)$ is continuously differentiable for $m=0,\ldots,N-1$. We argue by induction.

At time zero, $W_0$ is independent of $a$, so $J_0=0$. Suppose $W_n(a)$ has no positive ties and is differentiable at $a$. Conditional on the past and on $S_n$, each coordinate of $X_n(a)$ is a fixed number minus $\tau(a,U_n)$. Since $\tau(a,U_n)$ has a density and is independent of the past and $S_n$,
\[
    \mathbb P\{X_{n,i}(a)=0\}=0.
\]
Positive ties between coordinates $i,j\ge2$ would imply a positive tie in $W_n(a)$, which has probability zero by induction. A positive tie between coordinate one and coordinate $j\ge2$ would require
$S_n=W_{n,j}(a)-W_{n,1}(a)$; this event has probability zero because $S_n$ is non-atomic and independent of the past. Thus $X_n(a)$ has no zero coordinate and $X_n(a)^+$ has no positive ties almost surely.

At such a point, the map $x\mapsto r(x^+)$ is differentiable. Its derivative is the partial permutation matrix $R_n(a)$ that discards negative coordinates and reorders the positive coordinates. The chain rule gives the stated recursion. A partial permutation matrix has Euclidean operator norm at most one. If every coordinate of $X_n(a)$ is strictly negative, the derivative of the positive-part map is zero, so $R_n(a)=0$. A countable intersection over finite $N$ completes the proof. \Halmos
\endproof

We next state the regenerative stability lemma used to control the augmented process. It is useful beyond the present queueing model.

\begin{lemma}[Regenerative stability of a nonexpansive affine recursion]
\label{lem:reset-affine-L2}
Consider a family of Markov chains $(Y_n,Z_n)$ indexed by $a\in K^+$, with
$Z_{n+1}=\Gamma_nZ_n+\xi_n$.
Suppose there is a common accessible atom $x^\circ=(y^\circ,0)$ and, with
$\sigma_a:=\inf\{n\ge1:(Y_n,Z_n)=x^\circ\}$,
there exist $r_0>1$, $C<\infty$, and $V\ge1$, all independent of $a$, such that
\begin{equation}
\label{eq:reset-exp-return}
    \mathbb E_{a,y,z}[r_0^{\sigma_a}]
    \le CV(y),
    \qquad
    \inf_{a\in K^+}\mathbb P_{a,x^\circ}\{\sigma_a=1\}>0.
\end{equation}
Assume also that, for some $q>2$,
$\|\Gamma_n\|\le1$, $\|\xi_n\|\le B_n$, and $\sup_{a,n}\mathbb E[B_n^q\mid\mathcal F_n]\le C$ a.s.
Then, each augmented chain has a unique invariant distribution $\Pi_a$, and there are constants $C<\infty$ and $\beta\in(0,1)$, uniform in $a$, such that
\begin{align}
    \sup_{n\ge0}\mathbb E_{a,y,z}\|Z_n\|^2
    &\le CV(y)(1+\|z\|^q),                                      \label{eq:reset-L2}\\
    \left\|
        \mathbb E_{a,y,z}g(Y_n,Z_n)-\Pi_ag
    \right\|
    &\le CL_gV(y)(1+\|z\|^q)\beta^n                              \label{eq:reset-fconv}
\end{align}
for every measurable vector-valued $g$ satisfying
$\|g(y,z)\|\le L_g\|z\|$.
\end{lemma}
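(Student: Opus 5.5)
Throughout, write $x=(y,z)$ for the augmented state. The plan is a regenerative (atom-splitting) argument built on the excursion structure at $x^\circ$.

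\textbf{Step 1: invariant distribution.} The exponential return-time bound \eqref{eq:reset-exp-return} makes $x^\circ$ a positive recurrent accessible atom for each augmented chain. The condition $\inf_a\mathbb P_{a,x^\circ}\{\sigma_a=1\}>0$ makes the atom aperiodic. Hence, by standard atom theory \citep{meyn1993markov}, there is a unique invariant distribution
\[
\Pi_a g=\frac{\mathbb E_{a,x^\circ}\bigl[\sum_{k=0}^{\sigma_a-1}g(Y_k,Z_k)\bigr]}{\mathbb E_{a,x^\circ}\sigma_a}.
\]

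\textbf{Step 2: the $L^2$ bound \eqref{eq:reset-L2}.} Let $\tau$ be the last visit to $x^\circ$ at or before $n$. On the event that no visit has occurred by $n$, use $\|\Gamma\|\le1$ to get
\[
\|Z_n\|\le\|z\|+\sum_{k<n}B_k .
\]
After the last reset, $Z$ restarts from $0$, so $\|Z_n\|\le\sum_{k=\tau}^{n-1}B_k$. In either case $\|Z_n\|\le \|z\|+\sum_{k\in I_n}B_k$, where $I_n$ is the current excursion window, of length $\ell_n$. By Hölder with $q>2$,
\[
\mathbb E\Bigl(\sum_{k\in I_n}B_k\Bigr)^2\le \mathbb E\Bigl[\ell_n^{2}\max_{k\in I_n}B_k^2\Bigr]\le \bigl(\mathbb E\,\ell_n^{2q/(q-2)}\bigr)^{(q-2)/q}\bigl(\mathbb E\sum_{k\in I_n}B_k^q\bigr)^{2/q}.
\]
The conditional moment bound gives $\mathbb E\sum_{k\in I_n}B_k^q\le C\,\mathbb E\ell_n$, via optional stopping and summation over $k$ with $\mathbf 1\{k\in I_n\}$. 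The length $\ell_n$ is dominated by either the initial delay $\sigma_a$ or the excursion straddling $n$. The straddling excursion has a size-biased law, but its geometric tails persist with constants uniform in $n$ because the excursion law has exponential moments. Therefore all polynomial moments of $\ell_n$ are bounded by $CV(y)$. This yields \eqref{eq:reset-L2}, where the $\|z\|$ dependence arises from the pre-first-reset segment.

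\textbf{Step 3: convergence \eqref{eq:reset-fconv}.} Couple the chain started at $(y,z)$ with a stationary copy, using the splitting/renewal coupling at the atom: run both until they simultaneously hit $x^\circ$, then let them move together. Aperiodicity together with uniform geometric return times gives a coupling time $T_c$ with $\mathbb E r_1^{T_c}\le CV(y)$ for some uniform $r_1>1$; this is the standard Lindvall renewal-coupling bound. Then
\[
\|\mathbb E g(Y_n,Z_n)-\Pi_ag\|\le L_g\,\mathbb E\bigl[(\|Z_n\|+\|\widetilde Z_n\|)\mathbf 1\{T_c>n\}\bigr].
\]
Apply Cauchy--Schwarz with Step 2 (for the stationary copy, take expectations of \eqref{eq:reset-L2} under $\Pi_a$) and use $\mathbb P(T_c>n)\le CV(y)r_1^{-n}$. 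This gives the bound with $\beta=r_1^{-1/2}$, at the cost of a $V(y)^{1/2}\cdot V(y)^{1/2}$ factor, which is absorbed into $CV(y)(1+\|z\|^q)$.

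\textbf{Main obstacle.} The delicate point is uniformity in $a$ and $n$, in two places. The first is the size-biased current-excursion moment in Step 2. The second is the uniform geometric coupling rate, which requires a uniform aperiodicity constant. I would handle both with the explicit renewal-sequence estimates, namely a Kendall-type theorem with constants depending only on $r_0$, $C$, and the lower bound on $\mathbb P\{\sigma_a=1\}$. I would also confirm that $\Pi_a$ integrates $V$, so that the stationary copy's moments are finite.
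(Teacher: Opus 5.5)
Your Step 2 fails at the claim $\mathbb{E}\sum_{k\in I_n}B_k^q\le C\,\mathbb{E}\ell_n$, and the gap is genuine.

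\textbf{Why the step is not justified.} The window $I_n=\{\tau,\dots,n-1\}$ starts at the \emph{last} visit to $x^\circ$ before $n$, which is not a stopping time. The event $\{k\in I_n\}$ says there is no visit at times $k+1,\dots,n$, so it depends on the path after $k$, including on $B_k$ itself through $(Y_{k+1},Z_{k+1})$. Hence $\mathbb{E}[B_k^q\mid\mathcal F_k]\le C$ cannot be fed through optional stopping or Wald's identity.

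\textbf{The claim is false under the lemma's hypotheses.} The excursion straddling $n$ is size-biased, and the noise at its start may be correlated with its length. Here is an example.
\begin{itemize}
\item From $y^\circ$, with probability $\epsilon_j=e^{-j^3}$ the chain starts a deterministic excursion of length $T_j\asymp j^3$, chosen so that $\epsilon_j r_0^{T_j}\le j^{-2}$. Otherwise it stays at $y^\circ$ with $\Gamma=0$, $\xi=0$.
\item The departure step has $\xi=B=(j^{-2}/\epsilon_j)^{1/q}$.
\item The excursion steps have $\Gamma=1$, $\xi=0$, and the return step has $\Gamma=0$, $\xi=0$.
\end{itemize}
All hypotheses hold, with $\mathbb{E}[B^q\mid\mathcal F]\le\sum_j j^{-2}$, $\mathbb{E}_{x^\circ}r_0^{\sigma}<\infty$ and $\mathbb{P}(\sigma=1)>0$. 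Also $\mathbb{E}\ell_n$ stays bounded. Yet
\[
\mathbb{E}_{x^\circ}\sum_{k\in I_n}B_k^q\to\sum_j \frac{j^{-2}T_j}{\mathbb{E}_{x^\circ}\sigma}=\infty .
\]
So your global H\"older split, which puts the full $q$-th moment of $B$ on a non-adapted window, cannot give a bound uniform in $n$. The lemma itself still holds in this example: the stationary second moment is $\sum_j\epsilon_j^{1-2/q}j^{-4/q}T_j/\mathbb{E}_{x^\circ}\sigma<\infty$.

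\textbf{How to repair it.} Apply H\"older term by term, with the indicator on the length factor rather than on $B$. Set $p=q/(q-2)$ and use $(\sum_{k\in I_n}B_k)^2\le\ell_n\sum_{k\in I_n}B_k^2$ and $\{k\in I_n\}\subseteq\{\ell_n\ge n-k\}$. Then
\[
\mathbb{E}\Bigl(\sum_{k\in I_n}B_k\Bigr)^2\le\sum_{k<n}\bigl(\mathbb{E}B_k^q\bigr)^{2/q}\Bigl(\mathbb{E}\bigl[\ell_n^{p}\mathbf 1\{\ell_n\ge n-k\}\bigr]\Bigr)^{1/p}.
\]
The uniform geometric tail of $\ell_n$ that you already derive makes the right side a geometric series in $n-k$, bounded by $CV(y)$. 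Only the unconditional bound $\mathbb{E}B_k^q\le C$ is used, so arbitrary dependence between $B_k$ and excursion lengths is tolerated.

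\textbf{Comparison with the paper.} This termwise H\"older step is exactly the paper's device: on $\{j<\sigma_a\}$ it uses $r_1^{\sigma_a}\le r_2^{\sigma_a}(r_1/r_2)^j$. The paper applies it within a single cycle to get
\[
\mathbb{E}_{a,y,z}\Bigl[\sum_{k\le\sigma_a}s^k\bigl(1+\|Z_k\|^2\bigr)\Bigr]\le CV(y)\bigl(1+\|z\|^q\bigr).
\]
It then obtains \eqref{eq:reset-L2} from the renewal decomposition and \eqref{eq:reset-fconv} from the regenerative $f$-geometric ergodic theorem. Once Step 2 is repaired, your coupling-plus-Cauchy--Schwarz Step 3 is a reasonable, more hands-on substitute for that theorem.

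\textbf{One caveat in Step 3.} Do not obtain $\Pi_a\|z\|^2$ by integrating \eqref{eq:reset-L2} against $\Pi_a$. That requires $\Pi_a[V(y)\|z\|^q]<\infty$, which is infinite in the example above. Instead, use the occupation formula together with the (termwise-H\"older) cycle bound, or apply \eqref{eq:reset-L2} at $x^\circ$ and pass to the limit in $n$ using total-variation convergence and truncation.
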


\proof{Proof.}
Let $p:=q/(q-2)>1$. Choose $s>1$ and $r_1,r_2$ such that
$s^p<r_1<r_2<r_0$.
Before the return time $\sigma_a$, nonexpansiveness gives
\[
    \|Z_k\|
    \le\|z\|+\sum_{j=0}^{k-1}B_j.
\]
Also, $\sum_{k=0}^{m}s^k\le C_ss^m$. Hence
\begin{align}\label{eq:F1}
    \sum_{k=0}^{\sigma_a}s^k\|Z_k\|^2
    &\le
    C s^{\sigma_a}\|z\|^2
    +C s^{\sigma_a}\sigma_a
      \sum_{j=0}^{\sigma_a-1}B_j^2.                    
\end{align}
The expectation of the first term is bounded by
$CV(y)(1+\|z\|^q)$ because $s<r_0$ and $q>2$.

For the second term, Tonelli's theorem and H\"{o}lder's inequality with exponents $q/2$ and $p=q/(q-2)$ give
\begin{align}
\mathbb E_{a,y,z}\left[
    s^{\sigma_a}\sigma_a
    \sum_{j=0}^{\sigma_a-1}B_j^2
\right]
=\sum_{j\ge0}
\mathbb E_{a,y,z}\left[
    B_j^2s^{\sigma_a}\sigma_a 1_{\{j<\sigma_a\}}
\right] 
&\le C\sum_{j\ge0}
\left(
    \mathbb E_{a,y,z}\left[
        s^{p\sigma_a}\sigma_a^p1_{\{j<\sigma_a\}}
    \right]
\right)^{1/p}.                                                \label{eq:F2}
\end{align}
Because $s^p<r_1$, the polynomial factor can be absorbed into a slightly larger exponential: there is $C<\infty$ such that
$\sigma_a^ps^{p\sigma_a}\le Cr_1^{\sigma_a}$. On $\{j<\sigma_a\}$,
\[
    r_1^{\sigma_a}
    =r_2^{\sigma_a}\left(\frac{r_1}{r_2}\right)^{\sigma_a}
    \le r_2^{\sigma_a}\left(\frac{r_1}{r_2}\right)^j.
\]
Since $r_2<r_0$, \eqref{eq:reset-exp-return} implies
\[
    \mathbb E_{a,y,z}\left[
        s^{p\sigma_a}\sigma_a^p1_{\{j<\sigma_a\}}
    \right]
    \le CV(y)\left(\frac{r_1}{r_2}\right)^j.
\]
Substitution into \eqref{eq:F2}, together with $V\ge1$, shows that the series over $j$ is geometric. Combining this estimate with \eqref{eq:F1} proves
\begin{equation}
\label{eq:cycle-weighted-moment}
    \mathbb E_{a,y,z}\left[
        \sum_{k=0}^{\sigma_a}s^k(1+\|Z_k\|^2)
    \right]
    \le CV(y)(1+\|z\|^q).
\end{equation}

The atom occupation formula now defines an invariant probability measure, and \eqref{eq:cycle-weighted-moment} gives it a finite second $Z$-moment. A renewal decomposition at visits to the atom yields \eqref{eq:reset-L2}. Finally, \eqref{eq:cycle-weighted-moment}, accessibility of the atom, and the one-step aperiodicity bound in \eqref{eq:reset-exp-return} are the standard Kendall-set conditions for geometric convergence in the weighted norm generated by $1+\|z\|$. Applying the corresponding regenerative $f$-ergodic theorem gives \eqref{eq:reset-fconv}. Uniformity follows because all cycle bounds and the aperiodicity constant are uniform in $a$.\Halmos
\endproof

\proof{Proof of Theorem~\ref{thm:GGc_IPA_valid}.}
We divide the proof into four steps.

\paragraph{Step 1: regenerative stability of the augmented process.}
By Lemma~\ref{lem:ggc-pathwise-derivative},
\[
    J_{n+1}(a)=R_n(a)J_n(a)+\xi_n(a),
    \qquad
    \xi_n(a):=-R_n(a){\bf 1}\nabla_a\tau(a,U_n)^\top.
\]
Hence, $\|R_n(a)\|\le1$ and $\|\xi_n(a)\|_F\le\sqrt c\,B_T(U_n)$.
The moment condition in Assumption~\ref{ass:GGc-ipa} therefore verifies the additive-noise condition in Lemma~\ref{lem:reset-affine-L2}.

It remains to verify the exponential return condition. Lemma~\ref{lem:ggc-drift} gives a geometric drift for the workload process. Choose $\lambda\in(\eta,1)$ and enlarge the bounded set $C_0$ so that
\[
    P_aV(w)\le\lambda V(w),
    \qquad w\notin C_0,
\]
uniformly in $a\in K^+$. From every $w\in C_0$, the reset event in Lemma~\ref{lem:ggc-reset-coupling} has probability at least $p_\star>0$ and sends the workload to zero. On that event, $R_n(a)=0$, so it simultaneously sends the sensitivity to zero, regardless of its current value. Thus $(0,0)$ is an accessible atom of the augmented chain. The geometric drift plus the uniform minorization to this atom yields constants $r_0>1$ and $C<\infty$ such that
$\mathbb E_{a,w,j}[r_0^{\sigma_a}]\le CV(w)$,
uniformly in $a\in K^+$. Starting from $(0,0)$, the same reset event has probability at least $p_\star$, giving a uniform one-step return probability.

Lemma~\ref{lem:reset-affine-L2} therefore applies. Let $\Pi_a$ be the invariant distribution of $(W_n(a),J_n(a))$. Its workload marginal is invariant for $P_a$ and hence equals $\pi_a$ by uniqueness. Moreover,
\begin{align}
    \sup_{a\in K^+}\sup_{n\ge0}
    \mathbb E_{w,j}\|J_n(a)\|_F^2
    &\le CV(w)(1+\|j\|_F^q),                                    \label{eq:queue-J-L2}\\
    \left\|
        \mathbb E_{w,j}[\nabla h(W_n(a))J_n(a)]
        -\mathbb E_{\Pi_a}[\nabla h(W)J]
    \right\|_F
    &\le CL_hV(w)(1+\|j\|_F^q)\beta^n.                          \label{eq:queue-summand-bias}
\end{align}

\paragraph{Step 2: finite-horizon differentiation.}
Define
$G_n(a):=\mathbb E_0[h(W_n(a))]$.
For every fixed $n$, Lemma~\ref{lem:ggc-pathwise-derivative} and the chain rule give the pathwise derivative
$\nabla h(W_n(a))J_n(a)$. Starting from $J_0=0$, recursion \eqref{eq:ggc-sensitivity-recursion} gives
\begin{equation}
\label{eq:finite-J-envelope}
    \sup_{a\in K^+}\|J_n(a)\|_F
    \le\sqrt c\sum_{m=0}^{n-1}B_T(U_m).
\end{equation}
The same envelope, together with the Lipschitz property of $h$, dominates local difference quotients. Since $q>2$, the right-hand side is integrable. Dominated differentiation therefore gives
\begin{equation}
\label{eq:finite-horizon-gradient}
    \nabla G_n(a)=\mathbb E_0[\nabla h(W_n(a))J_n(a)].
\end{equation}
The density and non-atomicity assumptions also imply that, for almost every primitive realization, the active set and sorting order are locally constant near a fixed $a$. Together with \eqref{eq:finite-J-envelope}, dominated convergence shows that $\nabla G_n$ is continuous. Thus $G_n$ is continuously differentiable on compact subsets of $K^+$.

\paragraph{Step 3: identification of the stationary derivative.}
Fix $a_0\in K$ and choose a compact convex neighborhood $K_0$ of $a_0$ contained in $K^+$. The same drift and reset argument used above gives uniform $V$-geometric convergence of the frozen workload chain. Since $\|h\|\le C_hV$,
\begin{equation}
\label{eq:Gn-uniform-G}
    \sup_{a\in K_0}\|G_n(a)-G(a)\|
    \le C\rho^n.
\end{equation}
On the other hand, \eqref{eq:queue-summand-bias} with $(w,j)=(0,0)$ gives
\begin{equation}
\label{eq:DGn-uniform-limit}
    \sup_{a\in K_0}
    \left\|
        \nabla G_n(a)-L(a)
    \right\|_F
    \le CL_h\beta^n,
    \qquad
    L(a):=\mathbb E_{\Pi_a}[\nabla h(W)J].
\end{equation}
Thus $G_n\to G$ and $\nabla G_n\to L$ uniformly on $K_0$. The standard uniform-limit-of-derivatives theorem, applied coordinatewise on the convex set $K_0$, implies that $G$ is continuously differentiable and $\nabla G=L$. Since $a_0$ was arbitrary,
\[
    \nabla G(a)=\mathbb E_{\Pi_a}[\nabla h(W)J],
    \qquad a\in K.
\]
This proves the steady-state IPA identity rather than assuming it.

\paragraph{Step 4: mean-squared error.}
Let
$Y_n(a):=\nabla h(W_n(a))J_n(a)$.
For independent replications and initialization $(W_0,J_0)=(0,0)$, \eqref{eq:queue-J-L2} gives
\[
    \mathbb E\left\|
        \widehat{\nabla G}_{N,n}(a)-\mathbb EY_n(a)
    \right\|_F^2
    =\frac1N\mathbb E\left\|Y_n(a)-\mathbb EY_n(a)\right\|_F^2
    \le\frac1N\mathbb E\|Y_n(a)\|_F^2
    \le\frac{CL_h^2}{N}.
\]
Equation \eqref{eq:queue-summand-bias} and the derivative identity give
\[
    \|\mathbb EY_n(a)-\nabla G(a)\|_F^2
    \le CL_h^2\beta^{2n}.
\]
The centered Monte Carlo term has zero mean, so the cross term vanishes. Adding variance and squared bias proves
\[
    \mathbb E\left[
        \|\widehat{\nabla G}_{N,n}(a)-\nabla G(a)\|_F^2
    \right]
    \le\frac{CL_h^2}{N}+CL_h^2\beta^{2n}.
\]
Choosing $n$ so that $\beta^{2n}\lesssim N^{-1}$ completes the proof.\Halmos
\endproof

\section{Opinion Dynamics Appendix}
Define
\[
\mathcal P_{\lambda}(S)
:=
\left\{
\mu\in\mathcal P(S):
\mu(\mathbb R\times\{\tau\})=\lambda_\tau
\text{ for all }\tau\in\Theta
\right\},
\]
and
\[
\mathcal P_{\lambda,1}(S)
:=
\left\{
\mu\in\mathcal P_{\lambda}(S):
\int_S |y|\,\mu(dy,d\theta)<\infty
\right\}.
\]
Thus, the type marginal remains fixed, while the belief coordinate evolves.

Let \(Z:=\mathbb R^m\) denote the space of type-level mean-belief vectors,
endowed with the coordinatewise order. 
We endow
\(S\) with the partial order
$(y,\tau)\preceq (y',\tau')$ iff 
$\tau=\tau'$ and $\ y\le y'$.
Thus, states are compared only within type. Let \(\mathcal G\) denote the set
of bounded continuous functions \(g:S\to\mathbb R\) that are increasing with
respect to \(\preceq\); equivalently, for every \(\tau\in\Theta\), the map
\(y\mapsto g(y,\tau)\) is nondecreasing. For
\(\mu,\nu\in\mathcal P_{\lambda}(S)\), write
$\mu\preceq_{\mathrm{SD}_y}\nu$ if and only if
$\int_S g\,d\mu
    \le
    \int_S g\,d\nu$ 
for all $g\in\mathcal G$.
Because \(\Theta\) is finite and discrete, this order is equivalent to
first-order stochastic dominance of the conditional belief distributions
within every type: \(\mu\preceq_{\mathrm{SD}_y}\nu\) if and only if, for
each \(\tau\in\Theta\), the type-\(\tau\) belief distribution under \(\nu\)
first-order stochastically dominates that under \(\mu\).

Before we prove Theorem~\ref{thm:opinion-dynamics}, we first prove an auxiliary lemma.

\begin{lemma}[Continuity of the frozen stationary distribution]
\label{lem:mu_continuous}
Let \(S\) be a Polish space, let \(Z\) be a metric space, and let
\(\mathcal C\subseteq\mathcal P(S)\) be weakly closed. For each \(a\in Z\),
let \(P_a\) be a Markov kernel on \(S\). Assume that the family
\(\{P_a:a\in Z\}\) is weakly continuous in the following sense: whenever
\(x_n\to x\) in \(S\) and \(a_n\to a\) in \(Z\),
$P_{a_n}(x_n,\cdot)\Rightarrow P_a(x,\cdot)$.
Suppose that, for every \(a\in Z\), \(P_a\) admits a unique invariant
probability measure within \(\mathcal C\), denoted by \(\pi_a\). Finally,
assume that whenever \(a_n\to a\), the sequence
\(\{\pi_{a_n}\}_{n\ge1}\) is tight. Then
\[
    a_n\to a
    \quad\Longrightarrow\quad
    \pi_{a_n}\Rightarrow\pi_a .
\]
\end{lemma}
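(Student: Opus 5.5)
The plan is the standard tightness-plus-uniqueness argument. Fix $a_n\to a$. By assumption $\{\pi_{a_n}\}$ is tight, so by Prokhorov's theorem every subsequence has a further subsequence, still denoted $\pi_{a_{n_k}}$, converging weakly to some $\mu\in\mathcal P(S)$. Because $\mathcal C$ is weakly closed and each $\pi_{a_{n_k}}\in\mathcal C$, the limit satisfies $\mu\in\mathcal C$. It then suffices to show that $\mu$ is invariant for $P_a$. Uniqueness of the invariant measure of $P_a$ within $\mathcal C$ forces $\mu=\pi_a$. Since every subsequence has a further subsequence converging to the same limit $\pi_a$, the whole sequence converges: $\pi_{a_n}\Rightarrow\pi_a$.

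The key step is to pass to the limit in the identity $\pi_{a_{n_k}}=\pi_{a_{n_k}}P_{a_{n_k}}$. The left side converges weakly to $\mu$. For the right side, fix a bounded continuous $g:S\to\mathbb R$ and set $g_k(x):=(P_{a_{n_k}}g)(x)$ and $g_\infty(x):=(P_ag)(x)$. The joint weak continuity hypothesis gives continuous convergence: whenever $x_k\to x$, we have $g_k(x_k)\to g_\infty(x)$, and $\|g_k\|_\infty\le\|g\|_\infty$. In particular $g_\infty$ is continuous, which follows by taking the constant sequence $a_{n}=a$. We must show $\int g_k\,d\pi_{a_{n_k}}\to\int g_\infty\,d\mu$.

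Tying these together is the main technical point, because $g_k$ need not converge uniformly on $S$. I would handle it in one of two ways. The first is the extended continuous mapping theorem. By Skorokhod representation on the Polish space $S$, choose random variables $X_k\sim\pi_{a_{n_k}}$ and $X\sim\mu$ with $X_k\to X$ almost surely. Then $g_k(X_k)\to g_\infty(X)$ almost surely by continuous convergence, and bounded convergence gives the claim. The second, if one prefers to avoid Skorokhod, is to use tightness directly. Pick a compact $K_\varepsilon$ with $\pi_{a_{n_k}}(K_\varepsilon^c)<\varepsilon$ for all $k$ and $\mu(K_\varepsilon^c)<\varepsilon$. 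Continuous convergence on the compact $K_\varepsilon$ implies uniform convergence $g_k\to g_\infty$ there, which follows by a standard contradiction argument with a convergent subsequence of points. This yields
\[
\Bigl|\int g_k\,d\pi_{a_{n_k}}-\int g_\infty\,d\pi_{a_{n_k}}\Bigr|\le \sup_{K_\varepsilon}|g_k-g_\infty|+2\|g\|_\infty\varepsilon ,
\]
together with $\int g_\infty\,d\pi_{a_{n_k}}\to\int g_\infty\,d\mu$, since $g_\infty$ is bounded and continuous.

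Either route gives $\int g\,d\mu=\int P_ag\,d\mu=\int g\,d(\mu P_a)$ for every bounded continuous $g$. Hence $\mu=\mu P_a$, so $\mu$ is invariant for $P_a$ and lies in $\mathcal C$, and the uniqueness argument above completes the proof. I expect no further obstacles. The only care needed is that $\mathcal C$ being weakly closed is what keeps the limit inside the class where uniqueness holds; this matters in the opinion-dynamics application, where $\mathcal C=\mathcal P_\lambda(S)$ fixes the type marginal.
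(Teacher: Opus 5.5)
Your proposal is correct and follows the paper's proof essentially step for step: Prokhorov gives subsequential limits, weak closedness of $\mathcal C$ keeps them in $\mathcal C$, you pass to the limit in $\pi_{a_{n_k}}=\pi_{a_{n_k}}P_{a_{n_k}}$ using the continuous convergence of $P_{a_{n_k}}g$ to $P_ag$, and then uniqueness together with the subsequence principle finishes the argument. The only difference is that you prove the convergence of $\int P_{a_{n_k}}g\,d\pi_{a_{n_k}}$ to $\int P_ag\,d\mu$ directly, either by Skorokhod representation or by uniform convergence on compacts, whereas the paper cites Serfozo's convergence theorem for varying measures (Theorem 3.5) for that step.
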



\proof{Proof.}
Fix a sequence \(a_n\to a\) in \(Z\). 
By assumption, the sequence \(\{\pi_{a_n}\}_{n\ge1}\) is tight. Hence, by
Prokhorov's theorem, every subsequence has a further weakly convergent
subsequence. Let \(\{\pi_{a_{n_k}}\}_{k\ge1}\) be any weakly convergent
subsequence, and write
$\pi_{a_{n_k}}\Rightarrow \vartheta$
for its weak limit. 

Because \(\mathcal C\) is weakly closed, \(\vartheta\in\mathcal C\).
The argument below shows that \(\vartheta\) is invariant for \(P_a\).
By uniqueness of the invariant probability measure within \(\mathcal C\),
it follows that \(\vartheta=\pi_a\).

Let \(g\in C_b(\mathcal S)\). For each \(k\), define
\[
    h_k(x):=(P_{a_{n_k}}g)(x)
    =
    \int_{\mathcal S}g(y)P_{a_{n_k}}(x,dy)
\quad \mbox{ and }\quad
    h(x):=(P_ag)(x)
    =
    \int_{\mathcal S}g(y)P_a(x,dy).
\]
The weak continuity of the kernels implies that
$h_k(x_k)\to h(x)$ whenever $x_k\to x$.
Moreover, since $|h_k(x)|\le \|g\|_\infty$ for all $x\in S$, $k\geq 1$, the functions \(h_k\) are uniformly bounded.
Therefore, by the convergence theorem for varying measures
\citep[Theorem 3.5]{serfozo1982convergence}, the weak convergence
\(\pi_{a_{n_k}}\Rightarrow \vartheta\) implies
\begin{equation}
\label{eq:varying-measures-proof}
    \int_{\mathcal S} h_k(x)\,\pi_{a_{n_k}}(dx)
    \longrightarrow
    \int_{\mathcal S} h(x)\,\vartheta(dx).
\end{equation}

Using the invariance of \(\pi_{a_{n_k}}\) for \(P_{a_{n_k}}\), we have
\[
    \int_{\mathcal S} g(x)\,\pi_{a_{n_k}}(dx)
    =
    \int_{\mathcal S} (P_{a_{n_k}}g)(x)\,\pi_{a_{n_k}}(dx)
    =
    \int_{\mathcal S} h_k(x)\,\pi_{a_{n_k}}(dx).
\]
Taking limits and using \eqref{eq:varying-measures-proof}, we obtain
\[
    \lim_{k\to\infty}
    \int_{\mathcal S} g(x)\,\pi_{a_{n_k}}(dx)
    =
    \int_{\mathcal S} (P_ag)(x)\,\vartheta(dx).
\]
On the other hand, since \(\pi_{a_{n_k}}\Rightarrow \vartheta\) and
\(g\in C_b(\mathcal S)\),
\[
    \lim_{k\to\infty}
    \int_{\mathcal S} g(x)\,\pi_{a_{n_k}}(dx)
    =
    \int_{\mathcal S} g(x)\,\vartheta(dx).
\]
Combining the last two displays gives
\[
    \int_{\mathcal S} g(x)\,\vartheta(dx)
    =
    \int_{\mathcal S} (P_ag)(x)\,\vartheta(dx)
    =
    \int_{\mathcal S}\int_{\mathcal S}g(y)P_a(x,dy)\,\vartheta(dx).
\]
Since this identity holds for every \(g\in C_b(\mathcal S)\), we have
$\vartheta=\vartheta P_a$.
Thus, \(\vartheta\) is an invariant probability measure for \(P_a\). By the assumed
uniqueness of the invariant probability measure of \(P_a\), it follows that
$\vartheta=\pi_a$.

We have shown that every weakly convergent subsequence of
\(\{\pi_{a_n}\}_{n\ge1}\) has limit \(\pi_a\). Since the full sequence is tight,
this implies
$\pi_{a_n}\Rightarrow \pi_a$.
Therefore the map \(a\mapsto \pi_a\) is weakly continuous.
\Halmos
\endproof

\proof{Proof of Theorem~\ref{thm:opinion-dynamics}}
We verify that the self-consistency map $G$
is continuous and coordinatewise nondecreasing on \(K\), and then apply Proposition~\ref{prop:monotone}.

Fix \(u,v\in K\) with \(u\le v\) coordinatewise. Since \(W\) has nonnegative
entries,
\[
    S_\tau(u)=(Wu)_\tau \le (Wv)_\tau=S_\tau(v),
    \quad \tau\in\Theta .
\]
Let \(g:S\to\mathbb R\) be bounded, continuous, and increasing with respect to
the within-type order \(\preceq\). That is, for each \(\tau\), the map
\(y\mapsto g(y,\tau)\) is nondecreasing. If
\((y,\tau)\preceq (y',\tau')\), then \(\tau=\tau'\) and \(y\le y'\). By the
monotonicity of \(w_\tau\) in its first two arguments,
\[
    w_\tau(y,S_\tau(u),\epsilon)
    \le
    w_\tau(y',S_\tau(v),\epsilon),
    \quad \epsilon\in\mathbb R .
\]
Therefore,
\begin{align*}
    (P_u g)(y,\tau)
    &=
    \int
    g\!\left(
        w_\tau(y,S_\tau(u),\epsilon),\tau
    \right)
    F_\tau(d\epsilon) \\
    &\le
    \int
    g\!\left(
        w_\tau(y',S_\tau(v),\epsilon),\tau
    \right)
    F_\tau(d\epsilon) 
    =
    (P_v g)(y',\tau).
\end{align*}
In particular, for each fixed \(a\in K\), the function \(P_ag\) is bounded
and increasing with respect to \(\preceq\). Moreover, the weak continuity of
\(P_a\) implies that \(P_ag\) is continuous. Hence \(P_ag\in\mathcal G\).
If \(u\le v\), then \(P_ug\le P_vg\) pointwise on \(S\).

We next show that this one-step monotonicity propagates to the stationary
distributions. Let \(\nu_0\in\mathcal P_{\lambda,1}(S)\) be any initial
distribution with type marginal \(\lambda\). We prove by induction that for every \(n\ge0\),
$\nu_0 P_u^n \preceq_{\mathrm{SD}_y} \nu_0 P_v^n$.
For \(n=0\), the two measures are equal. Suppose the claim holds at time \(n\).
For any bounded continuous increasing \(g\),
\[
    \nu_0 P_u^{n+1}(g)
    =
    \nu_0 P_u^n(P_u g) 
    \le
    \nu_0 P_v^n(P_u g) 
    \le
    \nu_0 P_v^n(P_v g)
    =
    \nu_0 P_v^{n+1}(g).
\]
The first inequality uses the induction hypothesis and the fact that \(P_u g\)
is increasing; the second inequality uses \(P_u g\le P_v g\). 

By assumption, for each fixed \(a\in K\), the frozen chain with kernel \(P_a\)
converges weakly to \(\pi_a\). Hence
$\nu_0 P_u^n \Rightarrow \pi_u$ and
$\nu_0 P_v^n \Rightarrow \pi_v$.
Passing to the limit in the preceding inequality for bounded continuous
increasing \(g\) gives
$\int_S g\,d\pi_u \le \int_S g\,d\pi_v$.
Thus, \(\pi_u\preceq_{\mathrm{SD}_y}\pi_v\). 

We now show that \(H\) preserves this order. Fix a type \(\tau\in\Theta\). For
\(M>0\), define the truncation
$T_M(y):=(-M)\vee (y\wedge M)$,
and let
$g_{M,\tau}(y,\theta)
    :=
    T_M(y)\mathbf 1\{\theta=\tau\}$.
The function \(g_{M,\tau}\) is bounded, continuous, and increasing with respect
to the within-type order. Therefore,
$\int_S g_{M,\tau}\,d\pi_u
    \le
    \int_S g_{M,\tau}\,d\pi_v$.
Letting \(M\to\infty\) and using the finite first moments of \(\pi_u\) and
\(\pi_v\), we obtain
\[
    \int_S y\mathbf 1\{\theta=\tau\}\,\pi_u(dy,d\theta)
    \le
    \int_S y\mathbf 1\{\theta=\tau\}\,\pi_v(dy,d\theta).
\]
Dividing by \(\lambda_\tau>0\) gives
$H(\pi_u)_\tau \le H(\pi_v)_\tau$.
Since \(\tau\) was arbitrary,
\[
    G(u)=H(\pi_u)\le H(\pi_v)=G(v),
\]
so \(G\) is coordinatewise nondecreasing on \(K\).

It remains to prove the continuity of \(G\). Let \(a_n\to a\) in \(K\). The uniform-integrability condition implies
that \(\{\pi_{a_n}\}_{n\ge1}\) is tight. Indeed,
\[
    \sup_n
    \pi_{a_n}\bigl(\{|y|>R\}\times\Theta\bigr)
    \le
    \frac1R
    \sup_{u\in K}
    \int_S
        |y|\mathbf 1\{|y|>R\}\,
        \pi_u(dy,d\theta)
    \rightarrow 0.
\]
Since \(\Theta\) is finite, \([-R,R]\times\Theta\) is compact.
Applying Lemma~\ref{lem:mu_continuous} with
\(\mathcal C=\mathcal P_\lambda(S)\) therefore gives
$\pi_{a_n}\Rightarrow\pi_a$.
We next show that this weak convergence implies convergence of the type-level means.
Fix \(\tau\in\Theta\), and define
\[
    \phi_\tau(y,\theta)
    :=
    \frac{1}{\lambda_\tau}y\mathbf 1\{\theta=\tau\}
    \quad \mbox{ and } \quad
    \phi_{\tau,M}(y,\theta)
    :=
    \frac{1}{\lambda_\tau}T_M(y)\mathbf 1\{\theta=\tau\}
\]
for \(M>0\).
Weak convergence gives
$\int_S \phi_{\tau,M}\,d\pi_{a_n}
    \longrightarrow
    \int_S \phi_{\tau,M}\,d\pi_a$
for every fixed  M.
Moreover,
\[
    |\phi_\tau(y,\theta)-\phi_{\tau,M}(y,\theta)|
    \le
    \frac{1}{\lambda_\tau}|y|\mathbf 1\{|y|>M\}.
\]
Therefore,
\begin{align*}
    \limsup_{n\to\infty}
    \left|
        G(a_n)_\tau-G(a)_\tau
    \right|
    &=
    \limsup_{n\to\infty}
    \left|
        \int_S \phi_\tau\,d\pi_{a_n}
        -
        \int_S \phi_\tau\,d\pi_a
    \right| \\
    &\le
    \frac{2}{\lambda_\tau}
    \sup_{\tilde a\in K}
    \int_S |y|\mathbf 1\{|y|>M\}\,\pi_{\tilde a}(dy,d\theta).
\end{align*}
Taking \(M\to\infty\) and using the uniform integrability assumption yields
$G(a_n)_\tau\to G(a)_\tau$.
Since \(\Theta\) is finite, this holds for every coordinate, and hence
\(G(a_n)\to G(a)\). Thus \(G\) is continuous on \(K\).

We have shown that \(G\) is continuous and coordinatewise nondecreasing on
\(K\). The boundary conditions imply that \(G\) maps \(K\) into itself. Indeed,
for any \(a\in K\),
\[
    \underline a
    \le
    G(\underline a)
    \le
    G(a)
    \le
    G(\overline a)
    \le
    \overline a,
\]
where the middle inequalities follow from the monotonicity of \(G\). Therefore
\(G(K)\subseteq K\).

All assumptions of Proposition~\ref{prop:monotone} are now satisfied. \Halmos
\endproof

\end{document}